\renewcommand{\leq}{\leqslant}
\renewcommand{\geq}{\geqslant}
\newcommand{\N}{\mathbb{N}}
\newcommand{\C}{\mathbb{C}}
\newcommand{\E}{\mathbb{E}}
\renewcommand{\P}{\mathbb{P}}
\newcommand{\U}{\mathcal{U}}
\newcommand{\D}{\mathcal{D}}
\renewcommand{\S}{\mathcal{S}}
\newcommand{\ol}{\overline}
\newcommand{\isom}{\simeq}
\DeclareMathOperator{\trace}{Tr}
\DeclareMathOperator{\id}{id}
\DeclareMathOperator{\Wg}{Wg}
\DeclareMathOperator{\Mob}{Mob}
\DeclareMathOperator{\Cat}{Cat}
\DeclareMathOperator*{\Ex}{\mathbb{E}}
\renewcommand{\phi}{\varphi}
\renewcommand{\epsilon}{\varepsilon}
\newcommand{\iy}{\infty}
\newtheorem{theorem}{Theorem}[section]
\newtheorem{definition}[theorem]{Definition}
\newtheorem{proposition}[theorem]{Proposition}
\newtheorem{lemma}[theorem]{Lemma}
\newtheorem{corollary}[theorem]{Corollary}
\def\be{\begin{eqnarray}}
\def\ee{\end{eqnarray}}
\def\bee{\begin{equation}}
\def\eee{\end{equation}} 
\begin{document}

\title{Low entropy output states for products of random unitary channels}
\author{Beno\^{\i}t Collins}
\address{
D\'epartement de Math\'ematique et Statistique, Universit\'e d'Ottawa,
585 King Edward, Ottawa, ON, K1N6N5 Canada
and 
CNRS, Institut Camille Jordan Universit\'e  Lyon 1, 43 Bd du 11 Novembre 1918, 69622 Villeurbanne, 
France} 
\email{bcollins@uottawa.ca}
\author{Motohisa Fukuda}
\address{Zentrum Mathematik, M5,
Technische Universit\"{a}t M\"{u}nchen,
Boltzmannstrasse 3,
85748 Garching, Germany}
\email{m.fukuda@tum.de}
\author{Ion Nechita}
\address{CNRS, Laboratoire de Physique Th\'eorique, IRSAMC, Universit\'e de Toulouse, UPS, 31062 Toulouse, France and Zentrum Mathematik, M5,
Technische Universit\"{a}t M\"{u}nchen, Boltzmannstrasse 3,
85748 Garching, Germany}
\email{nechita@irsamc.ups-tlse.fr}
\subjclass[2000]{Primary 15A52; Secondary 94A17, 94A40} 
\keywords{Random matrices, Weingarten calculus, Quantum information theory, Random unitary quantum channel, Marchenko-Pastur distribution}

\begin{abstract}
In this paper, we study the behaviour of the output of pure entangled states after being transformed by a product of conjugate random unitary channels. This study is motivated by the counterexamples by Hastings \cite{hastings} and Hayden-Winter \cite{hayden-winter} to the additivity problems. 
In particular, we study in depth the difference of behaviour between random unitary channels and generic random channels. In the case where the number of unitary operators is fixed, we compute the limiting eigenvalues of the output states. In the case where the number of unitary operators grows linearly with the dimension of the input space, we show that the eigenvalue distribution converges to a limiting shape that we characterize with free probability tools.
In order to perform the required computations, we need a systematic way of dealing with moment problems for random matrices whose blocks are i.i.d.\ Haar distributed unitary operators. This is achieved by extending the graphical Weingarten calculus introduced in \cite{cn1}.
\end{abstract}

\maketitle

\section{Introduction}

In Quantum Information Theory, random unitary channels are completely positive, 
trace preserving and unit preserving maps that can be written as
$$\Phi(\rho) = \sum_{i=1}^kw_i U_i \rho U_i^* ,$$
where $U_i$ are unitary operators acting on $\mathbb C^n$ and $w_i$ are positive weights that sum up to one.

This class of quantum channels is very important, not only because the action of such a channel can be understood as randomly applying one of the unitary transformations $U_i$ with respective probabilities $w_i$, but also because it has highly non-classical properties. For example, random unitary channels have been used to disprove the additivity of minimum output entropy \cite{hastings}. 

In this paper, we are interested in a natural setting in which we take a convex combination of $k$ \emph{random} unitary evolutions; in other words, we choose the unitary operators $U_i$ at random and independently from the unitary group, following the Haar distribution. The behaviour of this kind of quantum channels has been extensively studied in the literature \cite{aubrun, hlsw, hastings}.

We are principally interested in the study of the moments of typical outputs for products of conjugated random unitary channels. One of the main results of our
paper is that the typical outputs are deterministic when one takes a product of such a channel and its complex conjugate and applies it to entangled input states, in the spirit of \cite{cn1,cn3,cfn1}. 
More precisely, our main results can be stated informally as follows (for precise statements, see Theorems \ref{thm:bell-phenomenon} and \ref{thm:k-sim-n}):
\begin{theorem}
Consider the output state $Z_n = [\Phi \otimes \bar \Phi](\psi_n \psi_n^*)$, given as the image of a ``well behaved'' pure state $\psi_n$ 
under
the product of conjugate random unitary channels.

If $k$, the number of unitary operators $U_i$, is fixed and $n \to \infty$, then the set of the $k^2$ non-zero eigenvalues of $Z_n$ is 
$$\{w_iw_j \, : \, i,j = 1, \ldots, k, i \neq j\} \cup \{s_i \, : \, i=1, \ldots, k\},$$
where the numbers $s_i$ depend on $w$ and some parameter $m$ quantifying the overlap between the input state $\psi_n$ and the Bell state $\phi_n$.

If both dimensions $k$ and $n$ grow to infinity, but at a constant ratio $k/n \to c$, then the largest eigenvalue of $Z_n$ behaves as $\mathrm{Const.}/k$ and the rest of the spectrum, when properly rescaled, converges towards a compound free Poisson distribution that can be characterized in terms of the behaviour of the ratio $c$ and the weights $w_i$. 
\end{theorem}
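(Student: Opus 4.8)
\section*{Proof strategy}

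The plan is to reduce both statements to the computation of the asymptotic moments $\E\,\trace(Z_n^p)$ for every fixed $p\in\N$, and then to recover the spectrum (resp.\ its limiting shape) from the moment sequence. Writing $\Phi(\rho)=\sum_{i=1}^k w_i U_i\rho U_i^*$, one has
\[
Z_n=\sum_{i,j=1}^k w_i w_j\,(U_i\otimes \ol U_j)\,\psi_n\psi_n^*\,(U_i\otimes \ol U_j)^*,
\]
so $\trace(Z_n^p)$ expands into a sum over multi-indices $(i_1,j_1,\dots,i_p,j_p)$ of $\prod_\ell w_{i_\ell}w_{j_\ell}$ times a monomial in the entries of the $U$'s, the $\ol U$'s and of the fixed ``well behaved'' state $\psi_n$. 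The central technical device is the promised extension of the graphical Weingarten calculus of \cite{cn1} to block matrices whose blocks are \emph{independent} Haar unitaries: since the expectation factorizes over independent blocks, the Weingarten sum is performed separately for each color $i$, producing a pair of permutations (equivalently, in the large-$n$ limit, non-crossing pairings) per color, subject to the constraint that two indices may be identified only if they carry the same color. I would first develop this calculus abstractly, state the resulting diagrammatic expansion together with its power-of-$n$ counting rule (an Euler-characteristic estimate on the glued surface), and isolate the leading order.

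In the fixed-$k$ regime the color sum stays bounded, so the dominant diagrams are simply those maximizing the exponent of $n$. I expect these to split into two families according to whether the ``upper'' color $i$ and the ``lower'' color $j$ agree along a cycle: the $i\neq j$ family produces rank-one output blocks with eigenvalue $w_iw_j$, yielding the $k(k-1)$ eigenvalues $\{w_iw_j: i\neq j\}$, while the $i=j$ family --- in which the overlap parameter $m$ between $\psi_n$ and the Bell state $\phi_n$ enters through traces of $\psi_n$ against maximally entangled vectors --- assembles into an effective $k\times k$ matrix whose eigenvalues are the $s_i=s_i(w,m)$. Checking that $\lim_n\E\,\trace(Z_n^p)=\sum_{i\neq j}(w_iw_j)^p+\sum_i s_i^p$ for all $p$, plus a variance bound (second moment method, again via Weingarten) giving concentration of $\trace(Z_n^p)$ about its mean, shows that $Z_n$ converges in moments, hence in spectrum, to the claimed deterministic matrix --- this is Theorem~\ref{thm:bell-phenomenon}.

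In the proportional regime $k/n\to c$ I would instead quantify how the leading-order count changes when $k$ grows: each additional free color contributes a factor of order $k$ while the accompanying loss of an index identification costs a factor of order $n^{-1}$, so diagrams are now reweighted by powers of $c$ and by symmetric functions of the $w_i$, which I would package as moments of the limiting measure of $\sum_i\delta_{w_i}$ (suitably normalized). After rescaling $Z_n$ by a factor of order $k$, the surviving diagrams are exactly the non-crossing ones and the resulting free cumulants identify the limiting spectral distribution as a compound free Poisson law --- equivalently, the free multiplicative convolution of the weight distribution with a free Poisson element of parameter fixed by $c$ --- while the single atom untouched by the rescaling accounts for the largest eigenvalue, of order $\mathrm{Const.}/k$. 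The clean characterization is where free probability enters: I would express the limit through its $R$- or $S$-transform and verify the moment-by-moment match. This is Theorem~\ref{thm:k-sim-n}.

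The main obstacle, present in both regimes but acute when $k\sim n$, is controlling the colored Weingarten expansion: one must show that the naive leading diagram (all colors distinct, maximal genus) is not suppressed by the asymptotics of the Weingarten function itself, and conversely that no lower-genus but color-rich diagram re-enters at the same order once $k$ is comparable to $n$. Pinning down this exact $n$-versus-$k$ trade-off --- in effect a genus expansion uniform in the number of blocks --- is the crux, and it is precisely what the extended graphical calculus is meant to make tractable; the free-probabilistic identification of the limit, though it requires care, then amounts to recognizing a known moment--cumulant pattern.
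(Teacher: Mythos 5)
Your overall plan coincides with the paper's: expand $\trace(Z_n^p)$ via the moment method, develop a colored Weingarten calculus that factorizes the expectation over the $k$ independent Haar blocks, identify the dominant permutation pairs by an $n$-power count (a geodesic condition in the Cayley graph of $\S_{2p}$), and finish with a second-moment/Borel--Cantelli argument in the fixed-$k$ regime. The emergence of the eigenvalues $s_i$ from ``color-aligned'' diagrams is correct in spirit, though the paper extracts them through a non-obvious combinatorial identity (Lemma~\ref{technical-identity}) that reinterprets the geodesic sum as $\trace(H^p)$ for the auxiliary matrix $H=xW\otimes W+yP_{\tilde W}$ and diagonalizes $H$ explicitly; the paper also simplifies the fixed-$k$ analysis by working with the complementary channels $\Phi^C\otimes\overline\Phi^C$, which turns $Z_n$ into a genuinely $k^2\times k^2$ matrix rather than leaving you to track a rank-$k^2$ operator inside $M_{n^2}$. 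These are execution choices rather than differences of method.

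There is, however, a genuine gap in your treatment of the regime $k/n\to c$: you cannot obtain the compound free Poisson bulk by a single rescaling followed by reading off free cumulants from non-crossing diagrams. The top eigenvalue lives on the scale $1/k$ and the bulk on the scale $1/k^2$, so that after rescaling by $(cn)^2$ the top eigenvalue diverges and corrupts every moment; conversely after rescaling by $cn$ the moments for $p\geq 3$ collapse to the $p$-th power of the top eigenvalue alone, as the paper's Theorem~\ref{thm:moments-ZnC} shows ($\Ex\trace[(cnZ_n)^p]\to t_2^p|m|^{2p}$ for $p\geq 3$, with the bulk invisible). The paper's resolution is to deflate first: replace $Z_n$ by $Q_nZ_nQ_n$ where $Q_n$ projects off the Bell vector (the Hayden--Winter direction that carries the large eigenvalue), then prove the moment convergence of $(cn)^2Q_nZ_nQ_n$ to $\pi_{c^2,\nu^{\times 2}}$, and finally recover the spectrum of $Z_n$ itself via Cauchy interlacing. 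Without this deflation step the ``non-crossing diagrams give free cumulants'' argument is not available, because the diagrams responsible for the top eigenvalue (those with $\alpha=\beta=\delta$) dominate uniformly. You should also note that the large-eigenvalue statement itself is obtained separately, by a deterministic Hayden--Winter lower bound combined with a third-moment upper bound and Markov's inequality, not as a by-product of the bulk analysis.
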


We strengthen previous results on the eigenvalues of the output states $Z_n$, obtained by linear algebra techniques, by computing exactly the asymptotic spectrum of the output states. Such precise results, along with results for single channels \cite{cfn2}, may improve existing violations of the additivity of the minimum output entropy or similar quantities. For the purpose of studying these random channels, we extend the graphical calculus introduced in \cite{cn1} to the case of multiple independent, identically distributed ( i.i.d.\ ) random Haar unitary matrices, see Theorem \ref{thm:V-graphical-calculus}.

Our paper is organised as follows. 
In Section \ref{sec:random-unitary-channels} we recall facts about quantum channels and give the definition of random unitary channels. In Section \ref{sec:graphical-calculus},
 we extend Weingarten calculus to include the case of several independent Haar unitary matrices. Section \ref{sec:measures-isometry} provides a comparison of Haar random isometries and block random isometries. Even though it is intuitively clear that the two are quite different from each other, we give explicit instances of the differences via a few explicit moment computations. These moment computations serve also as a warm-up for Sections \ref{sec:fixed-k} and \ref{sec:k-sim-n}, where the behaviour of output of entangled pure states is considered under different scalings; these sections contain the main results of the paper. In Section \ref{sec:fixed-k} we treat the case of a fixed number of unitary operators, while in Section \ref{sec:k-sim-n} we consider the scaling $k/n \to c$. We analyze thoroughly the entropies of the limiting objects, finding the parameter values which yield the output states with the least entropy. Finally, we make a few concluding remarks in Section \ref{sec:conclusions}.

\section{Random unitary channels}\label{sec:random-unitary-channels}

\subsection{Our model}
Random unitary channels are natural models among quantum channels and have
played a key role in the research around violations of different quantities related to classical capacities of quantum channels.

Random unitary channels are unit-preserving quantum channels that can be written as 
\be\label{eq:def-Phi}
\Phi(\rho) = \sum_{i=1}^k w_i U_i \rho U_i^* 
\ee
where $U_i \in \mathcal{U} (n)$ are independent random unitary operators with \emph{uniform} (or \emph{Haar}) distribution and $w_i$ are fixed positive weights that sum up to one. Hastings used a similar model to prove additivity violation of minimal output entropy \cite{hastings}.
In our model, a unitary operator $U_i$ is applied to 
an
input with some fixed probability $w_i$, 
while in Hastings' model those probabilities were chosen randomly. 

\subsection{Stinespring picture and complementary channel} 
The above random channel $\Phi:M_n(\mathbb C) \to M_n(\mathbb C)$ is described in the Stinespring picture as
\be
\Phi (\rho) = \trace_{k} \left[\tilde V \rho \tilde V^* \right]
\ee 
where
\be\label{isometry}
\tilde V = [\mathrm{diag}(\sqrt{w_1},\ldots,\sqrt{w_k}) \otimes \mathrm{I}_n] \, V  = \begin{pmatrix}
\sqrt{w_1} \, U_1 \\ \vdots \\ \sqrt{w_k}\,  U_k 
\end{pmatrix} 
: \C^n \rightarrow \C^k \otimes \C^n  
\ee  
is an isometry, 
$$V = \sum_{i=1}^k e_i \otimes U_i$$
is a stack of unitary operators and $\trace_{k}$ is the partial trace operator $\trace \otimes\mathrm{id}$ acting on $M_k(\mathbb C) \otimes M_n(\mathbb C)$.
In matrix representation, 
\be
\tilde V \rho \tilde V^*  = 
\begin{pmatrix}
w_1 U_1 \rho U_1^* & \ldots & \sqrt{w_1 w_k} U_1 \rho U_k^* \\
\vdots & \ddots & \vdots \\
\sqrt{w_k w_1} U_k \rho U_1^* & \ldots & w_k U_k \rho U_k^* 
\end{pmatrix} 
\ee
As one sees, 
tracing out the first space (the environment space $\C^k$) gives the sum of diagonal blocks,
which is the quantum channel $\Phi$. 
On the other hand, 
if we switch the roles of output and environment spaces,
i.e., if we trace out the second space (the output space $\C^n$),
we get the complementary channel $\Phi^C: M_n(\mathbb C) \to M_k(\mathbb C)$
\cite{Holevo05,KMNR}:
\be
\left( \Phi^C (\rho) \right)_{i,j}= \sqrt{w_iw_j}\trace \left[U_i \rho U_j^* \right]
\ee 
A quantum channel and its complementary channel share 
the same output eigenvalues, up to zeroes, for any pure input via Schmidt decomposition, 
see  \cite{Holevo05,KMNR} for details. 

Recall that the Shannon entropy functional, defined for probability vectors $p=(p_i)$
$$H(p) = -\sum_i p_i \log p_i,$$
extends to unit trace, positive matrices (or quantum states) $\rho$ via functional calculus
$$H(\rho) = -\trace(\rho \log \rho).$$

For a quantum channel $\Phi$, one can define its \emph{minimal output entropy}
$$H_\text{min}(\Phi) = \min_{\rho \text{ state}} H(\Phi(\rho)).$$

By using convexity properties, one can show that the minimum above is reached on pure states (i.e. rank one projectors) so that the value of the minimum output entropy does not change when by replacing $\Phi$ by $\Phi^C$. Our interest in complementary channels is motivated by the fact that, often, the size of the environment is smaller than the output size, so output states are easier to study for $\Phi^C$. 

In the seminal paper of Hastings \cite{hastings}, violation of additivity was found when the dimensions of input and output spaces are much larger than that of the environment space of the channel. In this paper, we shall study two asymptotic regimes, $k$ fixed and $n \to \infty$ and then $k,n \to \infty$, with a linear growth $k/n \to c$.

\section{Weingarten calculus for several independent unitary matrices}\label{sec:graphical-calculus}

The method of graphical calculus was introduced in \cite{cn1} and later used in \cite{cn3, cn-entropy, cnz, cfn1} to study random matrix models in which Haar distributed unitary operators played a major role. In particular, the limiting eigenvalue distribution of
$[\Phi \otimes \bar \Phi] (\psi_n\psi_n^*)$,
where $\Phi$ is random quantum chanel and $ \psi_n $ is a generalised Bell state,
was calculated in \cite{cfn1}. The aim of this paper is to perform similar computations for random unitary channels instead of random channels.

In the previous work in which random quantum channels were analysed using the graphical calculus method \cite{cn1,cn3,cn-entropy,cfn1}, the isometry defining the channel in the Stinespring picture was a truncation of a Haar-distributed random unitary matrix. Hence, only one random unitary operator was needed to perform the computations. In this paper, we need to develop a new technique of graphical calculus for $k$ i.i.d. Haar-distributed unitary matrices. 

We start with the usual, 1-matrix, graphical Weingarten calculus and then
generalise it to cover the case of block-isometries built up from independent Haar-distributed unitary matrices. 

\subsection{Weingarten formula for a single matrix} 
 Let us start by recalling the definition of a combinatorial object of interest, the unitary Weingarten function.
\begin{definition}
The unitary Weingarten function 
$\Wg(n,\sigma)$
is a function of a dimension parameter $n$ and of a permutation $\sigma$
in the symmetric group $\S_p$. 
It is the inverse of the function $\sigma \mapsto n^{\#  \sigma}$ under the convolution for the symmetric group ($\# \sigma$ denotes the number of cycles of the permutation $\sigma$).
\end{definition}

Notice that the  function $\sigma \mapsto n^{\# \sigma}$ is invertible when $n$ is large, as it
behaves like $n^p\delta_e$
as $n\to\infty$.
Actually, if $n<p$ the function is not invertible any more, but we can
keep this definition by taking the pseudo inverse 
and the theorems below will still hold true
(we refer to \cite{collins-sniady} for historical references and further details). We shall use the shorthand notation $\Wg(\sigma) = \Wg(n, \sigma)$ when the dimension parameter $n$ is clear from context.

The function $\Wg$  is used to compute integrals with respect to 
the Haar measure on the unitary group (we shall denote by $\U(n)$ the unitary group acting on an $n$-dimensional Hilbert space). The first theorem is as follows:

\begin{theorem}\label{thm:Wg}
 Let $n$ be a positive integer and
$i=(i_1,\ldots ,i_p)$, $i'=(i'_1,\ldots ,i'_p)$,
$j=(j_1,\ldots ,j_p)$, $j'=(j'_1,\ldots ,j'_p)$
be $p$-tuples of positive integers from $[n]= \{1, 2, \ldots, n\}$. Then
\begin{multline}\label{eq:Wg}
\int_{\U(n)}U_{i_1j_1} \cdots U_{i_pj_p}
\overline{U_{i'_1j'_1}} \cdots
\overline{U_{i'_pj'_p}}\ dU=\\
\sum_{\sigma, \tau\in \S_{p}}\delta_{i_1i'_{\sigma (1)}}\ldots
\delta_{i_p i'_{\sigma (p)}}\delta_{j_1j'_{\tau (1)}}\ldots
\delta_{j_p j'_{\tau (p)}} \Wg (n,\tau\sigma^{-1}).
\end{multline}

If $p\neq p'$ then
\begin{equation} \label{eq:Wg_diff} \int_{\U(n)}U_{i_{1}j_{1}} \cdots
U_{i_{p}j_{p}} \overline{U_{i'_{1}j'_{1}}} \cdots
\overline{U_{i'_{p'}j'_{p'}}}\ dU= 0.
\end{equation}
\end{theorem}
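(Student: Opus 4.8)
The two displayed identities call for two different, and both classical, arguments. Identity (\ref{eq:Wg_diff}) is the quick one: the Haar measure on $\U(n)$ is invariant under $U \mapsto zU$ for any complex number $z$ of modulus $1$, and under this substitution the integrand of (\ref{eq:Wg_diff}) is multiplied by $z^{p}\bar z^{p'} = z^{p-p'}$; averaging this equality over $z$ on the unit circle kills the integral as soon as $p \neq p'$. The same remark explains why only ``balanced'' monomials contribute in (\ref{eq:Wg}).

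For (\ref{eq:Wg}) the plan is representation-theoretic. First I would repackage the left-hand side as a single matrix coefficient: writing $e_I = e_{i_1}\otimes\cdots\otimes e_{i_p}$ and similarly $e_J, e_{I'}, e_{J'}$, the integral equals the $(e_I\otimes e_{I'},\, e_J\otimes e_{J'})$ matrix element of
$$ P \;=\; \int_{\U(n)} U^{\otimes p}\otimes\bar U^{\otimes p}\,dU $$
on $(\C^n)^{\otimes p}\otimes(\C^n)^{\otimes p}$ or, under the canonical identification of this space with the operators on $(\C^n)^{\otimes p}$, of the averaging map $X \mapsto \int_{\U(n)} U^{\otimes p}\, X\, (U^{\otimes p})^*\,dU$. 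A one-line check ($P^*=P$ and $P^2=P$) shows that $P$ is the orthogonal projection, for the Hilbert--Schmidt inner product, onto the commutant of $\{U^{\otimes p} : U\in\U(n)\}$.

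The crucial input is then Schur--Weyl duality, which identifies that commutant with the linear span of the operators $A(\sigma)$, $\sigma\in\S_p$, that permute the $p$ tensor legs according to $\sigma$. So $P$ is the orthogonal projection onto $\mathrm{span}\{A(\sigma)\}$, and to write it down explicitly I would compute the Gram matrix of this spanning family, $\langle A(\sigma), A(\tau)\rangle_{HS} = \trace\big(A(\sigma^{-1}\tau)\big) = n^{\#(\sigma^{-1}\tau)}$, and then use the elementary formula for the projection onto a (possibly redundant) spanning set: this gives
$$ P(X) \;=\; \sum_{\sigma,\tau\in\S_p}\Wg(n,\tau\sigma^{-1})\,\langle A(\sigma),X\rangle_{HS}\,A(\tau), $$
because by definition $[\Wg(n,\tau\sigma^{-1})]_{\sigma,\tau}$ is exactly the (Moore--Penrose) inverse of the Gram matrix $[n^{\#(\sigma^{-1}\tau)}]_{\sigma,\tau}$. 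Finally, expanding the matrix element of $P$ using $\langle e_I, A(\sigma)\,e_J\rangle = \prod_k \delta_{i_k,\,j_{\sigma^{-1}(k)}}$ turns the two factors in the sum into the two products of Kronecker deltas displayed in (\ref{eq:Wg}), after relabelling $\sigma\leftrightarrow\tau$ and using that $\Wg$ is a symmetric class function.

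The main obstacle is the appeal to Schur--Weyl duality: the easy half is that the permutation operators lie in the commutant, while the genuine content is that they \emph{span} it, and this is what makes $\{A(\sigma)\}$ the right spanning set. The second, more technical point is exactly the regime $n<p$, where the $A(\sigma)$ become linearly dependent and $\sigma\mapsto n^{\#\sigma}$ fails to be invertible; there one keeps the formula by passing to the pseudo-inverse in the definition of $\Wg$ and checking that the vectors $\langle A(\sigma),X\rangle_{HS}$ still lie in the range of the Gram matrix, so that the projection formula --- and hence (\ref{eq:Wg}) --- is unchanged. Complete details are in \cite{collins-sniady}.
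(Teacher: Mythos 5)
The paper does not prove Theorem~\ref{thm:Wg}; it states it as a classical fact and refers to \cite{collins-sniady} for the derivation. Your argument is precisely the standard proof from that reference: the center-invariance trick $U\mapsto zU$ for the unbalanced case, and, for the balanced case, the reinterpretation of the left-hand side as a matrix coefficient of the conditional expectation $X\mapsto\int U^{\otimes p}X(U^{\otimes p})^{*}\,dU$ onto the commutant of $U^{\otimes p}$, followed by Schur--Weyl duality to identify that commutant with $\mathrm{span}\{A(\sigma):\sigma\in\S_p\}$, the computation of the Gram matrix $\langle A(\sigma),A(\tau)\rangle_{HS}=n^{\#(\sigma^{-1}\tau)}$, and the projection formula in terms of its (pseudo-)inverse, which is by definition the Weingarten matrix. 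The steps are all correct, including your remark about the $n<p$ regime, so the proof is sound and coincides in substance with what the paper implicitly relies on.
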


Since we shall perform integration over \emph{large} unitary groups, we are interested in the values of the Weingarten function in the limit $n \to \iy$. The following result encloses all the data we need for our computations
about the asymptotics of the $\Wg$ function; see \cite{collins-imrn} for a proof.

\begin{theorem}\label{thm:mob} For a permutation $\sigma \in \S_p$, we write $c \in \sigma$ to denote the fact that $c$ is a cycle of $\sigma$. Then
\begin{equation}\label{Wg formula 1}
\Wg (n,\sigma)=\prod_{c\in \sigma}\Wg (n,c)(1+O(n^{-2}))
\end{equation}
and 
\begin{equation}
\Wg (n,(1,\ldots ,d) ) = (-1)^{d-1}c_{d-1}\prod_{-d+1\leq j \leq d-1}(n-j)^{-1}
\end{equation}
where $c_i=\frac{(2i)!}{(i+1)! \, i!}$ is the $i$-th Catalan number.
\end{theorem}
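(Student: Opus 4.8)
The plan is to turn the defining convolution identity into an exact spectral formula for $\Wg$, and then read off the asymptotics combinatorially. \emph{Step 1 (an exact formula).} I would view $A_n=\sum_{\sigma\in\S_p}n^{\#\sigma}\,\sigma$ inside the group algebra $\C[\S_p]$; it is central because $\sigma\mapsto n^{\#\sigma}$ is a class function. The key point is that $n^{\#\sigma}$ is exactly the character of $\S_p$ acting on $(\C^n)^{\otimes p}$ by permuting tensor legs, so Schur--Weyl duality gives $n^{\#\sigma}=\sum_{\lambda\vdash p}s_\lambda(1^n)\,\chi^\lambda(\sigma)$, the sum running effectively over $\lambda$ with at most $n$ rows, where $s_\lambda(1^n)=\tfrac{f^\lambda}{p!}\prod_{(i,j)\in\lambda}(n+j-i)$ is the dimension of the associated $\mathrm{GL}_n$-irreducible and $f^\lambda=\dim S^\lambda$. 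Diagonalizing $A_n$ on its isotypic components and inverting (taking the pseudo-inverse, i.e.\ discarding the $\lambda$ with more than $n$ rows, when $n<p$) would yield
\[
\Wg(n,\sigma)=\frac{1}{p!}\sum_{\lambda\vdash p}\frac{f^\lambda\,\chi^\lambda(\sigma)}{\prod_{(i,j)\in\lambda}(n+j-i)}.
\]

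\emph{Step 2 (the cycle values).} Take $p=d$ and $\sigma=(1,\ldots,d)$. The character $\chi^\lambda$ vanishes on a $d$-cycle unless $\lambda=(d-r,1^r)$ is a hook, in which case $\chi^\lambda(\sigma)=(-1)^r$, $f^\lambda=\binom{d-1}{r}$, and the contents of $\lambda$ are $\{0,\ldots,d-r-1\}\cup\{-1,\ldots,-r\}$; so the formula above collapses to $\Wg(n,(1,\ldots,d))=\tfrac{1}{d!}\sum_{r=0}^{d-1}(-1)^r\binom{d-1}{r}\tfrac{(n-r-1)!}{(n+d-r-1)!}$. It remains to identify this with $(-1)^{d-1}c_{d-1}\prod_{-d+1\leq j\leq d-1}(n-j)^{-1}$: after multiplying by $\prod_{-d+1\leq j\leq d-1}(n-j)$ one is left with a polynomial in $n$ that must equal the constant $(-1)^{d-1}c_{d-1}$, which is a terminating binomial identity one can dispatch by matching residues at the $2d-1$ poles or by recognizing it as a Pfaff--Saalschütz summation. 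This proves the second displayed formula.

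\emph{Step 3 (the $1/n$-expansion and multiplicativity).} For general $\sigma\in\S_p$, set $|\sigma|=p-\#\sigma$ and expand $\prod_{(i,j)\in\lambda}(n+j-i)^{-1}=n^{-p}\sum_{k\geq0}(-1)^k h_k(\text{contents of }\lambda)\,n^{-k}$ in the formula of Step 1; interchanging sums gives $\Wg(n,\sigma)=\sum_{k\geq0}(-1)^k n^{-p-k}a_k(\sigma)$ with $a_k(\sigma)=\tfrac{1}{p!}\sum_{\lambda}f^\lambda\chi^\lambda(\sigma)h_k(\text{contents of }\lambda)$. By the Jucys--Murphy theorem the element $h_k(J_1,\ldots,J_p)$ built from the Jucys--Murphy elements $J_l=\sum_{t<l}(t,l)$ is central and acts on $S^\lambda$ by the scalar $h_k(\text{contents of }\lambda)$; hence by character orthogonality $a_k(\sigma)$ is the coefficient of $\sigma$ in $h_k(J_1,\ldots,J_p)$, i.e.\ the number of factorizations $\sigma=(a_1,b_1)\cdots(a_k,b_k)$ with $a_t<b_t$ and $b_1\leq\cdots\leq b_k$. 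Counting transpositions (equivalently, comparing signs), such factorizations force $k\geq|\sigma|$ and $k\equiv|\sigma|\pmod 2$, so the expansion proceeds in powers of $n^{-2}$ and $\Wg(n,\sigma)=(-1)^{|\sigma|}n^{-p-|\sigma|}\big(a_{|\sigma|}(\sigma)+O(n^{-2})\big)$. Finally, in a minimal ($|\sigma|$-step) factorization every transposition must \emph{merge} two cycles, so its support lies inside a single cycle of $\sigma$; as distinct cycles of $\sigma$ have disjoint supports, the global weakly increasing order on the $b_t$ is the unique interleaving of the orders within the individual cycles, whence $a_{|\sigma|}$ is multiplicative over the cycles of $\sigma$. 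Combined with Step 2 (which also gives $\Wg(n,c)=(-1)^{|c|}c_{|c|}\prod_{-|c|\leq j\leq|c|}(n-j)^{-1}=(-1)^{|c|}c_{|c|}n^{-2|c|-1}(1+O(n^{-2}))$ for a single cycle $c$, with $|c|$ its length minus one), both $\Wg(n,\sigma)$ and $\prod_{c\in\sigma}\Wg(n,c)$ then equal $(-1)^{|\sigma|}\big(\prod_{c\in\sigma}c_{|c|}\big)n^{-p-|\sigma|}(1+O(n^{-2}))$, so their ratio is $1+O(n^{-2})$; this is the first displayed formula.

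The main obstacle will be Step 3: identifying $a_k(\sigma)$ as a factorization count, extracting the parity constraint that produces the $n^{-2}$-gap (rather than an $n^{-1}$-gap), and — crucially — showing the \emph{extremal} coefficient $a_{|\sigma|}$ is multiplicative over cycles, so that the two sides agree as rational functions up to $O(n^{-2})$ and not merely at leading order. Everything else — Schur--Weyl, the content-product formula for $s_\lambda(1^n)$, the hook specialization of $\chi^\lambda$, and the terminating identity in Step 2 — is standard or routine.
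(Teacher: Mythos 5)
The paper does not reproduce a proof of this theorem --- it simply cites Collins (IMRN 2003) --- so the relevant comparison is to that reference. Your Steps 1 and 2 match it precisely: the exact spectral formula $\Wg(n,\sigma)=\frac{1}{p!}\sum_{\lambda\vdash p}f^\lambda\chi^\lambda(\sigma)\big/\prod_{(i,j)\in\lambda}(n+j-i)$ obtained by diagonalizing the central element $\sum_\sigma n^{\#\sigma}\sigma$ via Schur--Weyl duality, and the hook specialization of $\chi^\lambda$ at a $d$-cycle; the terminating identity you defer is indeed a standard Pfaff--Saalsch\"utz evaluation (equivalently, checking residue cancellation at the $2d-1$ poles), so leaving it flagged is acceptable. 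Where you genuinely diverge is Step 3: rather than extracting the $1/n$-asymptotics by direct manipulation of the character formula as Collins did, you identify the expansion coefficient $a_k(\sigma)$ with the coefficient of $\sigma$ in $h_k(J_1,\dots,J_p)$ built from the Jucys--Murphy elements, hence with a count of monotone factorizations of $\sigma$ into $k$ transpositions. This is the now-standard repackaging due to Novak and Matsumoto--Novak, and it buys two things cleanly at once: the $n^{-2}$-gap becomes a pure parity statement ($k\equiv|\sigma|\bmod 2$), and the multiplicativity of the extremal coefficient $a_{|\sigma|}$ over cycles follows transparently from the fact that in a minimal factorization every transposition is supported inside a single cycle of $\sigma$, plus the uniqueness of the interleaving of the monotone sequences (valid since distinct cycles have disjoint supports, so the $b$-values never tie across cycles). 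Your argument is correct; the only implicit step is the consistency check $a_{|c|}(c)=\Cat_{|c|}$, which you correctly obtain by matching the leading term of Step 3 against the exact formula of Step 2.
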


As a shorthand for the
quantities in Theorem \ref{thm:mob}, we introduce the function $\Mob$ on the symmetric
group. $\Mob$ is invariant under conjugation and multiplicative over the cycles; further, it satisfies
for any permutation $\sigma \in \S_p$:
\begin{equation}\label{weingartenapprox}
\Wg(n,\sigma) = n^{-(p + |\sigma|)} (\Mob(\sigma) + O(n^{-2}))
\end{equation}
where $|\sigma |=p-\# \sigma $ is the \emph{length} of $\sigma$, i.e. the minimal number of transpositions that multiply to $\sigma$; we shall also use the notation $|\cdot |$ for the cardinality of sets. We refer to \cite{collins-sniady} for details about the function $\Mob$. We shall make use of the following explicit formulas. 

\begin{lemma}\label{lem:Mob}
The M\"{o}bius function is multiplicative with respect to the cycle structure of permutations
$$\Mob(\sigma) = \prod_{b \in \sigma} (-1)^{|b|-1} \Cat_{|b|-1},$$
where $\Cat_n$ is the $n$-th Catalan number. In particular, if $\sigma$ is a product of disjoint transpositions, then
\be\label{mob formula}
\Mob (\sigma) =(-1)^{|\sigma|}.
\ee
\end{lemma}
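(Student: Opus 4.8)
The plan is to reduce Lemma~\ref{lem:Mob} to a single computation, namely the value of $\Mob$ on one $d$-cycle. Since $\Mob$ has been introduced as the conjugation-invariant, cycle-multiplicative function for which the asymptotic expansion \eqref{weingartenapprox} holds, it is completely determined by the numbers $\Mob((1,\dots,d))$ for $d \geq 1$, and the product formula in the statement is nothing more than cycle-multiplicativity applied to these numbers. So the only substantive step is to extract $\Mob((1,\dots,d))$ from the closed form for $\Wg(n,(1,\dots,d))$ given in Theorem~\ref{thm:mob}.

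To carry this out I would take $c = (1,\dots,d)$ as an element of $\S_d$, so that here $p = d$ and $|c| = d-1$, hence $p + |c| = 2d-1$. From Theorem~\ref{thm:mob},
\[
\Wg(n,c) = (-1)^{d-1} c_{d-1}\prod_{-d+1 \leq j \leq d-1}(n-j)^{-1},
\]
and I would rewrite the product over the index set $\{-d+1,\dots,d-1\}$, which is symmetric about $0$, as $\prod_{-d+1\leq j\leq d-1}(n-j) = n\prod_{j=1}^{d-1}(n^2-j^2) = n^{2d-1}(1+O(n^{-2}))$. Comparing $\Wg(n,c) = (-1)^{d-1}c_{d-1}\,n^{-(2d-1)}(1+O(n^{-2}))$ with \eqref{weingartenapprox} then reads off $\Mob((1,\dots,d)) = (-1)^{d-1}c_{d-1} = (-1)^{d-1}\Cat_{d-1}$. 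Multiplying over the cycles $b$ of a general permutation $\sigma$ --- with $|b|$ denoting the number of points moved by $b$, and $1$-cycles contributing $(-1)^0\Cat_0 = 1$ --- yields the first displayed formula; specializing to a product of $t$ disjoint transpositions, each of which contributes $(-1)^{2-1}\Cat_1 = -1$ while $|\sigma| = t$, gives $\Mob(\sigma) = (-1)^{|\sigma|}$, i.e.\ \eqref{mob formula}.

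I do not expect a genuine obstacle; the lemma is essentially bookkeeping. The two points to be careful about are: (i) the overloaded notation $|\cdot|$ --- for a cycle $b$ it must be read as the number of points $b$ moves (so a $d$-cycle contributes $(-1)^{d-1}\Cat_{d-1}$, matching Theorem~\ref{thm:mob}), whereas for a permutation $\sigma$ one has the length $|\sigma| = p - \#\sigma$; on a product of transpositions the two readings conveniently produce the stated answer; and (ii) the cancellation $\sum_{-d+1\leq j\leq d-1} j = 0$, which is exactly what makes the correction factor in $\prod (n-j)^{-1}$ of order $O(n^{-2})$ rather than $O(n^{-1})$, and hence consistent with the form of \eqref{weingartenapprox}. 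If one prefers not to lean on ``$\Mob$ is multiplicative over the cycles'' as part of the definition, one can instead derive multiplicativity directly from the factorization \eqref{Wg formula 1} combined with the single-cycle asymptotics computed above.
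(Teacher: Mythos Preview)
The paper does not actually prove Lemma~\ref{lem:Mob}; it is stated as an explicit formula, with the reader referred to \cite{collins-sniady} for details about the function $\Mob$. So there is no ``paper's own proof'' to compare against.

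Your argument is correct and is the natural way to obtain the lemma from the material already present in the paper: you read off $\Mob$ on a single $d$-cycle by matching the closed form in Theorem~\ref{thm:mob} against the asymptotic expansion \eqref{weingartenapprox}, and then invoke cycle-multiplicativity. The computation $\prod_{-d+1\le j\le d-1}(n-j)=n\prod_{j=1}^{d-1}(n^2-j^2)=n^{2d-1}(1+O(n^{-2}))$ is right, and comparing with $n^{-(p+|c|)}=n^{-(2d-1)}$ gives $\Mob((1,\dots,d))=(-1)^{d-1}\Cat_{d-1}$ as claimed. Your remark about the overloaded notation $|\cdot|$ is well taken and matches how the paper uses it later (e.g.\ ``$|c|$ is the number of elements in $c$'' in the proof of Theorem~\ref{thm:bell-phenomenon}); with that reading, the specialization to a product of $t$ disjoint transpositions yields $(-1)^t=(-1)^{|\sigma|}$ exactly as you say.
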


We finish this section by a well known lemma which we will use several times in the proofs of the main theorems of this paper. This result is contained in \cite{nica-speicher}.
\begin{lemma}\label{lem:S_p}
The function
$d(\sigma,\tau) = |\sigma^{-1} \tau|$ is an integer valued distance on $\S_p$. Besides, it has the following properties:
\begin{itemize}
\item the diameter of $\S_p$ is $p-1$;
\item $d(\cdot, \cdot)$ is left and right translation invariant;
\item for three permutations $\sigma_1,\sigma_2, \tau \in \S_p$, the quantity $d(\tau,\sigma_1)+d(\tau,\sigma_2)$
has the same parity as $d(\sigma_1,\sigma_2)$;
\item the set of geodesic points between the identity permutation $\id$ and some permutation $\sigma \in \S_p$ is in bijection with the set of non-crossing partitions smaller than $\pi$, where the partition $\pi$ encodes the cycle structure of $\sigma$. Moreover, the preceding bijection preserves the lattice structure. 
\end{itemize}
\end{lemma}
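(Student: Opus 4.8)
The plan is to derive all claims from elementary properties of the \emph{length} $|\cdot|$, using the already-recorded facts that $|\rho|=p-\#\rho$ is the minimal number of transpositions whose product is $\rho$, and that conjugation and inversion preserve cycle type and hence length. For the metric axioms, note $d(\sigma,\tau)\in\N$ with $d(\sigma,\tau)=0$ forcing $\sigma^{-1}\tau=\id$; symmetry is $|\sigma^{-1}\tau|=|(\sigma^{-1}\tau)^{-1}|=|\tau^{-1}\sigma|$; and the triangle inequality follows from the subadditivity $|xy|\le|x|+|y|$ (concatenate minimal transposition words) applied to $\sigma^{-1}\rho=(\sigma^{-1}\tau)(\tau^{-1}\rho)$. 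Left invariance is immediate since $(\rho\sigma)^{-1}(\rho\tau)=\sigma^{-1}\tau$, and right invariance follows from $(\sigma\rho)^{-1}(\tau\rho)=\rho^{-1}(\sigma^{-1}\tau)\rho$ together with conjugation-invariance of the length. Finally $1\le\#\rho\le p$ gives $0\le|\rho|\le p-1$, with the value $p-1$ attained exactly on the $p$-cycles, so $\mathrm{diam}(\S_p)=p-1$.

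For the parity property, I would introduce the sign homomorphism $\epsilon\colon\S_p\to\{\pm1\}$; since every transposition is odd, $\epsilon(\rho)=(-1)^{|\rho|}$ for all $\rho$. Then, using $\epsilon(\tau^{-1}\sigma_i)=\epsilon(\tau)\epsilon(\sigma_i)$ and $\epsilon(\tau)^2=1$,
\[
(-1)^{d(\tau,\sigma_1)+d(\tau,\sigma_2)}=\epsilon(\tau^{-1}\sigma_1)\,\epsilon(\tau^{-1}\sigma_2)=\epsilon(\sigma_1)\epsilon(\sigma_2)=\epsilon(\sigma_1^{-1}\sigma_2)=(-1)^{d(\sigma_1,\sigma_2)},
\]
which is precisely the asserted coincidence of parities.

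It remains to describe the geodesic points between $\id$ and a fixed $\sigma$ (enough by left invariance): these are the $\alpha$ with $|\alpha|+|\alpha^{-1}\sigma|=|\sigma|$. I would first invoke the classical identification (Biane; see \cite{nica-speicher}) that for a single $p$-cycle $\gamma$ the geodesic points between $\id$ and $\gamma$ are exactly the permutations whose cycles form a non-crossing partition of $[p]$, the map $\alpha\mapsto(\text{cycles of }\alpha)$ being a lattice isomorphism onto $\mathrm{NC}(p)$, where geodesic points are partially ordered by ``$\alpha\le\beta$ iff $\alpha$ lies on a geodesic from $\id$ to $\beta$''. For a general $\sigma$ with cycle partition $\pi$, one checks that a geodesic point $\alpha$ must fix each block of $\pi$ setwise (a transposition joining two distinct $\sigma$-cycles merges them and cannot occur in a minimal word), so the geodesic interval factors as the product over $b\in\pi$ of the geodesic intervals of the cycles of $\sigma$; the single-cycle case applied blockwise then identifies this product, as a lattice, with $\prod_{b\in\pi}\mathrm{NC}(|b|)$, i.e.\ with the sublattice of non-crossing partitions $\le\pi$.

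The main obstacle is precisely this last step: while the factorization over cycles and the metric bookkeeping are routine, the equivalence ``geodesic point of a long cycle'' $\Longleftrightarrow$ ``non-crossing partition'' is the one genuinely combinatorial input, relying on the structure of $\mathrm{NC}(p)$ and on the Kreweras-type identity $|\alpha|+|\alpha^{-1}\gamma|=|\gamma|$ for a full cycle $\gamma$. In the write-up I would either cite \cite{nica-speicher} for this directly or reproduce the short induction on $p$ whose base case is the explicit bijection between geodesics to $(1,2,\dots,p)$ and $\mathrm{NC}(p)$.
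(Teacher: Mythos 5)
Your proof is correct, and the paper itself offers no argument for this lemma: it simply asserts that ``this result is contained in \cite{nica-speicher}''. Your treatment of the metric axioms, translation invariance, diameter, and the parity claim via the sign homomorphism is clean and complete, and for the one genuinely non-trivial item --- the identification of the geodesic interval $[\id,\sigma]$ with the non-crossing partitions below the cycle partition $\pi$ --- you correctly isolate it as the substantive input and, like the paper, ultimately defer to Biane's result as presented in \cite{nica-speicher}. One small caution: your parenthetical justification that a geodesic point must preserve the blocks of $\pi$ (``a transposition joining two distinct $\sigma$-cycles merges them and cannot occur in a minimal word'') is true but as stated is too quick to be a proof; the standard argument compares $\#\alpha+\#(\alpha^{-1}\sigma)$ with $p+\#(\alpha\vee\sigma)$ (a genus/Euler-characteristic inequality), so that equality in $|\alpha|+|\alpha^{-1}\sigma|=|\sigma|$ forces $\alpha\vee\sigma=\sigma$ as partitions and hence the blockwise factorization you want. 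With that gap flagged as relying on the cited reference, your write-up is a sound and more explicit version of what the paper leaves implicit.
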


\subsection{Weingarten formulas for several independent unitary matrices} 

In this paper we want to treat random matrix models in which several i.i.d.\ Haar random unitary matrices appear. We shall use the independence property and the Weingarten formula together and unify these properties into an unique statement, as in Theorems \ref{thm:colored-Weingarten} and \ref{thm:colored-Weingarten-2}. Before we can do this, we introduce the following notation (we put $[n]=\{1,2,\ldots,n\}$):

\begin{definition}
Consider two integer functions $l:[p] \to [k]$ and $l':[p'] \to [k]$. Whenever $p=p'$, we denote by $\mathcal S_p^{l \to l'}$ the set of permutations of $p$ objects which map the level sets of $l$ to the level sets of $l'$:
$$\mathcal S_p^{l \to l'} = \{\sigma \in \mathcal S_p \, | \, \sigma(l^{-1}(s)) = \sigma(l'^{-1}(s))\}.$$
Note that this set is empty iff 
$$| \{t:l_t = s\} | \not = |\{t:l'_t = s\}|$$
for some $s \in [k]$. If $p \neq p'$, we put $\mathcal S_p^{l \to l'} = \emptyset$.
\end{definition}

Note that permutations $\sigma \in \mathcal S_p^{l \to l'}$ admit a decomposition along the level sets of $l,l'$
\begin{equation}\label{eq:decomp-perm-level-sets}
\sigma = \sigma_1 \times \cdots \times \sigma_k,
\end{equation}
where 
$$\sigma_s: l^{-1}(s) \to l'^{-1}(s).$$

The main result of this section is the following generalization of the Weingarten formula for moments in several independent Haar unitary matrices.

\begin{theorem}[generalized Weingarten formula]\label{thm:colored-Weingarten} 
Let $n$ and $k$ be positive integers and $i=(i_1,\ldots ,i_p)$, $i'=(i'_1,\ldots ,i'_{p'})$, $j=(j_1,\ldots ,j_p)$, $j'=(j'_1,\ldots ,j'_{p'})$ be tuples of positive integers from $[n]$ and $l=(l_1,\ldots ,l_p)$, $l'=(l'_1,\ldots ,l'_{p'})$ be tuples of positive integers from $[k]$. Then
\be\label{eq:colored-Weingarten} 
&&\int_{\mathcal{U}(n)^k} 
U_{i_1,j_1}^{(l_1)} \cdots U_{i_p,j_p}^{(l_p)} \overline{ U_{i'_1,j'_1}^{(l'_1)}} \cdots \overline{ U_{i'_{p'},j'_{p'}}^{(l'_{p'})}} 
dU^{(1)} \cdots dU^{(k)} \\ 
&=& \sum_{\alpha,\beta \in \mathcal S_p^{l \to l'}} 
\delta_{i_1 i'_{\alpha(1)}}\cdots\delta_{i_{p} i'_{\alpha(p)}} 
\delta_{j_1 j'_{\beta(1)}} \cdots \delta_{j_{p} j'_{\beta(p)}}
\Wg^{l \to l'}(n,\alpha, \beta),
\ee
where the modified Weingarten function $\Wg^{l \to l'}$ is defined via the product formula
$$ \Wg^{l \to l'}(n,\alpha, \beta) = \prod_{s=1}^k \Wg(n,\alpha_s^{-1}\beta_s).$$
\end{theorem}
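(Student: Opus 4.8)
The plan is to reduce the $k$-matrix statement to the single-matrix Weingarten formula of Theorem~\ref{thm:Wg} by exploiting independence. First I would observe that the integrand factors according to which unitary matrix each entry belongs to: since $U^{(1)}, \ldots, U^{(k)}$ are independent, the integral over $\mathcal{U}(n)^k$ splits as a product of $k$ integrals, the $s$-th of which involves only the factors $U^{(s)}_{i_t,j_t}$ with $l_t = s$ and the conjugate factors $\overline{U^{(s)}_{i'_t,j'_t}}$ with $l'_t = s$. Concretely, writing $A_s = l^{-1}(s) \subseteq [p]$ and $A'_s = l'^{-1}(s) \subseteq [p']$, we get
\[
\prod_{s=1}^k \int_{\mathcal{U}(n)} \Big( \prod_{t \in A_s} U^{(s)}_{i_t,j_t} \Big) \Big( \prod_{t \in A'_s} \overline{U^{(s)}_{i'_t,j'_t}} \Big) \, dU^{(s)}.
\]

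Next I would apply Theorem~\ref{thm:Wg} to each of these $k$ single-matrix integrals separately. For a fixed colour $s$, if $|A_s| \neq |A'_s|$ the integral vanishes by \eqref{eq:Wg_diff}; this is precisely the emptiness condition in the definition of $\mathcal{S}_p^{l \to l'}$, so in that case both sides of \eqref{eq:colored-Weingarten} are zero and there is nothing more to prove. When $|A_s| = |A'_s| =: p_s$ for every $s$ (in particular $p = p'$), Theorem~\ref{thm:Wg} expresses the $s$-th integral as a sum over pairs of bijections $\alpha_s, \beta_s : A_s \to A'_s$ (after identifying each with $\mathcal{S}_{p_s}$ via fixed orderings of $A_s$ and $A'_s$), with the Kronecker delta constraints $\delta_{i_t, i'_{\alpha_s(t)}}$ for $t \in A_s$ and $\delta_{j_t, j'_{\beta_s(t)}}$ for $t \in A_s$, weighted by $\Wg(n, \beta_s \alpha_s^{-1})$ — or $\Wg(n, \alpha_s^{-1}\beta_s)$, which has the same value since $\Wg$ is a class function.

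Then I would recombine. Taking the product over $s$ of these $k$ sums and using distributivity, a term in the expanded product is indexed by a choice of bijection $\alpha_s : A_s \to A'_s$ for each $s$, i.e. exactly by an element $\alpha = \alpha_1 \times \cdots \times \alpha_k \in \mathcal{S}_p^{l \to l'}$ via the decomposition \eqref{eq:decomp-perm-level-sets}, and similarly for $\beta$. The Kronecker deltas collected across all colours are $\prod_{t=1}^p \delta_{i_t, i'_{\alpha(t)}}$ and $\prod_{t=1}^p \delta_{j_t, j'_{\beta(t)}}$, since every index $t \in [p]$ lies in exactly one level set. The Weingarten weight becomes $\prod_{s=1}^k \Wg(n, \alpha_s^{-1}\beta_s)$, which is the definition of $\Wg^{l \to l'}(n,\alpha,\beta)$. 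This yields \eqref{eq:colored-Weingarten}.

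The only genuinely delicate point — and the main obstacle — is bookkeeping: making precise the identification between bijections $A_s \to A'_s$ and permutations in $\mathcal{S}_{p_s}$, and checking that this identification is consistent across the product so that the block permutations glue to a well-defined $\alpha \in \mathcal{S}_p^{l \to l'}$ with $\alpha|_{A_s} = \alpha_s$. One must fix, once and for all, the order-preserving bijections $A_s \cong \{1,\ldots,p_s\} \cong A'_s$ and verify that the delta patterns and the argument $\alpha_s^{-1}\beta_s$ of the Weingarten function are invariant under this relabelling (they are, because $\Wg$ depends only on the conjugacy class and the deltas are re-indexed covariantly). Once this correspondence is set up carefully, the proof is a direct assembly of Theorem~\ref{thm:Wg} applied $k$ times together with Fubini and independence.
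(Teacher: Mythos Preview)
Your proposal is correct and follows essentially the same route as the paper: factor the integral by independence into $k$ single-matrix integrals, apply Theorem~\ref{thm:Wg} to each, note the vanishing when the level-set cardinalities differ, and reassemble the product of sums into a sum over $\mathcal S_p^{l\to l'}$. Your write-up is in fact more explicit than the paper's, which dispatches the result in three lines and leaves the bookkeeping you describe to the reader.
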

\begin{proof}
We start by factoring the integral using the independence of the random unitary matrices $U^{(s)}$:
\be\label{Colored Weingarten proof}
&&\int_{\mathcal{U}(n)^k} 
U_{i_1,j_1}^{(l_1)} \cdots U_{i_p,j_p}^{(l_p)} \overline{ U_{i'_1,j'_1}^{(l'_1)}} \cdots \overline{ U_{i'_p,j'_p}^{(l'_p)}} 
dU^{(1)} \cdots dU^{(k)} \\
&=&
\int_{U(n)} [\text{factors with $l_t=1, l'_t =1$}] \, dU^{(1)} \times \cdots \times 
\int_{U(n)} [\text{factors with $l_t=k, l'_t =k$}] \, dU^{(k)} \notag
\ee
The above product vanishes whenever $|l^{-1}(s)| \neq |l'^{-1}(s)|$ for some $s \in [k]$. The value of each factor is computed using the usual Weingarten formula in Theorem \ref{thm:Wg} and the result follows.
\end{proof}

One can reformulate the above result in a more practical way, eliminating the restriction on the sum indices $\alpha, \beta$, as follows. 
 
\begin{theorem}[generalized Weingarten formula, second take]\label{thm:colored-Weingarten-2} 
Let $n$ and $k$ be positive integers and $i=(i_1,\ldots ,i_p)$, $i'=(i'_1,\ldots ,i'_{p'})$, $j=(j_1,\ldots ,j_p)$, $j'=(j'_1,\ldots ,j'_{p'})$ be tuples of positive integers from $[n]$ and $l=(l_1,\ldots ,l_p)$, $l'=(l'_1,\ldots ,l'_{p'})$ be tuples of positive integers from $[k]$. Then  
\begin{align}
\label{eq:colored-Weingarten-2}
&\qquad\int_{\mathcal{U}(n)^k} 
U_{i_1,j_1}^{(l_1)} \cdots U_{i_p,j_p}^{(l_p)} \overline{ U_{i'_1,j'_1}^{(l'_1)}} \cdots \overline{ U_{i'_p,j'_p}^{(l'_p)}} dU_1 \cdots dU_k \\ 
\notag
&= \sum_{\alpha, \beta \in S_p} 
\underbrace{
\delta_{i_1 i'_{\alpha(1)}}\cdots\delta_{i_{p} i'_{\alpha(p)}} 
\delta_{j_1 j'_{\beta(1)}} \cdots \delta_{j_{p} j'_{\beta(p)}}
}_{(\star)}
\underbrace{
\delta_{l_1 l'_{\alpha(1)}}\cdots\delta_{l_{p} l'_{\alpha(p)}} 
\delta_{l_1 l'_{\beta(1)}} \cdots \delta_{l_{p} l'_{\beta(p)}}
}_{(\star\star)} 
 \tilde \Wg (n, \alpha^{-1} \beta).
\end{align}
Here,
$$\tilde \Wg (n, \alpha^{-1} \beta) = \prod_{s=1}^k \Wg (n,\alpha_s^{-1} \beta_s) \quad \text{ when } \alpha, \beta \in \mathcal S_p^{l \to l'},$$
and it can have any other value when one of $\alpha$ or $\beta$ is not in $\mathcal S_p^{l \to l'}$.
\end{theorem}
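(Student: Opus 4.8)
The plan is to derive Theorem~\ref{thm:colored-Weingarten-2} directly from Theorem~\ref{thm:colored-Weingarten} by a purely combinatorial rewriting of the index set of the sum: I want to trade the constrained summation over $\alpha,\beta\in\mathcal S_p^{l\to l'}$ appearing in \eqref{eq:colored-Weingarten} for an unconstrained summation over $\mathcal S_p\times\mathcal S_p$, in which the constraint is now enforced by the additional block of Kronecker symbols $(\star\star)$. All the analytic input (the one-matrix Weingarten formula and independence) has already been used to prove Theorem~\ref{thm:colored-Weingarten}, so nothing new of that kind is needed here.

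The first step is to record the elementary observation that, for $\alpha\in\mathcal S_p$, membership $\alpha\in\mathcal S_p^{l\to l'}$ is equivalent to the pointwise condition $l_t=l'_{\alpha(t)}$ for every $t\in[p]$. Indeed, if $\alpha(l^{-1}(s))=l'^{-1}(s)$ for all $s$, then each $t$ with $l_t=s$ is sent to an index $\alpha(t)\in l'^{-1}(s)$, i.e.\ $l'_{\alpha(t)}=l_t$; conversely, if $l_t=l'_{\alpha(t)}$ for all $t$, then $\alpha$ maps $l^{-1}(s)$ into $l'^{-1}(s)$ for each $s$, and since $\alpha$ is a bijection of $[p]=\bigsqcup_s l^{-1}(s)=\bigsqcup_s l'^{-1}(s)$, these inclusions are forced to be equalities (in particular the level-set cardinalities must agree, so that $\mathcal S_p^{l\to l'}\neq\emptyset$). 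Consequently the product $\delta_{l_1 l'_{\alpha(1)}}\cdots\delta_{l_p l'_{\alpha(p)}}$ is exactly the indicator of the event $\{\alpha\in\mathcal S_p^{l\to l'}\}$, and likewise for $\beta$; hence the block $(\star\star)$ is the indicator of $\{\alpha\in\mathcal S_p^{l\to l'}\}\cap\{\beta\in\mathcal S_p^{l\to l'}\}$.

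The second step is the substitution itself. Plugging this into the right-hand side of \eqref{eq:colored-Weingarten-2}, the sum over all $\alpha,\beta\in\mathcal S_p$ collapses, thanks to $(\star\star)$, to a sum over $\alpha,\beta\in\mathcal S_p^{l\to l'}$, on which $\tilde\Wg(n,\alpha^{-1}\beta)$ is by definition $\prod_{s=1}^k\Wg(n,\alpha_s^{-1}\beta_s)=\Wg^{l\to l'}(n,\alpha,\beta)$ for the level-set decomposition~\eqref{eq:decomp-perm-level-sets}; here one notes that $\alpha_s^{-1}\beta_s$ is simply the restriction of $\alpha^{-1}\beta$ to $l^{-1}(s)$, which is precisely what makes the notation $\tilde\Wg(n,\alpha^{-1}\beta)$ unambiguous. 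The terms with $\alpha\notin\mathcal S_p^{l\to l'}$ or $\beta\notin\mathcal S_p^{l\to l'}$ are annihilated by $(\star\star)$ no matter what (otherwise arbitrary) value is assigned to $\tilde\Wg$ there. Therefore the right-hand side of \eqref{eq:colored-Weingarten-2} equals the right-hand side of \eqref{eq:colored-Weingarten}, which in turn equals the common left-hand side by Theorem~\ref{thm:colored-Weingarten}. The degenerate case $p\neq p'$ is immediate, both sides vanishing.

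There is essentially no obstacle: the real content lives entirely in Theorem~\ref{thm:colored-Weingarten} (hence in Theorem~\ref{thm:Wg} together with independence), and what remains is bookkeeping. The only points that genuinely need care are checking that $(\star\star)$ faithfully encodes $\mathcal S_p^{l\to l'}$ — including the implicit requirement that the level sets of $l$ and $l'$ have matching cardinalities — and checking that describing $\alpha_s^{-1}\beta_s$ as a restriction of $\alpha^{-1}\beta$ legitimizes treating $\tilde\Wg$ as a function of the single permutation $\alpha^{-1}\beta$.
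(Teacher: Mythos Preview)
Your proposal is correct and follows exactly the same approach as the paper: you show that the block $(\star\star)$ is precisely the indicator of $\{\alpha,\beta\in\mathcal S_p^{l\to l'}\}$, so that the unconstrained sum in \eqref{eq:colored-Weingarten-2} collapses to the constrained sum in \eqref{eq:colored-Weingarten}, with the weights matching by definition of $\tilde\Wg$. The paper's own proof says just this in two sentences; you have merely (and correctly) spelled out the verification that $(\star\star)$ encodes membership in $\mathcal S_p^{l\to l'}$ and that $\alpha_s^{-1}\beta_s$ is the restriction of $\alpha^{-1}\beta$ to $l^{-1}(s)$.
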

\begin{proof}
This is a direct consequence of Theorem \ref{thm:colored-Weingarten}.
Indeed, because of the factors $(\star\star)$ in the above formula, 
only $\alpha, \beta \in \mathcal S_p^{l \to l'}$ survive, and the value of the Weingarten weight is the same as in Theorem \ref{thm:colored-Weingarten}. 
\end{proof}

For the concrete applications that follow, it is useful to have a simple equivalent for the quantity $\tilde \Wg (n, \alpha^{-1} \beta)$. 

\begin{proposition}\label{prop:equivalent-modified-Wg}
Assuming that $p$ is fixed, for every $l,l'$ and $\alpha, \beta \in \mathcal S_p^{l \to l'}$, we have
\begin{equation}\label{eq:Weingarten-function}
\Wg^{l \to l'}(n,\alpha, \beta) = \tilde \Wg (n, \alpha^{-1} \beta) =  \Wg (n, \alpha^{-1} \beta)(1+O(n^{-2}))
\end{equation}
\end{proposition}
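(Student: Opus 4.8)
The plan is to reduce the modified Weingarten function, which is a product over the $k$ color classes, to the ordinary Weingarten function on $\mathcal{S}_p$ by exploiting the asymptotic factorization of $\Wg$ over cycles recorded in Theorem \ref{thm:mob} (equivalently, equation \eqref{weingartenapprox}). The key structural observation is that when $\alpha, \beta \in \mathcal{S}_p^{l \to l'}$, the permutation $\alpha^{-1}\beta$ preserves every level set $l'^{-1}(s)$, hence decomposes as a disjoint product $\alpha^{-1}\beta = (\alpha^{-1}\beta)_1 \times \cdots \times (\alpha^{-1}\beta)_k$, where the $s$-th factor acts on $l'^{-1}(s)$ and equals $\alpha_s^{-1}\beta_s$ in the notation of \eqref{eq:decomp-perm-level-sets}. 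Since these factors act on disjoint sets, the cycles of $\alpha^{-1}\beta$ are exactly the union of the cycles of the $\alpha_s^{-1}\beta_s$.

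First I would invoke \eqref{weingartenapprox} (or directly \eqref{Wg formula 1}) to write, for each $s$, $\Wg(n, \alpha_s^{-1}\beta_s) = \prod_{c \in \alpha_s^{-1}\beta_s} \Wg(n,c)(1 + O(n^{-2}))$, where the product runs over the cycles of $\alpha_s^{-1}\beta_s$. Taking the product over $s = 1, \ldots, k$ and using the cycle decomposition above, the right-hand side becomes $\prod_{c \in \alpha^{-1}\beta} \Wg(n,c)(1+O(n^{-2}))$, which by Theorem \ref{thm:mob} applied in the other direction is precisely $\Wg(n, \alpha^{-1}\beta)(1+O(n^{-2}))$. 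Since $p$ is fixed, the number of cycles is bounded, so the finitely many $O(n^{-2})$ error terms combine into a single $O(n^{-2})$, giving \eqref{eq:Weingarten-function}. The first equality $\Wg^{l \to l'}(n,\alpha,\beta) = \tilde\Wg(n, \alpha^{-1}\beta)$ is immediate from the definitions in Theorems \ref{thm:colored-Weingarten} and \ref{thm:colored-Weingarten-2}, since both equal $\prod_{s=1}^k \Wg(n, \alpha_s^{-1}\beta_s)$ when $\alpha, \beta \in \mathcal{S}_p^{l\to l'}$.

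There is essentially no serious obstacle here; the only point requiring a moment's care is the bookkeeping that the $k$-fold product of cycle-factorizations genuinely reassembles into the cycle-factorization of the global permutation $\alpha^{-1}\beta$ — this relies on the disjointness of the level sets $l'^{-1}(s)$, so that no cycle of $\alpha^{-1}\beta$ straddles two colors. One should also note that $k$ may grow with $n$, but since $p$ is fixed the number of \emph{nonempty} level sets (those with $|l^{-1}(s)| = |l'^{-1}(s)| \geq 1$) is at most $p$, and empty level sets contribute a trivial factor $\Wg(n, \id_\emptyset) = 1$; hence the product over $s$ is effectively a product of at most $p$ nontrivial terms and the error control is uniform. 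With these remarks the proof is a one-line consequence of Theorem \ref{thm:mob}.
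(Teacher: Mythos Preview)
Your proof is correct and follows essentially the same approach as the paper: both exploit the fact that $\alpha^{-1}\beta$ decomposes along the level sets so that its cycle structure is the disjoint union of those of the $\alpha_s^{-1}\beta_s$, and then use the multiplicativity of the leading asymptotics of $\Wg$ over cycles (the paper phrases this via \eqref{weingartenapprox} and the multiplicativity of $\Mob$, you via \eqref{Wg formula 1} directly). The observation that at most $p$ level sets are nonempty, so the error terms are uniformly controlled, appears in both arguments.
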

\begin{proof}
We shall use the fact that one can approximate the usual Weingarten function by a polynomial times the M\"obius function, which are multiplicative. Start from the left hand side of the above equality, and use the fact that both $\alpha$ and $\beta$ decompose as products of smaller permutations acting on the level sets of $l$ and $l'$, see \eqref{eq:decomp-perm-level-sets}: 
\begin{align*}
\Wg^{l \to l'}(n,\alpha, \beta) &= \prod_{s=1}^k \Wg(n,\alpha_s, \beta_s)\\
&= \prod_{s=1}^k n^{-|l^{-1}(s)|-|\alpha_s^{-1}\beta_s|}\Mob(\alpha_s^{-1}\beta_s)(1+O(n^{-2}))\\
&=  n^{-p-|\alpha^{-1}\beta|}\Mob(\alpha^{-1}\beta)(1+ O(n^{-2}))\\
&= \Wg (n, \alpha^{-1} \beta)(1+O(n^{-2})).
\end{align*}

Note that the product over $k$ above has at most $p$ non trivial factors, because at most $p$ of the permutations $\alpha_s, \beta_s$ are non trivial. 

\end{proof}

\subsection{Graphical Weingarten calculus for several independent unitary matrices}

In this section we extend the graphical formalism introduced in \cite{cn1} to encompass integrals over several independent unitary Haar-distributed matrices.
We first recall briefly the single-matrix case. For more details on this method, we refer the reader to the paper \cite{cn1} and to other work which make use of this technique \cite{cn3,cn-entropy,cnz,cfn1}.

In the graphical calculus, matrices (or, more generally, tensors) are represented by boxes. 
To each box, one attaches symbols of different shapes, corresponding to vector spaces. 
The symbols can be empty (white) or filled (black), corresponding to spaces or their duals.

Wires connect these symbols, each wire corresponding to a tensor contraction $V \times V^* \to \mathbb C$. A diagram is a collection of such boxes and wires and corresponds to an element in a tensor product space.

The main advantage of such a representation is that it provides an efficient way of computing expectation values of such tensors when some (or all) of the boxes are random unitary matrices. In \cite{cn1}, the authors proposed an efficient way to apply the Weingarten formula \eqref{eq:Wg} to a diagram, which has a nice graphical interpretation.

The delta functions in each summand in the RHS of (\ref{eq:Wg}) describes how one should connect boxes. Each pair of permutations $(\alpha,\beta)$ in  (\ref{eq:Wg}) connects the labels of $U$ and $\overline U$ boxes and then, after erasing those boxes and keeping the wires, one obtains a new diagram. The permutation $\alpha$ is used to connect the white (or empty) labels of the boxes in the following way: the white decorations of the $i$-th $U$ box are connected to the corresponding white decorations of the $\alpha(i)$-th $\overline U$ box. Permutation $\beta$ is used in a similar manner to connect the black (or filled) decorations.

This process for a fixed permutation pair $(\alpha, \beta)$ is called a \emph{removal} and the whole process which sums all the new graphs over the all permutations is called the graph expansion. Importantly, the graphical calculus works linearly and separated components are related by tensor products, as is assumed implicitly above. The above procedure is summarized in the following important result. 

\begin{theorem}\label{thm:graphical-calculus}
If $\mathcal D$ is a diagram containing boxes $U, \overline U$ corresponding to a Haar-distributed random unitary matrix $U \in \mathcal U(n)$, the expectation value of $\mathcal D$ with respect to $U$ can be decomposed as a sum of weighted diagrams $\D_{\alpha, \beta}$ obtained by connecting the white labels of the $U$ boxed along the permutation $\alpha$ and the black labels along $\beta$. The weights of the diagrams are given by the Weingarten functions. 
\[\E_U(\D)=\sum_{\alpha, \beta} \D_{\alpha, \beta} \Wg (n, \alpha\beta^{-1}).\]
\end{theorem}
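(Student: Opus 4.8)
The plan is to reduce the graphical statement to the scalar Weingarten formula of Theorem~\ref{thm:Wg} by reading off the tensor entries of the diagram $\D$. First I would set up notation: write the diagram $\D$ as a formal tensor whose value, after fixing an index assignment to every wire, is a product of entries of the various boxes appearing in $\D$. Because the white (resp.\ black) decorations of the $U$ and $\overline U$ boxes carry the row (resp.\ column) indices of $U$, the diagram contributes a monomial $\prod_{t=1}^{p} U_{i_t j_t} \prod_{t=1}^{p} \overline{U_{i'_t j'_t}}$ in those entries, multiplied by the (deterministic) entries coming from all the other boxes and by the Kronecker deltas encoding the wires that do not touch the $U$/$\overline U$ boxes. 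Here $p$ is the number of $U$ boxes; note that if the number of $U$ boxes differs from the number of $\overline U$ boxes, formula~\eqref{eq:Wg_diff} makes $\E_U(\D)=0$ and the claimed sum is empty, so one may assume there are equally many.

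Next I would apply Theorem~\ref{thm:Wg} entrywise, i.e.\ integrate the monomial $\prod_t U_{i_t j_t}\prod_t \overline{U_{i'_t j'_t}}$ over $\U(n)$. This produces $\sum_{\alpha,\beta\in\S_p}\bigl(\prod_t \delta_{i_t i'_{\alpha(t)}}\bigr)\bigl(\prod_t \delta_{j_t j'_{\beta(t)}}\bigr)\Wg(n,\beta\alpha^{-1})$. I would then exchange the (finite) sums over $\alpha,\beta$ with the sum over wire index assignments defining $\D$, which is legitimate since everything is a finite sum of scalars. For each fixed $(\alpha,\beta)$, the deltas $\delta_{i_t i'_{\alpha(t)}}$ identify the white label of the $t$-th $U$ box with the white label of the $\alpha(t)$-th $\overline U$ box; summing over the now-redundant index is exactly the operation of drawing a wire between those decorations and erasing the boxes. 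Doing this for all white decorations along $\alpha$ and all black decorations along $\beta$ yields, by definition, the diagram $\D_{\alpha,\beta}$, with its value being the corresponding sum over the remaining wire assignments. Hence $\E_U(\D)=\sum_{\alpha,\beta}\D_{\alpha,\beta}\,\Wg(n,\beta\alpha^{-1})$, and since $\Wg(n,\cdot)$ is a class function invariant under inversion this equals $\sum_{\alpha,\beta}\D_{\alpha,\beta}\,\Wg(n,\alpha\beta^{-1})$, as stated.

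The only subtlety, and the step I expect to need the most care, is bookkeeping: making precise that ``contracting the index shared by a $U$ decoration and a $\overline U$ decoration, then deleting the boxes'' coincides with the graphical removal operation, including the claim that separated connected components multiply (tensor product) and the whole construction is linear in $\D$. This is really a matter of carefully matching the combinatorics of the index summation to the combinatorics of wires; it was established in~\cite{cn1} and I would either cite that treatment or spell out the correspondence on a generic box-and-wire configuration. Everything else is a routine transcription of Theorem~\ref{thm:Wg}, and in fact Theorem~\ref{thm:colored-Weingarten} together with its graphical counterpart in the next subsection will reuse exactly this argument with $\S_p$ replaced by $\S_p^{l\to l'}$.
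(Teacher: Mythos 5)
Your argument is correct and coincides with the proof given in \cite{cn1}, to which the paper defers without reproving Theorem~\ref{thm:graphical-calculus}. Reducing to the scalar Weingarten formula of Theorem~\ref{thm:Wg}, exchanging the finite sums, and interpreting the resulting Kronecker deltas as the wires of the removed diagram $\D_{\alpha,\beta}$ is exactly the intended derivation, and your observation that $\Wg(n,\cdot)$ is a class function invariant under inversion properly reconciles the $\Wg(n,\beta\alpha^{-1})$ produced by Theorem~\ref{thm:Wg} with the $\Wg(n,\alpha\beta^{-1})$ in the statement.
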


We now extend the graphical calculus by adding a new box corresponding to a column-block matrix made of independent unitary matrices and generalizing the removal procedure in order to allow to take expectations of diagrams containing such boxes. 

Consider i.i.d. Haar unitary matrices $U^{(1)},U^{(2)}, \ldots, U^{(k)} \in \mathcal U(n)$ and stack them up into a block-column $V \in M_{kn \times n}(\mathbb C)$
$$V = \sum_{i=1}^k e_i \otimes U^{(i)},$$
where $\{e_i\}$ is an orthonormal basis of $\mathbb C^k$ used to index the unitary blocks. Note that, up to a constant, $V$ is an isometry from $\mathbb C^n \to \mathbb C^k \otimes \mathbb C^n$, i.e. $V^*V=k\mathrm{I}_n$. Graphically, one can represent $V$ as in Figure \ref{fig:V-box}. With this correspondence, one can read the Weingarten formulas \eqref{eq:colored-Weingarten} and \eqref{eq:colored-Weingarten-2} in terms of the matrix $V$ by using the identification $U^{(l)}_{ij} = V_{(l,i),j}$. Moreover, the diagram of the (true) isometry $\tilde V$ used to defined random unitary channels can be easily obtained from $V$ and the diagonal matrix 
\begin{equation}\label{eq:W}
W = \mathrm{diag}(w_1, \ldots, w_k),
\end{equation}
see Figure \ref{fig:tilde-V-box}.
\begin{figure}[htbp]
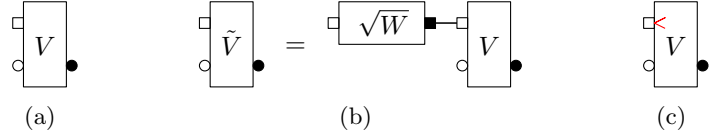

\subfigure[]{\label{fig:V-box}\includegraphics{V-box.eps}}\qquad\qquad
\subfigure[]{\label{fig:tilde-V-box}\includegraphics{tilde-V-box.eps}}\qquad\qquad
\subfigure[]{\label{fig:V-random}\includegraphics{V-random-box.eps}}
\caption{Boxes for block isometries. Round labels correspond to $\mathbb C^n$ and the square label corresponds to $\mathbb C^k$. On the left, the usual graphical representation of a stack of unitary operators. In the middle, the box $\tilde V$ for a weighted block-isometry. On the right, the box for a stack of i.i.d.\ random unitary matrices. The square label has a duplication symbol associated to it on the inside of the box, suggesting it should be connected using both permutations when performing the graph expansion.} 
\end{figure}

Consider now a diagram $\mathcal D$ containing a box $V$ corresponding to a stack of i.i.d. Haar unitary operators. We are interested in computing the expectation value of $\mathcal D$ with respect to $V$. In the spirit of Theorem \ref{thm:graphical-calculus}, we are going to express this expectation value as a sum over diagrams obtained by a new removal procedure, weighted by Weingarten coefficients. One of the main differences between the Weingarten formula for a single Haar unitary matrix \eqref{eq:Wg} and the Weingarten formula for several such Haar random matrices \eqref{eq:colored-Weingarten-2} is the fact that the indices $l$ and $l'$ in the latter equation are connected by \emph{both} permutations $\alpha$ and $\beta$. Hence, we need to redefine the removal procedure when computing expectation values with respect to $V$ boxes. Since the $l,l'$ indices are connected using both permutations, we shall add a \emph{duplication} symbol inside the box of $V$ of $V^*$, next to the label corresponding to $\mathbb C^k$, see Figure \ref{fig:V-random}. During the removal procedure, labels associated to the duplication symbols should be connected with both permutations $\alpha$ and $\beta$, as in Figure \ref{fig:expectation-block-isometry}.

\begin{figure}[htbp]
\includegraphics{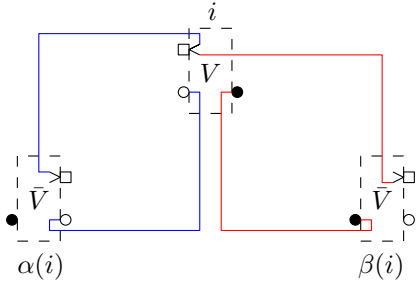}
\caption{Removal procedure for boxes corresponding to stacks of i.i.d. Haar unitary operators. Labels with duplication symbols must be connected using both permutations.} 
\label{fig:expectation-block-isometry}
\end{figure}

After the removal procedure, the duplication symbols remaining must be interpreted in terms of the duplication operator $T:\mathbb C^k \to \mathbb C^k \otimes \mathbb C^k$ and its adjoint $T^*$ (see Figure \ref{fig:duplication}):
\begin{align*}
T &= \sum_{i=1}^k e_i^* \cdot (e_i \otimes e_i), \\
T^* &= \sum_{i=1}^k (e_i^* \otimes e_i^*) \cdot e_i. 
\end{align*}

\begin{figure}[htbp]
\includegraphics{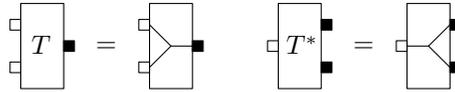}
\caption{Duplication operators $T$ and $T^*$.} 
\label{fig:duplication}
\end{figure}

\begin{theorem}\label{thm:V-graphical-calculus}
Let $\mathcal D$ be a diagram containing boxes $V, \bar V$ corresponding to a random matrices having a block-Haar distribution. The expectation value of $\mathcal D$ with respect to $V$ can be decomposed as a sum of weighted diagrams $\D_{\alpha, \beta}$ obtained by connecting the white labels of the $V$ boxed along the permutation $\alpha$, the black labels along $\beta$, and the labels with a duplication symbol along both $\alpha$ and $\beta$. The weights of the diagrams are given by modified Weingarten functions. 
\[\E_V(\D)=\sum_{\alpha, \beta} \D_{\alpha, \beta} \,\tilde \Wg (n, \alpha,\beta).\]
\end{theorem}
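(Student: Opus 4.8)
The plan is to reduce the graphical statement to the index-level generalized Weingarten formula of Theorem~\ref{thm:colored-Weingarten-2}, exactly mirroring how Theorem~\ref{thm:graphical-calculus} is obtained from Theorem~\ref{thm:Wg} in the single-matrix case. First I would fix the diagram $\mathcal D$ and expand every box into its tensor components, writing the scalar (or tensor) value of $\mathcal D$ as a sum over all index assignments of a product of matrix entries, where the $V$-box contributes entries $V_{(l,i),j}=U^{(l)}_{ij}$ and the $\bar V$-box contributes the conjugates. Since the only random boxes are $V,\bar V$, taking $\E_V$ commutes with the finite sums over the non-random indices, so it suffices to evaluate $\E_V$ of a product of the form $\prod_t U^{(l_t)}_{i_t j_t}\,\overline{U^{(l'_t)}_{i'_t j'_t}}$, which is precisely the left-hand side of \eqref{eq:colored-Weingarten-2}.

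Next I would substitute the right-hand side of \eqref{eq:colored-Weingarten-2}: this turns $\E_V(\mathcal D)$ into a sum over pairs $(\alpha,\beta)\in\mathcal S_p\times\mathcal S_p$, each term weighted by $\tilde\Wg(n,\alpha^{-1}\beta)$ and carrying four families of Kronecker deltas --- the $(\star)$ deltas tying the round ($\mathbb C^n$) labels of the $V$ boxes to those of the $\bar V$ boxes via $\alpha$ (white/empty) and $\beta$ (black/filled), and the $(\star\star)$ deltas tying the square ($\mathbb C^k$) labels via \emph{both} $\alpha$ and $\beta$. The key observation is then purely combinatorial: performing the remaining summation over all indices with these deltas in place is, by definition of the graphical formalism, exactly the operation of erasing the $V$ and $\bar V$ boxes and reconnecting wires --- white labels along $\alpha$, black labels along $\beta$, and each label decorated with a duplication symbol along both $\alpha$ and $\beta$. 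A label connected by both permutations produces, after summation, the contraction pattern $\sum_{s} \delta_{\cdot\, s}\delta_{\cdot\, s}\delta_{s\,\cdot}$, which is precisely the coefficient list of the operators $T$ and $T^*$ defined above; this is how the leftover duplication symbols in $\mathcal D_{\alpha,\beta}$ are to be read. Thus each $(\alpha,\beta)$ summand equals the scalar/tensor value of the diagram $\mathcal D_{\alpha,\beta}$, and $\E_V(\mathcal D)=\sum_{\alpha,\beta}\mathcal D_{\alpha,\beta}\,\tilde\Wg(n,\alpha,\beta)$, as claimed. I would also note that the terms with $\alpha$ or $\beta$ outside $\mathcal S_p^{l\to l'}$ are harmless: there the $(\star\star)$ deltas force a clash, so $\mathcal D_{\alpha,\beta}=0$ and the (unspecified) value of $\tilde\Wg$ is irrelevant --- equivalently one may restrict the sum to $\mathcal S_p^{l\to l'}$ and use Theorem~\ref{thm:colored-Weingarten} directly.

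The main obstacle is not analytic but bookkeeping: one must check carefully that the ``connect along both $\alpha$ and $\beta$'' prescription on the $\mathbb C^k$ label, together with the subsequent index summation, genuinely reproduces the operators $T,T^*$ rather than, say, a single identity wire or an unintended trace. Concretely, the subtlety is that when a $\mathbb C^k$ label is joined to the $\alpha$-partner of one box and the $\beta$-partner of (possibly) a different box, the resulting local picture near an erased $V$-box is a trivalent vertex --- two outgoing wires (to the $\alpha$- and $\beta$-targets) fed by one incoming wire --- whose associated tensor is $\sum_i e_i^*\otimes(e_i\otimes e_i)$, i.e.\ $T$; and symmetrically at each $\bar V$-box one gets $T^*$. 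Verifying that the diagonal structure $\delta_{l_t l'_{\alpha(t)}}\delta_{l_t l'_{\beta(t)}}$ in \eqref{eq:colored-Weingarten-2} translates into exactly these trivalent gluings --- and that no spurious factors of $k$ or extra loops appear --- is the one place where care is required. Everything else (linearity of the expansion over connected components of $\mathcal D$, compatibility with tensor products) is inherited verbatim from the single-matrix graphical calculus of \cite{cn1} and Theorem~\ref{thm:graphical-calculus}.
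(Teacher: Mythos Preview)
Your proposal is correct and follows exactly the approach the paper intends: the paper does not give a separate proof of this theorem, but the paragraphs preceding it make clear that it is a direct graphical restatement of Theorem~\ref{thm:colored-Weingarten-2}, with the $(\star\star)$ deltas on the $l,l'$ indices reinterpreted as the ``connect along both $\alpha$ and $\beta$'' rule and the resulting trivalent vertices read as the duplication operators $T,T^*$. Your write-up is in fact more explicit than the paper about the bookkeeping (in particular the verification that the double delta $\delta_{l_t l'_{\alpha(t)}}\delta_{l_t l'_{\beta(t)}}$ yields the trivalent gluing), but the underlying argument is the same.
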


Several applications of this result shall be discussed in Sections \ref{sec:fixed-k} and \ref{sec:k-sim-n}.

\section{Haar random isometries versus block random isometries}\label{sec:measures-isometry}

In this section we perform a very basic analysis of the two probability measures we considered earlier on the set of isometries from $\mathbb C^n$ to $\mathbb C^k \otimes \mathbb C^n$. Let us first recall their definitions. The easiest way to introduce these measures is via an image measure construction, starting from the Haar measure on unitary groups. Recall that both these objects are probability measures on the set of isometries 
$$\mathrm{Isom}(\mathbb C^n,\mathbb C^{kn}) = \{V \in M_{kn \times n}(\mathbb C) \, | \, V^*V = \mathrm{I}_n\}.$$

The Haar measure $\mu_\text{Haar}$ is the easiest to define: it is the image measure of the Haar measure on the unitary group $\mathcal U(kn)$ via the truncation operation which erases the last $(k-1)n$ columns of a $kn \times kn$ matrix. In other words, if $P \in M_{kn \times n}(\mathbb C)$ is the truncation operator $P_{ij} = \delta_{ij}$ and $U \in \mathcal U(kn)$ is a Haar-distributed unitary matrix, then $UP \sim \mu_\text{Haar}$.

An isometry $\tilde V \in \mathrm{Isom}(\mathbb C^n,\mathbb C^{kn})$ has distribution $\mu_{ w}$ if it is obtained by stacking $k$ independent Haar unitary matrices one on top of the other, with weights $\sqrt{w_i}$:
\begin{equation}\label{eq:V-def}
\tilde V = \sum_{i=1}^k \sqrt{w_i} e_i \otimes U^{(i)}.
\end{equation}
The probability distribution $\mu_{ w}$ can also be seen as the image measure of the product of $k$ Haar measures on $\mathcal U(n)$ via the weighted stacking procedure described above.

We gather next some basic properties of these measures, whose proofs are left to the reader.
\begin{proposition}
The measures $\mu_\text{Haar}$ and $\mu_\text{block}$ have the following invariance properties:
\begin{enumerate}
\item If $V \sim \mu_\text{Haar}$ and $U \in \mathcal U(kn)$, $U' \in \mathcal U(n)$ are fixed unitary matrices, then $UVU' \sim \mu_\text{Haar}$.
\item If $V \sim \mu_{ w}$ and $U' \in \mathcal U(n)$ is a fixed unitary matrix, then $VU' \sim \mu_{ w}$.
\item If $V \sim \mu_{ w}$ and $U \in \mathcal U(kn)$ is such that
$$U = \sum_{j=1}^k e_{\sigma(j)}e_j^* \otimes U^{(j)},$$
where $\sigma \in \mathcal S_k$ is a permutation that leaves the vector $w$ invariant and $U^{(j)} \in \mathcal U(n)$ are unitary matrices, then $UV \sim \mu_{ w}$.
\end{enumerate}
\end{proposition}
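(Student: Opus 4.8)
The statement collects three invariance properties, and since the authors themselves say the proofs are left to the reader, the plan is to verify each one directly from the image-measure description of $\mu_\text{Haar}$ and $\mu_w$, using only the left- and right-invariance of the Haar measure on $\mathcal U(kn)$ and on $\mathcal U(n)$.

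For the first property, I would write $V = UP$ with $U \sim \text{Haar}$ on $\mathcal U(kn)$ and $P \in M_{kn\times n}$ the canonical truncation. Then $U_1 V U' = (U_1 U) P U'$, and the point is that $PU' = (I_n\text{-block}) P$ can be absorbed: concretely, $PU'$ equals $\hat U' P$ where $\hat U' = U' \oplus I_{(k-1)n} \in \mathcal U(kn)$, since $P$ selects the first $n$ columns. Hence $U_1 V U' = (U_1 U \hat U') P$, and $U_1 U \hat U' \sim \text{Haar}$ on $\mathcal U(kn)$ by left-right invariance of Haar measure, so $U_1 V U' \sim \mu_\text{Haar}$. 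For the second property, write $\tilde V = \sum_i \sqrt{w_i}\, e_i \otimes U^{(i)}$ with the $U^{(i)}$ i.i.d.\ Haar on $\mathcal U(n)$; then $\tilde V U' = \sum_i \sqrt{w_i}\, e_i \otimes (U^{(i)} U')$, and since each $U^{(i)} U' \sim \text{Haar}$ on $\mathcal U(n)$ and the family $(U^{(i)} U')_{i=1}^k$ is still independent (right-multiplication by a fixed matrix is applied blockwise), the stacked matrix has law $\mu_w$.

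The third property is the one requiring a little care, and I expect it to be the main (mild) obstacle, mostly bookkeeping with the tensor indices. With $U = \sum_{j=1}^k e_{\sigma(j)} e_j^* \otimes U^{(j)}$ acting on $\mathbb C^k \otimes \mathbb C^n$ and $\tilde V = \sum_{i=1}^k \sqrt{w_i}\, e_i \otimes \tilde U^{(i)}$ (with $\tilde U^{(i)}$ i.i.d.\ Haar, renaming to avoid clashing with the $U^{(j)}$ in the definition of $U$), I would compute
\begin{align*}
U \tilde V &= \Bigl(\sum_{j=1}^k e_{\sigma(j)} e_j^* \otimes U^{(j)}\Bigr)\Bigl(\sum_{i=1}^k \sqrt{w_i}\, e_i \otimes \tilde U^{(i)}\Bigr) = \sum_{j=1}^k \sqrt{w_j}\, e_{\sigma(j)} \otimes (U^{(j)} \tilde U^{(j)})\\
&= \sum_{m=1}^k \sqrt{w_{\sigma^{-1}(m)}}\, e_m \otimes \bigl(U^{(\sigma^{-1}(m))} \tilde U^{(\sigma^{-1}(m))}\bigr).
\end{align*}
Now I would invoke the hypothesis $w_{\sigma^{-1}(m)} = w_m$ (equivalently $\sigma$ leaves $w$ invariant) to rewrite the scalar factor as $\sqrt{w_m}$. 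The new blocks $B_m := U^{(\sigma^{-1}(m))} \tilde U^{(\sigma^{-1}(m))}$ are, conditionally on the fixed matrices $U^{(\cdot)}$, each Haar-distributed on $\mathcal U(n)$ (left translation invariance), and they are independent because they are obtained from the independent family $\{\tilde U^{(i)}\}$ by a bijective reindexing $m \mapsto \sigma^{-1}(m)$ together with blockwise left multiplication; hence unconditionally $(B_m)_{m=1}^k$ is an i.i.d.\ Haar family. Therefore $U\tilde V = \sum_m \sqrt{w_m}\, e_m \otimes B_m$ has law $\mu_w$.

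The only genuinely delicate point across all three parts is making sure that "applying a fixed unitary blockwise or permuting the blocks" preserves the joint independence of the stack, not merely the marginal Haar laws; I would spell this out once (measurable bijections of a product space induced coordinatewise preserve product measures, and left/right translations are such maps on each $\mathcal U(n)$ factor) and then reuse it. Everything else is index manipulation of the form $PU' = \hat U' P$ and reindexing a finite sum, which I would keep brief. I should also note the harmless typo that the proposition writes $\mu_\text{block}$ in its statement while the surrounding text uses $\mu_w$; I would simply use $\mu_w$ throughout.
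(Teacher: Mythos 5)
Your proof is correct on all three parts, and your index bookkeeping (in particular the identity $PU' = (U' \oplus I_{(k-1)n})P$ and the reindexing $m = \sigma(j)$ together with the observation that a fixed permutation of an i.i.d.\ family remains i.i.d.) is exactly what is needed. The paper explicitly leaves these proofs to the reader, so there is no reference argument to compare against; the one discrepancy you flag — that the statement names $\mu_\text{block}$ but then uses $\mu_w$ in items (2) and (3) — is indeed a typo in the paper, and your choice to use $\mu_w$ throughout is the right reading.
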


One can easily discriminate statistically between the two measures by computing moments or covariances for different matrix entries. In what follows, for the sake of simplicity, we shall consider the equiprobability vector $w_*=(1/k, \ldots, 1/k)$.
Let us start by computing the moments of a single matrix element. It is well known (see \cite{hiai-petz}) that, for a Haar unitary matrix $U$ of size $n$, one has
$$\mathbb E |U_{11}|^{2p} = \binom{n+p-1}{n-1}^{-1}.$$
It follows that, for the two ensembles we consider, we have
\begin{align*}
\mathbb E_\text{Haar}  |V_{11}|^{2p} &= \binom{kn+p-1}{kn-1}^{-1},\\
\mathbb E_{w_*}  |V_{11}|^{2p} &= k^{-p} \binom{n+p-1}{n-1}^{-1}.
\end{align*}

Although the above expressions agree at $p=1$, they are different at $p=2$, showing a statistical difference between the two ensembles. 

More striking examples come from covariance computations: matrix elements $V_{11}$ and $V_{1, n+1}$ are independent under $\mu_{w_*}$, while this is obviously not true for $\mu_\text{Haar}$, see \cite[Proposition 4.2.3]{hiai-petz}.

\section{Product of conjugate channels with bounded output dimension}
\label{sec:fixed-k}

We start by representing random unitary channels (and the corresponding complementary channel) in the graphical formalism we introduced. In terms of the random block-Haar map $V$, the channel has the following form:
$$\Phi(X) = \sum_{i=1}^k w_i U_i X U_i^* = \mathrm{Tr}_k ((\sqrt W \otimes \mathrm{I}_n) VXV^*(\sqrt{W} \otimes \mathrm{I}_n)),$$
where $X \in M_n(\C)$ is the input matrix and $\tilde V = (\sqrt{W} \otimes \mathrm{I}_n) V$ is the isometry in the Steinespring picture. As before, we define the weighting matrix $W = \mathrm{diag}(w_1, w_2,\ldots w_k)$ and $V$ is obtained by stacking the unitary matrices $U_i$ one on top of the other, as in equation \eqref{eq:V-def}. Graphically, the diagram corresponding to the channel $\Phi$ is presented in Figure \ref{fig:Phi}, whereas the complementary channel is depicted in Figure \ref{fig:Phi-c}.

Next, we want to describe the limiting output eigenvalues of a fixed input state going through a random unitary channel. We are interested in the following $k^2 \times k^2$ random matrix:
\be
Z_n =[\Phi^C \otimes \overline{\Phi}^C] (\psi_n\psi_n^* ) 
\ee
Here, $\psi_n \in \mathbb C^n \otimes \mathbb C^n$ is a fixed input vector for each $n$; notice that we are considering only rank-one inputs, since these states are known to yield minimal entropy outputs.

\begin{figure}[htbp]
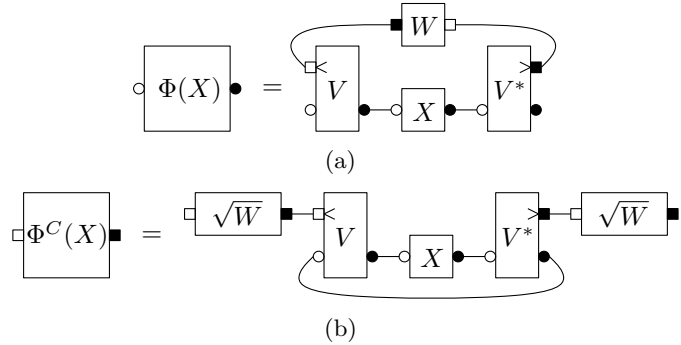

\subfigure[]{\label{fig:Phi}\includegraphics{Phi.eps}}\qquad\qquad
\subfigure[]{\label{fig:Phi-c}\includegraphics{Phi-c.eps}}
\caption{A random unitary channel and its complementary.} 
\end{figure}

To represent the input $\psi_n$ in the graphical calculus, we add $A_n$ and $A_n^*$ boxes on the wires of the Bell input, as in Figure \ref{AAstar}. 
\begin{figure}[htbp]
 \begin{center}
  \includegraphics[]{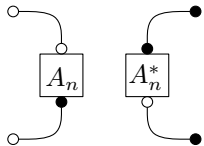} 
 \end{center}
 \caption{Generalized Bell states are used as inputs.} 
 \label{AAstar}
\end{figure}

Algebraically, we consider a sequence of inputs
\be\label{input-formula}
\psi_n = \sum_{i,j =1}^{n} a_{ij} e_i \otimes e_j,
\qquad a_{ij} \in \C.
\ee
In the matrix version, this reads
\be
A_n = \sum_{i,j =1}^{n} a_{ij} e_i \otimes e_j^*.
\ee
We require the normalization relation $\|\psi_n\|=1$, which is equivalent to 
\be 
\trace [A_n A_n^*] =1.
\ee

The input vectors $\psi_n$ generalize Bell states
$$\phi_n = \frac{1}{\sqrt{n}} \sum_{i=1}^n e_i \otimes e_i,$$
which correspond to the trivial choice $A_n = \mathrm{I}_n / \sqrt{n}$.

\subsection{Calculation of the limiting eigenvalues} 
In order to define \emph{well-behaved} inputs, 
as in \cite{cfn1},
we introduce two assumptions on the asymptotic behavior of the sequence of input states $A_n$.
\\
{\bf Assumption 1:}
\be\label{eq:ass1}
\frac{\trace \left[ A_n \right] }{\sqrt{n}} = m + O\left(\frac{1}{n^2}\right)
\ee
for some $m\in\C$.
Note that a similar relation holds for $A_n^*$, with $\bar m$ replacing $m$. Note that one has 
$$m = \lim_{n \to \infty} \langle \psi_n, \phi_n \rangle,$$
so that one can say that $m$ (or rather $|m|^2$) measures the overlap between the input state $\psi_n$ and the Bell state $\phi_n$.
\\
{\bf Assumption 2:} 
\begin{equation}\label{eq:ass2}
\|A_n\|_\infty = O\left(\frac{1}{\sqrt{n}}\right)
\end{equation}

Recall that the empirical eigenvalue distribution of a self-adjoint matrix $Z \in M_{k^2}(\mathbb C)$ is the probability measure
$$k^{-2} \sum_{i=1}^{k^2} \delta_{\lambda_i},$$
where $\lambda_1, \ldots, \lambda_{k^2}$ are the eigenvalues of $Z$.

Before we state our result, we introduce one more notation, essential to what follows. 
\begin{definition}\label{def:S}
Let $S:\mathbb R^2 \times \mathbb R^k \to \mathbb R^k$ be the function defined by
$$S(x,y; w) = \mathrm{spec}^\downarrow(H_\Sigma(x,y;w)),$$
where $\mathrm{spec}^\downarrow$ denotes the ordered spectrum of a self-adjoint matrix and 
\begin{equation}\label{eq:H-Sigma}
\forall i,j \in [k], \qquad H_\Sigma(i,j)=
\begin{cases}
(x+y)w_i^2 \qquad &\text{ if } i=j\\
y w_i w_j \qquad &\text{ if } i \neq j.
\end{cases}
\end{equation}
\end{definition}

\begin{theorem}
\label{thm:bell-phenomenon}
Under the assumptions \eqref{eq:ass1} and \eqref{eq:ass2} above, the empirical eigenvalue distribution of the matrix $Z_n$ converges \emph{almost surely}, as $n \rightarrow \infty$, to the probability measure:
\be\label{limitdistUbar}
\frac{1}{k^2} \left [\sum_{\substack{i,j=1\\ i \neq j }}^k \delta_{w_iw_j} + \sum_{i=1}^k \delta_{s_i} \right],
\ee
where $s = S(1-|m|^2, |m|^2 ; w)$. 
\end{theorem}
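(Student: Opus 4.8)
The strategy is to compute the asymptotic moments $\lim_{n\to\infty}\E\,\trace(Z_n^p)$ for every $p\geq 1$ via the graphical Weingarten calculus of Theorem \ref{thm:V-graphical-calculus}, identify the limit with the $p$-th moment of the measure \eqref{limitdistUbar}, and then upgrade convergence in expectation to almost sure convergence of the empirical eigenvalue distribution by a standard variance estimate plus Borel--Cantelli. Since $Z_n$ has fixed size $k^2$, this is equivalent to showing that the (random) eigenvalues converge almost surely to the deterministic list $\{w_iw_j : i\neq j\}\cup\{s_i\}$; the moment method is the cleanest route because the graphical calculus produces moments directly.

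First I would draw the diagram for $\trace(Z_n^p)$: it contains $p$ copies each of the boxes $V,\bar V$ (from $\Phi^C$), their conjugates (from $\overline{\Phi^C}$), the weight matrices $W$, the duplication boxes attached to the $\C^k$ labels, and $2p$ copies each of $A_n, A_n^*$ on the Bell wires. Applying Theorem \ref{thm:V-graphical-calculus} expands this into a sum over pairs of permutations $(\alpha,\beta)$ acting on $2p$ points (the $p$ factors from $\Phi^C$ and $p$ from $\overline{\Phi^C}$), with weight $\tilde\Wg(n,\alpha,\beta)$, which by Proposition \ref{prop:equivalent-modified-Wg} is $n^{-2p-|\alpha^{-1}\beta|}(\Mob(\alpha^{-1}\beta)+O(n^{-2}))$ when $\alpha,\beta$ respect the level sets of the colour functions $l,l'$. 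Each resulting diagram evaluates to a product of: a power of $n$ coming from loops in the round ($\C^n$) wires (each loop contributing $n$, with the $A_n$ boxes feeding in factors governed by Assumptions 1 and 2 — closed $A_nA_n^*$ loops giving $\trace(A_nA_n^*)=1$ and loops that pick up $\trace(A_n)/\sqrt n \to m$ or its conjugate), and a factor in the square ($\C^k$) wires coming from the $W$ boxes and the duplication operators $T,T^*$, which is a polynomial in the $w_i$.

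The combinatorial heart is to determine which $(\alpha,\beta)$ survive in the limit $n\to\infty$. The $\C^n$-loop count is bounded above using the distance lemma (Lemma \ref{lem:S_p}): the total power of $n$ is $(\text{loops})-2p-|\alpha^{-1}\beta|$, and the standard geodesic argument forces the dominant terms to satisfy $\alpha=\beta$ with $\alpha$ on a geodesic from $\id$ to the permutation encoding the trace structure — so the surviving pairs are indexed by non-crossing partitions, exactly as in \cite{cfn1}. For those terms the $\C^n$ contribution becomes a product of $1$'s, $|m|^2$'s and $(1-|m|^2)$'s, while the $\C^k$ contribution, after carefully tracking how the duplication operators force the colour indices around each block of the partition to be equal or summed, collapses to $\trace$ of a word in the entries of the matrix $H_\Sigma(1-|m|^2,|m|^2;w)$ of Definition \ref{def:S}. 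The upshot is that $\lim_n \E\,\trace(Z_n^p)$ equals the sum over non-crossing partitions reproducing $\trace$ of products of $H_\Sigma$ blocks, and one checks — this is a short linear-algebra identity — that this is precisely $\sum_{i\neq j}(w_iw_j)^p + \sum_i s_i^p = \trace(\mathrm{diag}(\{w_iw_j\}_{i\neq j})) + \trace(H_\Sigma^p)$, matching the $p$-th moment of \eqref{limitdistUbar}.

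The main obstacle I anticipate is the bookkeeping in the $\C^k$ sector: the duplication boxes mean the square label is connected by \emph{both} $\alpha$ and $\beta$, so the colour constraints interact non-trivially with the $\C^n$ geodesic structure, and one must verify that on the dominant (non-crossing) terms this interaction exactly reconstructs the off-diagonal-versus-diagonal dichotomy of $H_\Sigma$ — i.e. that blocks of the partition joining a $\Phi^C$-factor to an $\overline{\Phi^C}$-factor of a different colour contribute $y\,w_iw_j$ while same-colour or ``reflexive'' blocks contribute $(x+y)w_i^2$. A secondary but routine point is confirming that the contribution of the $k^2-k$ eigenvalues $w_iw_j$ really does separate out cleanly from the $H_\Sigma$ part rather than mixing; this should follow because those terms come from permutations that pair each $\Phi^C$ box with the $\overline{\Phi^C}$ box sitting at a \emph{different} tensor leg with no Bell overlap, which is forced by the off-diagonal structure. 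Once the expectation limit is in hand, the almost-sure strengthening is immediate: $\mathrm{Var}\,\trace(Z_n^p)=O(n^{-2})$ by an analogous (subleading) Weingarten estimate, so Borel--Cantelli applies, and since there are finitely many eigenvalues, almost-sure convergence of all moments gives almost-sure convergence of the empirical distribution.
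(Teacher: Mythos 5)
Your overall strategy---moment method via the block-Weingarten graphical calculus followed by a variance estimate and Borel--Cantelli---is indeed the one the paper uses, and several of your technical observations are on the right track (the form of the Weingarten weight, the role of Assumptions 1 and 2 in controlling the $A_n$ necklaces, the reduction to a bound of the form $|\alpha|+|\alpha^{-1}\beta|+|\beta^{-1}\delta|\geq p$). However, there is a genuine and fatal error in the combinatorial heart of the argument. You claim that the dominant $(\alpha,\beta)$ are those with $\alpha=\beta$ on a geodesic from $\id$. This is not what the exponent bound forces here. The $n$-exponent of the term indexed by $(\alpha,\beta)$ is $p-(|\alpha|+|\alpha^{-1}\beta|+|\beta^{-1}\delta|)$, which vanishes precisely when $\id\to\alpha\to\beta\to\delta$ is a geodesic in the Cayley graph of $\mathcal{S}_{2p}$; since $\delta$ is a product of $p$ disjoint transpositions $\tau_i=(i^T,i^B)$, this means $\alpha=\prod_{i\in A}\tau_i$ and $\beta=\prod_{i\in B}\tau_i$ with $A\subseteq B\subseteq[p]$, and the inclusion may be strict. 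The pairs with $A\subsetneq B$ contribute at the same leading order with a sign $\Mob(\alpha^{-1}\beta)=(-1)^{|B\setminus A|}$, and without them the computation fails already at $p=1$: the three geodesic pairs $(\id,\id)$, $(\id,\delta)$, $(\delta,\delta)$ contribute $1-|m|^2\trace W^2+|m|^2\trace W^2=1$ as they must, whereas keeping only $\alpha=\beta$ gives $1+|m|^2\trace W^2\neq 1$. The constraint $\alpha=\beta$ would be correct in the regime where $k$ grows with $n$ (Section~\ref{sec:k-sim-n} of the paper), where the $\mathbb{C}^k$-wires also contribute powers of $n$ and tighten the geodesic; here $k$ is fixed, so there is no such tightening, and your claim imports the wrong combinatorics.

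Beyond that, your "short linear-algebra identity" needs to be spelled out, and it is precisely where the $(-1)^{|B\setminus A|}$ signs you discarded do the work: the sum over $A\subseteq B$ of $f_W(\alpha,\beta)\,|m|^{2|B|}(-1)^{|B\setminus A|}$ must be recognized, after resummation, as $\trace(H^p)$ for $H=(1-|m|^2)\,W\otimes W+|m|^2 P_{\tilde W}$, whose spectrum splits into the off-diagonal block $\{w_iw_j: i\neq j\}$ (with $x=1-|m|^2$, these are rescaled versions that combine with the $[\trace W^p]^2-\trace W^{2p}$ terms) and the $k\times k$ diagonal block $H_\Sigma$. Absent the alternating signs this resummation does not close, so the eigenvalue identification with $S(1-|m|^2,|m|^2;w)$ cannot go through. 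In short: the proposal's plan is right but the stated dominance criterion is incorrect and would yield wrong moments; you need the full geodesic $\id\to\alpha\to\beta\to\delta$ with the attendant M\"obius signs.
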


Before we prove this theorem, let us state some of its corollaries and analyze the limit entropy of the matrix $Z_n$ (which is the entropy of the probability vector appearing in the conclusion of the theorem) as a function of the parameters $|m|^2$ and $w_i$.

We analyze first the ``extremal'' cases for the weight vector $w$. 
\begin{corollary}
In the case where the weighting vector is uniform,  $W=\mathrm{I}/k$, the $S$ function can be evaluated to give
\be
s_1 = \frac{|m|^2}{k} + \frac{1-|m|^2}{k^2} ; \qquad s_i = \frac{1-|m|^2}{k^2} \quad (2 \leq i \leq k).
\ee
This implies that the output state has asymptotically the following eigenvalues:
\begin{itemize}
\item $\frac{|m|^2}{k} + \frac{1-|m|^2}{k^2}$, with multiplicity one;
\item $\frac{1-|m|^2}{k^2}$ with multiplicity $k-1$;
\item $\frac{1}{k^2}$, with multiplicity $k^2-k$.
\end{itemize}
The entropy of the probability vector
$$\left( \frac{|m|^2}{k} + \frac{1-|m|^2}{k^2}, \underbrace{\frac{1-|m|^2}{k^2}, \ldots, \frac{1-|m|^2}{k^2}}_{k-1 \text{ times}} , \underbrace{\frac{1}{k^2}, \ldots, \frac{1}{k^2}}_{k^2-k \text{ times}} \right)$$
is a decreasing function of $|m|^2$, the asymptotic overlap between the input vector $\psi_n$ and the Bell state $\phi_n$.
\end{corollary}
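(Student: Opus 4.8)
The plan is to prove the statement in two essentially independent steps: an explicit diagonalization of the matrix $H_\Sigma$ from Definition \ref{def:S}, which when fed into Theorem \ref{thm:bell-phenomenon} produces the eigenvalue list, followed by a one-variable calculus argument for the entropy monotonicity. For the first step I would substitute $x=1-|m|^2$, $y=|m|^2$ and $w=w_*=(1/k,\dots,1/k)$ into \eqref{eq:H-Sigma}. Since $x+y=1$, every diagonal entry of $H_\Sigma$ equals $1/k^2$ and every off-diagonal entry equals $|m|^2/k^2$, so that
$$H_\Sigma=\frac{1}{k^2}\Bigl((1-|m|^2)\,\mathrm{I}_k+|m|^2\,J_k\Bigr),$$
where $J_k$ is the all-ones matrix of size $k$. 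The spectrum of $J_k$ is $k$ (simple, with eigenvector the all-ones vector) together with $0$ of multiplicity $k-1$, so the spectrum of $H_\Sigma$ consists of $\tfrac{1}{k^2}\bigl(1+(k-1)|m|^2\bigr)=\tfrac{|m|^2}{k}+\tfrac{1-|m|^2}{k^2}$ with multiplicity one and $\tfrac{1-|m|^2}{k^2}$ with multiplicity $k-1$. As $|m|^2\geq 0$ the first value dominates the second, so ordering decreasingly gives exactly the claimed $s_1,\dots,s_k$. Inserting $w_iw_j=1/k^2$ (over the $k^2-k$ ordered pairs $i\neq j$) together with these $s_i$ into the limiting measure \eqref{limitdistUbar} and collecting equal atoms yields the stated eigenvalues and multiplicities; one checks directly that the multiplicities sum to $k^2$ and the eigenvalues to $1$, as they must since $Z_n$ has unit trace.

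For the second step, set $t=|m|^2\in[0,1]$, $a(t)=\tfrac{t}{k}+\tfrac{1-t}{k^2}$ and $b(t)=\tfrac{1-t}{k^2}$, so that the probability vector in the statement is $\bigl(a,\underbrace{b,\dots,b}_{k-1},\underbrace{1/k^2,\dots,1/k^2}_{k^2-k}\bigr)$ and its Shannon entropy equals $H(t)=-a\log a-(k-1)\,b\log b+C$ with $C$ independent of $t$. Differentiating and using $a'(t)=\tfrac{k-1}{k^2}$ and $b'(t)=-\tfrac{1}{k^2}$, the contributions of the additive constant in $\frac{d}{dt}(u\log u)=u'(\log u+1)$ cancel (because $a+(k-1)b=1/k$ is constant in $t$), and one is left with
$$H'(t)=\frac{k-1}{k^2}\,\log\frac{b(t)}{a(t)}.$$
Since $a(t)-b(t)=t/k\geq 0$ we have $b(t)\leq a(t)$, hence $H'(t)\leq 0$ on $[0,1]$ (strictly negative for $t>0$ when $k\geq 2$), which proves that the entropy decreases in $|m|^2$.

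I do not expect a genuine obstacle here: both steps are short computations. The only points that require a little attention are checking that $s_1$ is genuinely the largest among the $s_i$ (so that it occurs with multiplicity one and is not confused with the $s_i$, $i\geq 2$), keeping the bookkeeping of atoms and multiplicities straight when superposing the $w_iw_j$-part and the $s_i$-part of \eqref{limitdistUbar}, and verifying that the cancellation in $H'(t)$ is genuine so that the sign is governed solely by $\log(b/a)$.
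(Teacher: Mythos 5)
Your proposal is correct, and it fills in the details the paper leaves to the reader (the corollary is stated without proof). The diagonalization via $H_\Sigma=\tfrac{1}{k^2}\bigl((1-|m|^2)\mathrm{I}_k+|m|^2 J_k\bigr)$ and the observation that $a'+(k-1)b'=0$ cancels the non-logarithmic terms in $H'(t)$, leaving $H'(t)=\tfrac{k-1}{k^2}\log\tfrac{b}{a}\leq 0$, are exactly the natural computations one would perform here.
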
 
\begin{corollary}
In the case where the weighting vector is trivial, $w=(1,0,\ldots, 0)$ the channel $\Phi$ is a unitary conjugation and the output matrix $Z_n$ is a pure state of null entropy. 
\end{corollary}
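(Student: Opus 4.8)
\emph{Proof proposal.} The plan is to compute $Z_n$ directly; when $w=(1,0,\ldots,0)$ the channel degenerates completely, so no random-matrix asymptotics or Weingarten calculus are needed. The first step is to identify $\Phi^C$. Here $\Phi(X)=U_1XU_1^*$ is a unitary conjugation, and substituting into $(\Phi^C(X))_{ij}=\sqrt{w_iw_j}\,\trace[U_iXU_j^*]$ kills every entry with $i\neq 1$ or $j\neq 1$ and leaves $(\Phi^C(X))_{11}=\trace[U_1XU_1^*]=\trace[X]$. Thus $\Phi^C(X)=\trace[X]\,e_1e_1^*$ with $e_1\in\C^k$; that is, $\Phi^C$ is the constant channel that discards its input and outputs the pure state $e_1e_1^*\in M_k(\C)$. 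The identical computation, using that conjugation preserves the trace, gives $\overline{\Phi}^C(X)=\trace[X]\,e_1e_1^*$ as well.

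The second step is to feed this into $Z_n=[\Phi^C\otimes\overline{\Phi}^C](\psi_n\psi_n^*)$. By linearity, decomposing $\psi_n\psi_n^*=\sum_\alpha X_\alpha\otimes Y_\alpha$ into elementary tensors,
\[
Z_n=\sum_\alpha\Phi^C(X_\alpha)\otimes\overline{\Phi}^C(Y_\alpha)=\Bigl(\sum_\alpha\trace[X_\alpha]\trace[Y_\alpha]\Bigr)\,(e_1\otimes e_1)(e_1\otimes e_1)^*=\trace[\psi_n\psi_n^*]\,(e_1\otimes e_1)(e_1\otimes e_1)^*.
\]
Since $\psi_n$ is a unit vector, $\trace[\psi_n\psi_n^*]=1$ (equivalently $\trace[A_nA_n^*]=1$), so $Z_n=(e_1\otimes e_1)(e_1\otimes e_1)^*$ is a rank-one projector and $H(Z_n)=0$.

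One can also obtain the statement from Theorem \ref{thm:bell-phenomenon}: for $w=(1,0,\ldots,0)$ the matrix $H_\Sigma(x,y;w)$ of Definition \ref{def:S} has the single nonzero entry $H_\Sigma(1,1)=x+y$, so $S(1-|m|^2,|m|^2;w)=(1,0,\ldots,0)$ for every $m$, while $w_iw_j=0$ whenever $i\neq j$; hence the limiting measure \eqref{limitdistUbar} collapses to $\frac{1}{k^2}\delta_1+(1-k^{-2})\delta_0$, which is the spectrum of a rank-one projector. There is no real obstacle in this corollary; the only points that need a moment's attention are that $\overline{\Phi}^C$, like $\Phi^C$, annihilates all of its input except the trace (so each tensor factor contributes the same rank-one block $e_1e_1^*$), and that the normalization of $\psi_n$ is used at the very last step. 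The conclusion fits the intuition that a unitary channel leaks no information to its environment, so its complement is a constant channel and the construction trivialises.
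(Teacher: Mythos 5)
Your proof is correct, and in fact it is slightly stronger than the corollary as stated, because your direct computation shows that $Z_n$ is \emph{exactly} the fixed rank-one projector $(e_1\otimes e_1)(e_1\otimes e_1)^*$ for every finite $n$ and every realization of the unitaries, with no asymptotics involved. The paper presents this as a corollary of Theorem \ref{thm:bell-phenomenon}, so the intended argument is presumably your second one: plugging $w=(1,0,\ldots,0)$ into Definition \ref{def:S} collapses $H_\Sigma$ to a single nonzero entry equal to $x+y=1$, so $s=(1,0,\ldots,0)$ and all cross-terms $w_iw_j$ ($i\ne j$) vanish, reducing \eqref{limitdistUbar} to $k^{-2}\delta_1+(1-k^{-2})\delta_0$, the empirical distribution of a rank-one projector in $M_{k^2}$. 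Your primary route buys something extra: it bypasses the moment method, the Weingarten asymptotics, and even Assumptions 1--2 on the input $A_n$, and it directly identifies the output vector rather than only its limiting spectrum. The key observation — that when $w_i=0$ for $i\ge 2$ the complementary channel $\Phi^C$ becomes the constant map $X\mapsto\trace[X]\,e_1e_1^*$ (the environment of a unitary channel learns nothing), and likewise for $\overline{\Phi}^C$, so the tensor product is constant as well — is exactly right; your bookkeeping via elementary tensors and the identity $\sum_\alpha\trace[X_\alpha]\trace[Y_\alpha]=\trace[\psi_n\psi_n^*]=1$ is clean. No gaps.
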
 

We now turn to the ``extreme" values of the parameter $m$, $|m|=1$ (the input state is, up to a phase, a Bell state) and $m=0$ (the input state is orthogonal to the Bell state).

\begin{corollary}
In the case where the input state is equal, up to a phase, to the Bell state, i.e.\ $|m|^2=1$, the matrix $H_\Sigma$ is, up to a constant, a rank one projector and thus 
$$s_1 = \sum_{i=1}^k w_i^2, \qquad s_2 = \cdots = s_k = 0.$$ 
\end{corollary}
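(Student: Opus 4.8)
The plan is to unfold the definitions; there is almost nothing to do beyond a short linear-algebra observation. By Theorem~\ref{thm:bell-phenomenon} the vector $s$ in the conclusion of the corollary is $s = S(1-|m|^2,|m|^2;w)$, so the hypothesis $|m|^2 = 1$ forces $s = S(0,1;w)$. First I would substitute $x=0$, $y=1$ into the defining formula \eqref{eq:H-Sigma} for $H_\Sigma$: the diagonal entries become $(x+y)w_i^2 = w_i^2$ and the off-diagonal entries become $y\,w_iw_j = w_iw_j$, so in fact \emph{every} entry of $H_\Sigma$ is given by the single expression $H_\Sigma(i,j) = w_iw_j$. Equivalently, $H_\Sigma = w w^\top$, the outer product of the weight vector $w = (w_1,\ldots,w_k)^\top$ with itself.

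Next I would observe that $ww^\top$ is a positive semidefinite matrix of rank exactly one: it is nonzero because the normalization $\sum_i w_i = 1$ guarantees $w \neq 0$, and its range is the line $\R w$. Writing $P$ for the orthogonal projector onto $\R w$, one has $H_\Sigma = \left(\sum_{i=1}^k w_i^2\right) P$, which is precisely the statement that $H_\Sigma$ is, up to the constant $\sum_i w_i^2$, a rank one projector. Consequently the spectrum of $H_\Sigma$ consists of the single nonzero eigenvalue $\sum_{i=1}^k w_i^2$ — the nonzero eigenvalue of a rank-one positive semidefinite matrix equals its trace — together with $0$ with multiplicity $k-1$.

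Applying $\mathrm{spec}^\downarrow$ as in Definition~\ref{def:S}, and using $0 \leq \sum_i w_i^2$, the ordered spectrum is $s_1 = \sum_{i=1}^k w_i^2$ and $s_2 = \cdots = s_k = 0$, which is the asserted formula. There is no real obstacle here; the only point meriting a word is the non-degeneracy $w \neq 0$, which is immediate from $\sum_i w_i = 1$. As a sanity check, for the uniform weight vector $w_i = 1/k$ this gives $s_1 = k\cdot k^{-2} = 1/k$, in agreement with the expression $|m|^2/k + (1-|m|^2)/k^2$ from the uniform-weighting corollary evaluated at $|m|^2 = 1$.
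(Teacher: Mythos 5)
Your argument is correct and is exactly the one implicit in the paper's statement of the corollary: substituting $x=0$, $y=1$ into Definition \ref{def:S} yields $H_\Sigma = w w^\top$, a rank-one positive semidefinite matrix whose unique nonzero eigenvalue equals its trace $\sum_i w_i^2$. Nothing to add.
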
 
\begin{corollary}
In the case where the input state is orthogonal to the Bell state, i.e.\ $m=0$, the matrix $H_\Sigma$ is diagonal and thus 
$$s_i=w_i^2, \qquad \forall i \in [k].$$
The limiting eigenvalue vector of the output state $Z_n$ is $w \otimes w$ and its entropy is thus 
$$\lim_{n \to \infty} H(Z_n) = H(w \otimes w) = 2H(w).$$
\end{corollary}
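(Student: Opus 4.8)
The plan is to read the statement off Theorem \ref{thm:bell-phenomenon} by specializing the parameter to $m=0$. First I would substitute $|m|^2=0$ into the conclusion of that theorem: the spectral data is then $s = S(1,0;w)$, and by Definition \ref{def:S} the matrix $H_\Sigma(1,0;w)$ has diagonal entries $(1+0)w_i^2 = w_i^2$ and off-diagonal entries $0\cdot w_iw_j = 0$. Hence $H_\Sigma(1,0;w) = \mathrm{diag}(w_1^2,\ldots,w_k^2)$ is diagonal and $S(1,0;w) = \mathrm{spec}^\downarrow(H_\Sigma(1,0;w))$ is merely the decreasing rearrangement of $(w_1^2,\ldots,w_k^2)$; in particular the multiset $\{s_i\}$ equals $\{w_i^2 : i \in [k]\}$, which is the first assertion.

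Next I would combine the two sums in \eqref{limitdistUbar}. With $s_i = w_i^2$, the limiting measure becomes
\[
\frac{1}{k^2}\left[\sum_{\substack{i,j=1\\ i\neq j}}^k \delta_{w_iw_j} + \sum_{i=1}^k \delta_{w_i^2}\right] = \frac{1}{k^2}\sum_{i,j=1}^k \delta_{w_iw_j},
\]
which is precisely the empirical eigenvalue distribution of the deterministic probability vector $w\otimes w \in \R^{k^2}$, whose entries are the products $w_iw_j$. Since $k$ is fixed, $Z_n$ is a $k^2\times k^2$ matrix and both the random measure and its limit are uniform measures on $k^2$-point multisets; weak convergence of such measures is equivalent to convergence of the sorted eigenvalue vector $\mathrm{spec}^\downarrow(Z_n)$, so the almost sure convergence in Theorem \ref{thm:bell-phenomenon} upgrades to: the eigenvalue vector of $Z_n$ converges almost surely to $w\otimes w$.

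Finally, for the entropy I would invoke that $p\mapsto H(p) = -\sum p_i\log p_i$ is continuous on the probability simplex of $\R^{k^2}$ (with the convention $0\log 0 = 0$), so $\lim_{n\to\infty} H(Z_n) = H(w\otimes w)$ almost surely, the limit being deterministic. The identity $H(w\otimes w) = 2H(w)$ is the additivity of Shannon entropy under tensor products: expanding $-\sum_{i,j} w_iw_j\log(w_iw_j) = -\sum_{i,j}w_iw_j(\log w_i + \log w_j)$ and using $\sum_i w_i = 1$ in each of the two resulting sums yields $H(w)+H(w)$.

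There is no genuine obstacle here: the statement is a direct specialization of Theorem \ref{thm:bell-phenomenon} together with an elementary computation. The only point meriting a sentence of justification is the passage from convergence of the empirical eigenvalue distribution to convergence of the sorted eigenvalue vector, and hence of the von Neumann entropy; this is immediate because the ambient dimension $k^2$ is fixed and all measures involved are uniform on $k^2$-point multisets.
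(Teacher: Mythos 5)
Your proposal is correct and is exactly the intended argument: substitute $|m|^2=0$ into Theorem~\ref{thm:bell-phenomenon}, observe that $H_\Sigma(1,0;w)$ is diagonal so the two eigenvalue families merge into $\{w_iw_j : i,j\in[k]\} = w\otimes w$, and use continuity of entropy in fixed dimension $k^2$ plus additivity of Shannon entropy under tensor products. The remark about upgrading convergence of the empirical distribution to convergence of the sorted eigenvalue vector (valid since $k$ is fixed) is the right small point to make explicit.
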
 

\begin{proof}[Proof of the Theorem \ref{thm:bell-phenomenon}]
The proof uses the moment method and consists of two steps. First, we compute the asymptotic moments of the output density matrix $Z_n$ and then, by a Borel-Cantelli argument, we deduce the almost sure convergence of the spectral distribution and of the eigenvalues. 

{\bf Step 1:} We calculate the limit moments of $Z_n$, using the graphical calculus, see Figure \ref{fig:Zn}. Here, $\includegraphics{circle_w.eps}$ and $\includegraphics{circle_b.eps}$ correspond to the $n$-dimensional space, and $\includegraphics{square_w.eps}$ to the $k$-dimensional output space.

\begin{figure}[htbp]
 \begin{center}
  \includegraphics{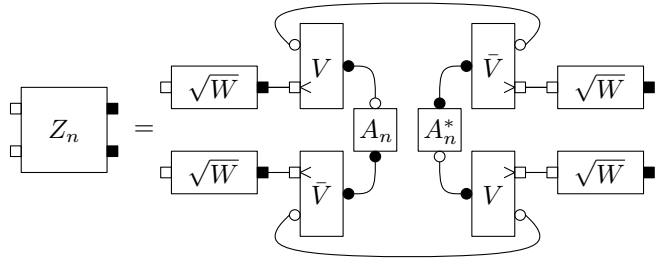} 
 \end{center}
 \caption{The diagram for the output state $Z_n$.} 
 \label{fig:Zn}
\end{figure}

In order to compute the $p$-th moment of the matrix $Z_n$, we use the graphical calculus on a diagram obtained by connecting $p$ copies of the diagram in Figure \ref{fig:Zn} in a tracial manner. For fixed $p \in \N$ the Weingarten sum in Theorem \ref{thm:V-graphical-calculus} are indexed by pairs of permutations $(\alpha, \beta) \in \S_{2p}^2$. We label the $V$ and $\ol V$ boxes in the following manner: 
$1^T, 2^T, \ldots, p^T$ for the $V$ boxes of the first channel (T as ``top'') and 
$1^B, 2^B, \ldots, p^B$ for the $V$ boxes of the second channel (B as ``bottom''). 
We shall also order the labels as $\{1^T, 2^T, \ldots, p^T, 1^B, 2^B, \ldots, p^B\} \isom \{1, \ldots, 2p\}$. 
A removal $r=(\alpha, \beta) \in \S_{2p}^2$ of the $V$ and $\ol V$ boxes connects the decorations in the following way:
\begin{enumerate} 
\item the round white decoration \includegraphics{circle_w.eps} of the $i$-th $V$-block is paired with the round white decoration of the $\alpha(i)$-th $\ol V$ block by a wire;
\item the round black decoration \includegraphics{circle_b.eps} of the $i$-th $V$-block is paired with the round black decoration of the $\beta(i)$-th $\ol V$ block by a wire;
\item the square white decoration \includegraphics{square_w.eps} of the $i$-th $V$-block is paired with both the square white decorations of the $\alpha(i)$-th and $\beta(i)$-th $\ol V$ blocks by wires. This double pairing is suggested by the duplication symbol associated to the square label. 
\end{enumerate}
We also introduce two fixed permutations $\gamma, \delta \in \S_{2p}$ which represent wires appearing in the diagram before the graph expansion.
The permutation $\gamma$ represents the initial wiring of the $\includegraphics{square_w.eps}$ decorations (corresponding to the trace operation) and $\delta$ accounts for the wires between the $\includegraphics{circle_b.eps}$ decorations connecting boxes $A$ or $A^*$.
More precisely, for all $i$,
\begin{equation} 
\label{eq:def-gamma-delta}
\gamma(i^T) = (i-1)^T, \quad \gamma(i^B) = (i+1)^B,\quad
and \quad
\delta(i^T) = i^B, \quad \delta(i^B) = i^T.
\end{equation}
After the removal procedure, for each pair of permutations $(\alpha,\beta)$, we obtain a diagram $\mathcal D_{\alpha,\beta}$ consisting of:
\begin{enumerate}
	\item $\includegraphics{circle_w.eps}$-\emph{loops}; $n^{\#(\alpha)}$
	\item $\includegraphics{square_w.eps}$-\emph{nets};	$f_W(\alpha,\beta)$
	\item $\includegraphics{circle_b.eps}$-\emph{necklaces}; $f_A(\beta)$
\end{enumerate}
First, one can easily see that the number of $\includegraphics{circle_w.eps}$-loops is exactly $n^{\# \alpha}$.
Next, for $f_W(\alpha,\beta)$,  
since square labels are connected by wires with the box $T$'s and $T^*$'s, 
the graph they yield is not a collection of loops, but can be more general,
where the boxes $W$'s are "caught in nets" which are made of $T$'s and $T^*$'s.
The general formula for $f_W$ can be found in Lemma \ref{W-in-nets}.  
Finally, the contribution of $\includegraphics{circle_b.eps}$-necklaces depends on the moments of the matrices $A_n$ and is encoded in a function $f_A(\beta)$ (see \cite{cfn1} for a more detailed treatment of a similar situation):
\be
f_A(\beta) &=& \prod_{c \in {\rm Cycle}(\beta^{-1}\delta)} 
\trace \left[ A^{s_{c,1}} \cdots A^{s_{c,\|c\|}}  \right]
\ee
Here, $|c|$ is the number of elements in $c$ and 
$s_{c,1} \ldots ,s_{c,|c|}$ are defined such that
\be 
s_{c,i}=
\begin{cases}
1 & \text{if the $i$th element in the cycle $c$ belongs to $T$}\\
* & \text{if the $i$th element in the cycle $c$ belongs to $B$}
\end{cases}
\ee
Note that the above function $f(\beta)$ is well-defined in spite of the ambiguity of $s_{c,i}$, because of the circular property of the trace.

Therefore, the Weingarten formula in Theorem \ref{thm:V-graphical-calculus} reads
\be \label{moments}
\E \trace [Z_n^p] 
= \sum_{\alpha,\beta \in S_{2p}} n^{\# \alpha } f_W(\alpha,\beta) f_A(\beta) \tilde\Wg(n,\alpha,\beta).
\ee
Using the moment growth assumptions \eqref{eq:ass1}, \eqref{eq:ass2} for the matrices $A_n$, we get that, for all cycle $c$ of $\beta$, we have
\be
&&|\trace \left[ A^{s_{c,1}} \cdots A^{s_{c,|c|}}  \right]| 
\leq 
\| A^{s_{c,1}} \|_\infty \cdots \| A^{s_{c,|c| -1}}\|_\infty \cdot \| A^{s_{c,|c|}} \|_1 \\ \notag
&&\lesssim
\left(\frac{1}{\sqrt{n}} \right)^{\|c\|-1} \cdot\sqrt{n}
= n^{1-|c|/2} 
\qquad \text{as $n \rightarrow \infty$},
\ee
where the notation $f(x) \lesssim g(x)$ means that there exists some constant $C>0$ such that $f(x) \leq Cg(x)$ for $x$ large enough. The above inequality is the only place where Assumption 2 (see (\ref{eq:ass2})) is used. 
Hence it yields the following asymptotic bound for the factor $f(\beta)$;
\be\label{roughbound_f}
|f(\beta)|  \lesssim   n^{\#(\beta^{-1}\delta) -p }.
\ee

Using the equivalent for the (modified) Weingarten function in Proposition \ref{prop:equivalent-modified-Wg}, 
we get (note that the factors depending on the fixed parameter $k$ are hidden in the $\lesssim$ notation)
\bee
 \E \trace [Z_n^p] 
\lesssim \sum_{\alpha,\beta \in S_{2p}} n^{\#\alpha} 
n^{\#(\beta^{-1}\delta) -p} 
n^{-2p - |\alpha^{-1}\beta|} 
\qquad \text{as $n \rightarrow \infty$} .
\label{bound_f} 
\eee
The power of $n$ in the RHS of (\ref{bound_f}) 
is bounded by using Lemma \ref{lem:S_p} as follows.
\bee
2p- |\alpha| +p-|\beta^{-1}\delta| -2p - |\alpha^{-1}\beta| 
= p- (|\alpha| + |\alpha^{-1}\beta| + |\beta^{-1}\delta|) \leq 0
\eee 
Here, equality holds if and only if 
$\id \rightarrow \alpha \rightarrow \beta \rightarrow \delta$ is a geodesic:
\be\label{eq:geodesic-a-b}
\alpha = \prod_{i\in A} \tau_i, \qquad \beta = \prod_{i\in B} \tau_i
\ee
where $\tau_i = (i^T,i^B)$ and $A\subseteq B\subseteq \{1, \ldots,p\}$; we refer to
\cite{cn1} for a proof of this fact.
Importantly, for such geodesic permutations $\beta$, the following asymptotic behavior follows form the first assumption on the growth of the trace of $A_n$:
\bee\label{behaviour_f}
f_A (\beta) =  \left(n|m|^2\right)^{|\beta|}  + O \left(n^{-2}\right).  
\eee
Note that $|\beta|=|B|$.
This implies that the power of $n$  in (\ref{moments}) in fact becomes $0$
for all the $\alpha,\beta$ which satisfy the geodesic condition  
$\id \rightarrow \alpha \rightarrow \beta \rightarrow \delta$:
\be
\#\alpha +  |B| -2p - |\alpha^{-1}\beta| =
2p - |A| + |B|  -2p - |B \setminus A| =0
\ee 
Here,  $|B| = |B \setminus A|+ |A|$.

Hence, by using (\ref{weingartenapprox}), (\ref{mob formula}) and (\ref{behaviour_f}),
we have the following approximation on $\E \trace [Z_n^p] $ 
(note that the estimate on the error order is not necessary here but will be so in Step 2):
\be\label{momentsapprox}
\Ex  \left[Z_n^p\right] 
=  \left(1 + O \left(n^{-2}\right)\right)
\sum_{\id \rightarrow \alpha \rightarrow \beta \rightarrow \delta}
f_W(\alpha,\beta)|m|^{2|B|} 
 (-1)^{|B \setminus A|}  
\ee 
For the above error term, note that Lemma \ref{lem:S_p} implies that 
permutations $(\alpha,\beta)\in S_{2p} \times S_{2p}$ off the geodesic 
make the power of $n$ less by two or more; only even powers are allowed. 

Using Lemma \ref{lem:counting-loops}, we can further process the moment expression
\be 
&&\lim_{n\rightarrow \infty}  \E \trace [Z_n^p] 
= \left(\trace [W^p]\right)^2 \\
&+&\underbrace{\sum_{A=\emptyset, B \not = \emptyset}  \trace \left[W^{2p}\right] |m|^{2|B|}  (-1)^{|B|} }_{(\spadesuit)} 
+\underbrace{\sum_{\emptyset \not =A \subseteq B} 
\prod_{i=1}^{|A|} \trace \left[ W^{2(a_{i+1}-a_i)} \right] |m|^{2|B|}  (-1)^{|B \setminus A|} }_{(\heartsuit)}. \notag
\ee
The multinomial identities:
\begin{align*}
\sum_{\emptyset \subseteq A \subseteq \{1, \ldots, p\}} \!\!\!\! x^{|A|} &= (1+x)^p \\
\sum_{\emptyset \subseteq A \subseteq B \subseteq \{1, \ldots, p\}} \!\!\!\!\!\!\!\! x^{|A|}y^{|B \setminus A|} &= (1+x+y)^p.
\end{align*}
give further calculations
\be
(\spadesuit) = \trace [W^{2p}] \left[\left(1-|m|^2\right)^p -1\right].
\ee
Set $C=B\setminus A$ such that $\emptyset \subseteq C \subseteq [p]\setminus A$ and use Lemma \ref{technical-identity} with $x=1-|m|^2$ and $y=|m|^2$ to get
\be
(\heartsuit) &=& \sum_{ A \not = \emptyset} \prod_{i=1}^{|A|} \trace \left[ W^{2(a_{i+1}-a_i)} \right] 
\sum_{\emptyset \subseteq C \subseteq [p]\setminus A} |m|^{2(|C|+|A|)}  (-1)^{|C|} \\
&=& \sum_{A \not = \emptyset} \prod_{i=1}^{|A|} \trace \left[ W^{2(a_{i+1}-a_i)} \right]
|m|^{2|A|} \left(1-|m|^2 \right)^{p-|A|} \\
&=&\sum_{i=1}^k s_i^p - (1-|m|^2)^p\trace\left[W^{2p} \right],
\ee
where, by Lemma \ref{technical-identity} and Definition \ref{def:S}, the vector $s \in \mathbb R^k$ is given by $s = S(1-|m|^2, |m|^2 ; w)$. Therefore, as a whole,
\be\label{the limit}
\lim_{n\rightarrow \infty}  \E \trace [Z_n^p] 
&=& \left(\trace [W^p]\right)^2 + (\spadesuit) + (\heartsuit)\\
&=& \left(\trace [W^p]\right)^2 - \trace [W^{2p}] 
+\sum_{i=1}^k s_i^p  \\
&=& \sum_{\substack{i,j=1 \\ i \neq j}}^k (w_iw_j)^p + \sum_{i=1}^k s_i^p,
\ee
 expression in which one can recognize the limiting eigenvalues announced in the theorem.

{\bf Step 2:} 
We now move on to prove the almost sure convergence.
Since this part of proof is very similar to that of Theorem 6.3 in \cite{cn1}
or Theorem 3.1 in \cite{cfn1},
we only sketch here the main ingredients. Using Borel-Cantelli Lemma, it is enough to prove that
the covariance series converges:
\bee\label{series}
\sum_{n=1}^\infty \Ex \left[
\left(\trace \left[ Z_n^p\right] - \Ex \trace \left[ Z_n^p \right] \right)^2\right]
= \sum_{n=1}^\infty \Ex \left[
\left(\trace \left[ Z_n^p\right] \right)^2 \right]- \left(\Ex \trace \left[ Z_n^p \right] \right)^2
 < \infty
\eee
which will imply that for all $p\geq 1$   
\be
\trace \left[ Z_n^p\right]   \rightarrow (\ref{the limit}) 
\quad \text{a.e.} \quad \text{as $n \rightarrow \infty$} .
\ee 

Also, note that by Carleman's condition, equation
(\ref{the limit}) uniquely determines the measure as in (\ref{limitdistUbar}).

First, (\ref{momentsapprox}) implies that 
\begin{align}\label{moment^2}
\left(\Ex \trace \left[ Z_n^p \right] \right)^2
=  
\left(k^{-2p} \sum_{\id \rightarrow \alpha \rightarrow \beta \rightarrow \delta}
f_W(\alpha,\beta)|m|^{2|\beta|} 
 (-1)^{|\alpha^{-1}\beta|}\right)^2 
+ O(n^{-2}) 
\end{align}

On the other hand, we use Theorem \ref{thm:V-graphical-calculus} to calculate $\Ex [(\trace \left[ Z_n^p] \right)^2 ]$.
In the diagram we have two identical copies of $\trace Z_n^p$, which amounts to a total of $4p$ pairs of $V$ and $\overline V$ boxes.
As a result, 
removals $(\bar\alpha,\bar\beta)$ are defined for $\bar\alpha,\bar\beta \in S_{4p}$. 
However, importantly those two copies are initially separated. 
Namely, initial wires $\bar \gamma, \bar \delta \in S_{2p}\oplus S_{2p} = S_{4p}$ 
are written as direct sums:
\be
\bar \gamma = \gamma_1 \oplus \gamma_2 
\quad \text{and} \quad 
\bar\delta = \delta_1 \oplus \delta_2,
\ee
where the indices 1 and 2 refer to the first or the second group of $2p$ boxes appearing in the diagram.
Then, as before, we calculate the power of $n$, which is 
\be
2p- (|\bar\alpha| + |\bar\alpha^{-1}\bar\beta| + |\bar\beta^{-1}\bar\delta|) \leq 0.
\ee
Equality holds if and only if 
$\id \rightarrow \bar\alpha \rightarrow \bar\beta \rightarrow \bar\delta$
is a geodesic. 
Moreover, this geodesic condition implies that
$\bar\alpha$ and $\bar\beta$ can be written as
\be
\bar\alpha = \alpha_1 \oplus \alpha_2 
\quad \text{and} \quad
\bar\beta =   \beta_1 \oplus \beta_2
\ee 
Here, 
pairs $(\alpha_1, \beta_1)$ and $(\alpha_2, \beta_2)$ are 
defined as in (\ref{eq:geodesic-a-b}).

Therefore, in the diagram all removals which matter as $n\rightarrow \infty$
keep those two copies separated. 
Also, these removals have the following properties:
\begin{align*}
& f_W(\bar\alpha, \bar\beta )
=f_W(\alpha_1,\beta_1)
\times f_W(\alpha_2,\beta_2), \\
&|\bar\beta| = |\beta_1|+|\beta_2|, \qquad
|\bar\alpha^{-1} \bar\beta |= |\alpha_1^{-1} \beta_1 | + |\alpha_2^{-1} \beta_2 |  
\end{align*}
The first statement says that $f_W$ can be calculated for the each copy independently. Then, for the same reasons as before, we get an approximation with the error of order $1/n^2$:
\begin{align}\label{2ndmoment}
&\Ex  \left(\trace\left[ Z_n^p\right] \right)^2 =  O(n^{-2}) \\
&+  k^{-4p} 
\sum_{\substack{\id \rightarrow \alpha_1 \rightarrow \beta_1 \rightarrow \delta \\
\id \rightarrow \alpha_2 \rightarrow \beta_2 \rightarrow \delta}}
\left[
f_W(\alpha_1,\beta_1) f_W(\alpha_2,\beta_2)
|m|^{2|\beta_1|+2|\beta_2|}
(-1)^{|\alpha_1^{-1} \beta_1 | + |\alpha_2^{-1} \beta_2 |  } \right]. \notag
\end{align}

Finally, we see from \eqref{moment^2} and \eqref{2ndmoment} that
\bee
\Ex \left[\left(\trace \left[ Z_n^p\right] \right)^2 \right]- \left(\Ex \trace \left[ Z_n^p \right] \right)^2
 = O(n^{-2}) 
\eee
which proves (\ref{series}), and finalizes the proof of the theorem. 
\end{proof}

The following rather technical lemmas are needed in the proof of the result above when dealing with nets containing $W$ boxes. 

\begin{lemma}\label{W-in-nets}
For fixed permutations $\alpha, \beta \in \mathcal S_{2p}$
\be
f_W(\alpha,\beta) = \prod_{b \in (\gamma^{-1}\alpha \vee \gamma^{-1}\beta)} \trace W^{|b|}
\ee
Here, we understand the notation $\gamma^{-1}\alpha \vee \gamma^{-1}\beta$ in terms of partitions: both permutations $\gamma^{-1}\alpha$ and $\gamma^{-1}\beta$ naturally induce partitions on $\{1,\ldots,2p\}$, 
and $\vee$ is the join operation on the poset of (possibly crossing) partitions.
Also, $b$ and $|b|$ stand for a block of a partition and its cardinality, respectively. 
\end{lemma}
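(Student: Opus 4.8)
The plan is to track explicitly what the square-labelled ($\mathbb C^k$) wires produce after the removal procedure, and to identify the resulting combinatorial object with the blocks of the join partition $\gamma^{-1}\alpha \vee \gamma^{-1}\beta$. Recall that in the diagram for $Z_n^p$ the $V$ and $\bar V$ boxes each carry a square decoration equipped with a duplication symbol, and the permutation $\gamma$ of \eqref{eq:def-gamma-delta} encodes the initial wiring of these square labels coming from the two partial traces $\mathrm{Tr}_k$. After a removal $(\alpha,\beta)$, the rule (item (3) in the list preceding \eqref{moments}) says that the square white decoration of the $i$-th $V$-block is joined to the square white decoration of \emph{both} the $\alpha(i)$-th and the $\beta(i)$-th $\bar V$-block; the duplication symbols left over are resolved using $T$ and $T^*$. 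First I would unfold the $T,T^*$ operators: since $T = \sum_i e_i^* \cdot (e_i \otimes e_i)$, each surviving duplication vertex forces all wires incident to it to carry the same index, and the $W$ boxes (each contributing a scalar $w_{i_\bullet}$ along the wire it sits on) then get multiplied together along each connected component of the resulting graph on the square labels.

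The key step is the bookkeeping: after resolving duplications, the graph on the square labels is exactly the graph whose connected components are the blocks of the partition generated jointly by the pairing $\gamma$ (initial wires) and the two pairings $\alpha$ and $\beta$ (removal wires) — i.e.\ the components of $\gamma^{-1}\alpha \vee \gamma^{-1}\beta$ viewed as partitions of $\{1,\dots,2p\}$. Here one uses that composing the initial wiring with a removal wiring on a set of $2p$ labels produces the permutation $\gamma^{-1}\alpha$ (resp.\ $\gamma^{-1}\beta$), whose cycles give the partition, and that connecting a label set through \emph{both} of two partitions yields their join. Each block $b$ of this join then carries a single free index $s \in [k]$ (because all the duplication constraints collapse the block to one index) and contributes the trace of the corresponding product of the $W$-matrices sitting on the $|b|$ wires of that block; since $W$ is diagonal this product of scalars, summed over the free index, is precisely $\trace W^{|b|}$. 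Taking the product over all blocks gives the claimed formula $f_W(\alpha,\beta) = \prod_{b \in (\gamma^{-1}\alpha \vee \gamma^{-1}\beta)} \trace W^{|b|}$.

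I expect the main obstacle to be the combinatorial identification of the "nets" with the join of the two cycle partitions — in particular, checking carefully that the duplication symbol, which attaches a \emph{single} square label to \emph{two} $\bar V$ boxes at once, has the effect of taking the join (rather than, say, one of the two partitions separately, or some refinement). This requires being precise about the orientation conventions for the wires and about which boxes $T$ versus $T^*$ appear at each vertex, and verifying that every $W$ box is indeed absorbed into exactly one block with the correct exponent. Once this is set up, the reduction to $\trace W^{|b|}$ per block is immediate from the diagonality of $W$ and the defining relations of $T,T^*$, and the final product formula follows by the multiplicativity of the graphical calculus over connected components.
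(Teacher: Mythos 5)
Your proposal is correct and follows essentially the same route as the paper's proof: both identify the connected components of the square-label net with the blocks of $\gamma^{-1}\alpha \vee \gamma^{-1}\beta$, and both exploit the diagonality of $W$ together with the defining relations of $T,T^*$ to reduce each component to a factor $\trace W^{|b|}$. Where the paper phrases the key step as "sliding" the diagonal $W$-boxes along wires through the duplication tensors and then contracting the $T,T^*$ skeleton to a point, you phrase it as the duplication constraints collapsing every wire in a component to a single index $s$ and summing $w_s^{|b|}$ over $s$; these are the same observation in two notations.
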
 
\begin{proof}
First, we observe that because of the cyclic structure of $Z_n$
we can put two $\sqrt{W}$-boxes into one $W$-box and associate it to the neighboring $U$-box. 
Next, we claim that since $W$ is a diagonal matrix, 
we can slide $W$ along wires within the net.
Indeed, we can show it algebraically:
\be \quad
\sum_i (e_i^* W) \otimes e_i \otimes e_i = \sum_i w_i e_i^* \otimes e_i \otimes e_i 
=\sum_i e_i^*  \otimes (W e_i) \otimes e_i = \sum_i e_i^*  \otimes e_i \otimes (W e_i).
\ee
Finally, by using the above fact, we collect the $W$ matrices together. 
However, the rest which is composed of $T$'s and $T^*$'s can be contracted to a point. 
Therefore, each connected component $b$ in $\gamma^{-1}\alpha \vee \gamma^{-1}\beta$ 
gives the factor $\trace W^{|b|}$. 
\end{proof}

The above general formula must be studied in details in order to complete the proof of  Theorem \ref{thm:bell-phenomenon}. We need precise values of $f_W(\alpha,\beta))$ when $\alpha,\beta$ lie on the geodesic $\id \rightarrow \alpha \rightarrow \beta \rightarrow \delta$.

\begin{lemma}\label{lem:counting-loops}
Suppose permutations $\alpha, \beta$ lie on the geodesic $\id \rightarrow \alpha \rightarrow \beta \rightarrow \delta$. As in \eqref{eq:geodesic-a-b}, they admit decompositions as products of disjoint transpositions indexed by subsets $A \subseteq B$.
Then:
\begin{itemize}
\item  When $A = \emptyset$ we have
\be
f_W(\id,\beta) = 
\begin{cases} \left[\trace \left( W^{p}\right)\right]^2 & \text{ if } B=\emptyset \\ 
\trace \left( W^{2p}\right) & \text{ if } B\not=\emptyset.
\end{cases} 
\ee
\item When $A = \{a_1 < \cdots < a_{|A|} \} \not = \emptyset$ we have
\be
f_W(\alpha,\beta) = \prod_{i=1}^{|A|} \trace \left[ W^{2(a_{i+1} -a_i)} \right].
\ee
Here, we understand $a_{|A|+1} -a_{|A|}$ to be equal to $p+a_1 - a_{|A|}$.
\end{itemize}
\end{lemma}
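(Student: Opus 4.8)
The starting point is Lemma~\ref{W-in-nets}, which expresses $f_W(\alpha,\beta)$ entirely in terms of one partition: $f_W(\alpha,\beta) = \prod_{b \in \pi}\trace W^{|b|}$ where $\pi := \gamma^{-1}\alpha \vee \gamma^{-1}\beta$. On the geodesic $\id \rightarrow \alpha \rightarrow \beta \rightarrow \delta$ we know from \eqref{eq:geodesic-a-b} that $\alpha = \prod_{i \in A}\tau_i$ and $\beta = \prod_{i \in B}\tau_i$ with $\tau_i = (i^T,i^B)$ and $A \subseteq B \subseteq \{1,\ldots,p\}$. Hence the whole task is to compute the cycle structure of $\pi$ for such $\alpha$, $\beta$ and read off the block sizes, and everything reduces to understanding $\gamma^{-1}\alpha$ (respectively $\gamma^{-1}\beta$) for an arbitrary subset $A$ (respectively $B$).

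The main step is the determination of the cycles of $\gamma^{-1}\alpha$. From \eqref{eq:def-gamma-delta} one gets $\gamma^{-1}(i^T) = (i+1)^T$ and $\gamma^{-1}(i^B) = (i-1)^B$, so $\gamma^{-1}\alpha$ sends $i^T \mapsto (i+1)^T$ and $i^B \mapsto (i-1)^B$ when $i \notin A$, while for $i \in A$ it sends $i^T \mapsto (i-1)^B$ and $i^B \mapsto (i+1)^T$. Following an orbit therefore amounts to: move forward through $T$-indices until one meets an index of $A$, cross over to the $B$-side, move backward through $B$-indices until one again meets an index of $A$, cross back, and repeat. Running this walk cyclically on $\{1,\ldots,p\}$ shows that, when $A = \{a_1 < \cdots < a_r\} \neq \emptyset$, the permutation $\gamma^{-1}\alpha$ has exactly $r$ cycles, one for each cyclically consecutive pair $(a_j,a_{j+1})$ with the convention $a_{r+1} := p + a_1$; the cycle attached to $(a_j,a_{j+1})$ consists of the $T$-labels with index in the cyclic interval $(a_j,a_{j+1}]$ together with the $B$-labels with index in $[a_j,a_{j+1})$, so it has $2(a_{j+1}-a_j)$ elements. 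When $A = \emptyset$, $\gamma^{-1}\alpha = \gamma^{-1}$ has the two cycles $\{1^T,\ldots,p^T\}$ and $\{1^B,\ldots,p^B\}$. (One can alternatively see that there are $|A|$ cycles by noting that each $\tau_i$ with $i \in A$ either merges the two cycles of $\gamma^{-1}$ or splits a cycle into two, but the walk is needed for the sizes.)

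With the cycle structure in hand, the join is immediate in each case. If $A \neq \emptyset$: because $A \subseteq B$, the extra cut points coming from $B$ merely subdivide each cyclic interval between consecutive elements of $A$, so every cycle of $\gamma^{-1}\beta$ sits inside a cycle of $\gamma^{-1}\alpha$; thus the partition of $\gamma^{-1}\beta$ refines that of $\gamma^{-1}\alpha$, whence $\pi = \gamma^{-1}\alpha$ and $f_W(\alpha,\beta) = \prod_{j=1}^{r}\trace W^{2(a_{j+1}-a_j)}$, which is the second assertion. If $A = \emptyset$ and $B = \emptyset$, then $\pi = \gamma^{-1}$ has two blocks of size $p$, giving $(\trace W^p)^2$. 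If $A = \emptyset$ and $B \neq \emptyset$, one checks that every cycle of $\gamma^{-1}\beta$ meets both $T$ and $B$: a cycle contained in $T$ would consist of indices avoiding $B$ and invariant under $i \mapsto i+1 \bmod p$, hence all of $\{1,\ldots,p\}$, forcing $B = \emptyset$, and symmetrically on the $B$-side; joining such a partition with $\{\{1^T,\ldots,p^T\},\{1^B,\ldots,p^B\}\}$ therefore collapses all $2p$ labels into a single block, giving $\trace W^{2p}$.

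The only genuinely delicate point is the bookkeeping in the cyclic walk for $\gamma^{-1}\alpha$: one has to handle the wrap-around carefully so that the cycle attached to the last consecutive pair really has size $2(p + a_1 - a_r)$ and that the $r$ cycles exhaust all $2p$ labels. The verification that the $B$-intervals nest inside the $A$-intervals --- which is exactly where the geodesic hypothesis $A \subseteq B$ is used in an essential way --- makes the join in the case $A \neq \emptyset$ trivial; the remaining two cases are then routine.
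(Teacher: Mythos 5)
Your proof is correct and follows essentially the same route as the paper: both reduce to Lemma~\ref{W-in-nets} and then identify the blocks of $\gamma^{-1}\alpha \vee \gamma^{-1}\beta$. The only difference is presentational — the paper reads off the cycle structure from the net diagram in Figure~\ref{fig:net-W}, whereas you trace the orbits of $\gamma^{-1}\alpha$ and $\gamma^{-1}\beta$ algebraically and then verify the refinement $\gamma^{-1}\beta \preceq \gamma^{-1}\alpha$ coming from $A \subseteq B$; both arguments are the same in substance.
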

\begin{proof}
First, we consider the case $A = \emptyset$. 
If $B = \emptyset$, the net associated to $W$ is composed of two cycles containing each $p$ $W$ boxes (the top and the bottom cycles). However, if $B$ is nonempty, the top and bottom cycles become connected and one obtains a large cycle of length $2p$. In conclusion, these cases yield respectively $f_W(\id,\id)=[\trace (W^p)]^2$ and $f_W(\id,\beta)=\trace(W^{2p})$,
by using Lemma \ref{W-in-nets}.  

Next, let us assume that $A = \{a_1, \ldots, a_{|A|} \} \not = \emptyset$. 
The structure of $\alpha$ and $\beta$ as in \eqref{eq:geodesic-a-b} with $A \subseteq B$ implies that the connected components of the net are determined by indices $i \in [p]$ such that $i \in A$ and $i \in B$, see Figure \ref{fig:net-W} for a proof. Since the first condition implies the second and each such connected components carries $2(a_{i+1} -a_i)$ boxes $W$, one obtains the announced formula.

\begin{figure}[htbp]
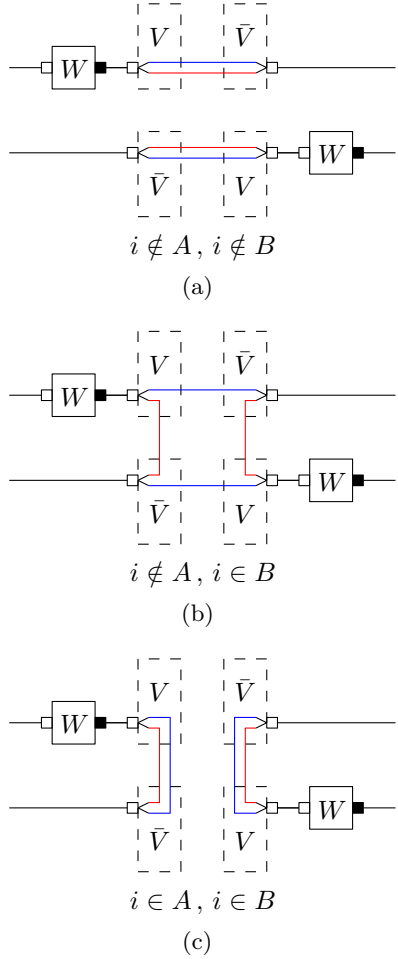

\subfigure[]{\includegraphics{f-W-0-0}}\\
\subfigure[]{\includegraphics{f-W-0-1}}\\
\subfigure[]{\includegraphics{f-W-1-1}}\\
\caption{The connected components of the net containing $W$ boxes corresponding to a pair of geodesic permutations $\alpha,\beta$ is determined by indices $i \in A$. In the picture, only the last case induces a ``cut'' in the cycles, creating additional connected components. } 
\label{fig:net-W}
\end{figure}

\end{proof}

The following result is needed to simplify the formulas in the lemma above. It can be, however, interesting on its own, from a combinatorial perspective. 

\begin{lemma}\label{technical-identity}
For a real diagonal matrix $W = \mathrm{diag}(w_1,\ldots,w_k)$ and real numbers $x,y \in \mathbb R$, we have
\be
\sum_{\emptyset \not= A \subseteq [p]} x^{p-|A|}y^{|A|} \prod_{i=1}^{|A|} \trace \left[W^{2(a_{i+1}-a_i)}\right] 
= \sum_{i=1}^k s_i^p - x^p\trace\left[W^{2p} \right],
\ee
where $A=\{a_1 < \cdots < a_{|A|}\}$, $(a_{|A|+1}-a_{|A|})$ should be understood as $(p+a_1 - a_{|A|})$ and $s=S(x,y;w)$ is as in Definition \ref{def:S}.
\end{lemma}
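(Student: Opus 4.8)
The plan is to identify the left-hand side as (almost) the trace of the $p$-th power of the matrix $H_\Sigma = H_\Sigma(x,y;w)$ from Definition \ref{def:S}, expanded combinatorially. Write $H_\Sigma = x D + y \, vv^*$, where $D = \mathrm{diag}(w_1^2,\ldots,w_k^2)$ and $v = (w_1,\ldots,w_k)^T$, so that $vv^*$ has entries $w_iw_j$ and $D$ contributes the extra $x w_i^2$ on the diagonal. Then
\begin{equation}\label{eq:H-power-expand}
\trace[H_\Sigma^p] = \trace\big[(xD + y\,vv^*)^p\big] = \sum_{\epsilon \in \{D, vv^*\}^p} \trace[\cdots],
\end{equation}
where the sum runs over the $2^p$ choices of placing a factor $xD$ or $y\,vv^*$ in each of the $p$ slots. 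For a given word, let $A \subseteq [p]$ be the (cyclically ordered) set of positions carrying a $y\,vv^*$ factor; the remaining $p - |A|$ positions carry $xD$. The contribution of such a word is $x^{p-|A|}y^{|A|}$ times the trace of the corresponding alternating product of $D$-blocks and rank-one $vv^*$-blocks.

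First I would evaluate that trace. When $A = \emptyset$ the word is $x^p D^p$, contributing $x^p \trace[D^p] = x^p\trace[W^{2p}]$; this is exactly the term that is subtracted on the right-hand side of the lemma. When $A \neq \emptyset$, each maximal run of consecutive $D$'s between two successive $vv^*$ insertions, say from cyclic position $a_i$ to $a_{i+1}$, collapses: a block $vv^* D^{m} vv^* = (v^* D^m v)\, vv^*$ with $v^*D^m v = \sum_\ell w_\ell^{2} w_\ell^{2m} = \trace[W^{2(m+1)}]$. Tracking the exponents carefully (the gap between the $i$-th and $(i+1)$-th element of $A$, read cyclically, produces $\trace[W^{2(a_{i+1}-a_i)}]$, with the convention $a_{|A|+1} - a_{|A|} = p + a_1 - a_{|A|}$), one gets precisely $\prod_{i=1}^{|A|} \trace[W^{2(a_{i+1}-a_i)}]$ — matching the summand in the lemma. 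Summing \eqref{eq:H-power-expand} over all $A \subseteq [p]$, splitting off the $A = \emptyset$ term, and using $\trace[H_\Sigma^p] = \sum_{i=1}^k s_i^p$ where $s = S(x,y;w) = \mathrm{spec}^\downarrow(H_\Sigma)$, rearranges directly into the claimed identity.

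The main obstacle is the bookkeeping of the cyclic exponents: verifying that the alternating product of $D$-runs and rank-one projectors, when traced, yields exactly $\prod_{i} \trace[W^{2(a_{i+1}-a_i)}]$ with the stated cyclic convention, including the edge cases $|A| = 1$ (a single $vv^*$ with $D^{p-1}$ wrapped around it, giving $\trace[W^{2p}] \cdot$ scalar correctly) and making sure no $x$ is lost when a $D$-factor sits at the cyclic seam. Everything else — the binomial-type reorganization of the sum over $A$ and the appeal to $\sum s_i^p = \trace[H_\Sigma^p]$ — is routine. One should also note this is consistent with the special cases computed in the corollaries (e.g. $y=|m|^2=0$ forces $A = \emptyset$ to be excluded from the left side, and indeed $s_i = w_i^2$ gives $\sum s_i^p = \trace[W^{2p}] = x^p\trace[W^{2p}]$ since $x=1$), which provides a useful sanity check while writing the argument.
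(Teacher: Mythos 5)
Your proof is correct, and it takes a cleaner route than the paper's. The paper works in the ambient $k^2\times k^2$ space: it sets $H = x\,W\otimes W + y\,P_{\tilde W}$ on $\mathbb{C}^k\otimes\mathbb{C}^k$, expands $\trace(H^p)$ word-by-word (with the $A=\emptyset$ term equal to $x^p[\trace(W^p)]^2$, not $x^p\trace(W^{2p})$), then diagonalizes $H$ by observing that $e_i\otimes e_j$ with $i\neq j$ are eigenvectors with eigenvalue $xw_iw_j$ while the restriction to $\Sigma = \bigoplus_i \mathbb{R}\,e_i\otimes e_i$ is $H_\Sigma$. The off-diagonal eigenvalues $x^p(w_iw_j)^p$ are then cancelled against $x^p[\trace(W^p)]^2 - x^p\trace(W^{2p})$ to land on the stated identity. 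You instead work directly with $H_\Sigma = xD + y\,vv^*$ on $\mathbb{C}^k$, where $D=\mathrm{diag}(w_1^2,\dots,w_k^2)$ and $v=(w_1,\dots,w_k)^T$; the $A=\emptyset$ word is already $x^p\trace(W^{2p})$, and $\sum_i s_i^p = \trace(H_\Sigma^p)$ is immediate from Definition \ref{def:S}, so the result follows without the $k^2$-dimensional detour or the final eigenvalue cancellation. Your collapsing step $vv^*D^m vv^* = (v^*D^m v)\,vv^*$ with $v^*D^m v = \trace(W^{2(m+1)})$ is correct, and the cyclic bookkeeping you flag (the gap between consecutive elements of $A$, including the wrap-around gap $p+a_1-a_{|A|}$, and the case $|A|=1$) does close up exactly as you indicate, so there is no gap in the argument. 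The two proofs share the same core idea — interpret the left-hand side as the word expansion of $\trace(M^p)$ for an explicit self-adjoint $M$ and then read off the spectrum of $M$ — but yours chooses the smaller, more economical $M$; the paper's choice of $H$ is the one that appears naturally in the graphical-calculus context, which is presumably why it was used there.
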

\begin{proof}
To the diagonal operator $W = \sum_{i=1}^k w_i e_i e_i^*$, we associate the vector
$$\mathbb R^k \otimes \mathbb R^k \ni \tilde W = \sum_{i=1}^k w_i e_i \otimes e_i$$
and $P_{\tilde W} \in M_{k^2}(\mathbb C)$, the orthogonal projection on $\tilde W$. The idea of the proof is to consider the self-adjoint operator
$$H = x W \otimes W + y P_{\tilde W}$$
and to expand $\mathrm{Tr} (H^p)$. Since $xW \otimes W$ and $yP_{\tilde W}$ do not commute in general, one has to consider general words in these two matrices. Such words can be indexed by the positions $A=\{a_1 < \cdots < a_{|A|}\}$ where $P_{\tilde W}$ appears in the word; let us call $\mathcal W_A$ the word corresponding to a subset $A \subseteq [p]$. Oviously, one has $\mathrm{Tr}(\mathcal W_\emptyset) = x^p[\mathrm{Tr}(W^p)]^2$, and thus 
$$\mathrm{Tr}(H^p) = x^p[\mathrm{Tr}(W^p)]^2 + \sum_{\emptyset \neq A \subseteq [p]} \mathrm{Tr}(\mathcal W_A).$$
Using the graphcal notation, it follows from Figure \ref{fig:trace-H-p}, that the general term $\mathrm{Tr}(\mathcal W_A)$ factorizes along the intervals defined by the set $A$ and one has, for all $A \neq \emptyset$,
\begin{equation}\label{eq:trace-W-A}
\mathrm{Tr}(\mathcal W_A) = x^{p-|A|}y^{|A|} \prod_{i=1}^{|A|} \trace \left[W^{2(a_{i+1}-a_i)}\right].
\end{equation}

\begin{figure}[htbp]
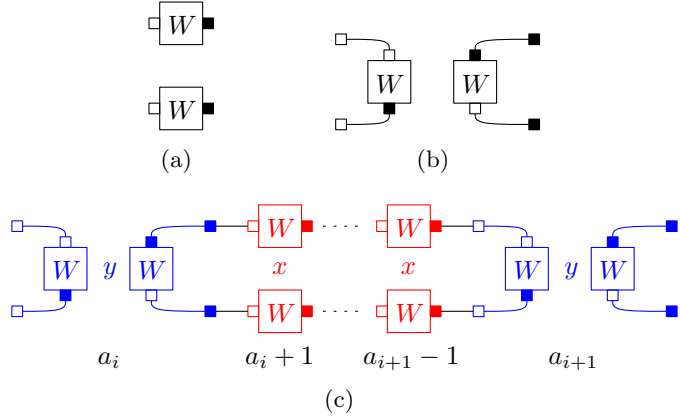

\subfigure[]{\includegraphics{W-W}}\qquad\qquad
\subfigure[]{\includegraphics{P-W}}\\
\subfigure[]{\label{fig:trace-H-p-interval}\includegraphics{trace-W-A}}\\
\caption{Diagrams for $W \otimes W$, $P_{\tilde W}$, and for an interval inside a word $\mathcal W_A$.} 
\label{fig:trace-H-p}
\end{figure}

More precisely, each conected component in the diagram for $\mathcal W_A$ corresponds to an interval $a_{i+1}-a_i$ and, for such a trace (see Figure \ref{fig:trace-H-p-interval}), one has a contribution of $x^{a_{i+1}-a_i-1}y\mathrm{Tr}\left[W^{2(a_{i+1}-a_i)}\right]$. Multiplying all these contribution gives \eqref{eq:trace-W-A}.

It follows that 
\begin{equation}\label{eq:sum-A-intermediate}
\sum_{\emptyset \not= A \subseteq [p]} x^{p-|A|}y^{|A|} \prod_{i=1}^{|A|} \trace \left[W^{2(a_{i+1}-a_i)}\right] 
= \trace\left( H^p \right) - x^p\left[\trace (W^p)\right]^2.
\end{equation}
One can further simplify this equation by using the explicit form of the operator $H$. Indeed, either by using the graphical notation or by simple algebra, the action of $H$ on some basis vectors can be easily computed as follows. For $i \neq j$, it is obvious that $P_{\tilde W}e_i \otimes e_j = 0$ and thus $H e_i \otimes e_j = xw_iw_j e_i \otimes e_j$, proving that $e_i \otimes e_j$ are eigenvectors of $H$ for the eigenvalues $x w_i w_j$. 

Let $\Sigma = \oplus_{i=1}^k \mathbb R e_i \otimes e_i$ be the subspace orthogonal to the space spaned by $e_i \otimes e_j$ with $i \neq j$. The restriction of $H$ to $\Sigma$ is exactly the operator $H_\Sigma$ defined in \eqref{eq:H-Sigma}, Definition \ref{def:S}. Thus, the eigenvalue vector of $H_\Sigma$ is $s = S(x,y;w)$. We now have computed all the $k^2$ eigenvalues of $H$ and we have
\begin{equation}\label{eq:trace-H-p}
\mathrm{Tr}(H^p) = \sum_{\substack{i,j=1\\ i \neq j }} x^p (w_iw_j)^p + \sum_{i=1}^p s_i^p.
\end{equation}
The conlcusion of the lemma follows now easily from \eqref{eq:sum-A-intermediate}, \eqref{eq:trace-H-p}, and the following equality
$$\left[\mathrm{Tr}(W^p)\right]^2= \sum_{\substack{i,j=1\\ i \neq j }} (w_iw_j)^p +\sum_{i=1}^k w_i^{2p}.$$

\end{proof}

\section{Product of conjugate channels with unbounded output dimension}
\label{sec:k-sim-n}

In this section, we consider the case where the output dimension grows with the input dimension of the channel, in a linear manner:
$$k/n \rightarrow c,$$ 
where $c$ is a positive constant that we consider as a parameter of the model. Since both $n$ and $k$ grow to infinity, there is no incentive to consider complementary channels, so we focus on the output of the original channels
 \be
Z_n = [\Phi \otimes \overline{\Phi}] ({\psi_n}{\psi_n^*} ),
\ee
where $\psi_n$ is the generalised Bell state introduced in the previous section. We shall make the same assumptions \eqref{eq:ass1},\eqref{eq:ass2} on the growth of the matrices $A_n$ appearing in the definition of $\psi_n$. Moreover, since the number of unitary matrices in $\Phi$ grows with $n$, we introduce the following assumptions on the growth of the  weight matrices $W_n$. 
\\
{\bf Assumption 3:}
\begin{equation}\label{eq:ass3}
\forall p \geq 1, \qquad \lim_{n \to \infty} \frac{1}{k} \trace \left[(kW_n)^p \right] = t_p = \int x^p d\nu(x).
\end{equation} 
where $t_p$ are the moments of some given compactly supported measure $\nu$. The probability measure $\nu$, or, equivalently, the moment sequence $(t_p)_{p \geq 1}$, are parameters of the model and they are fixed. The trace-preserving condition for the channel $\Phi$, $\trace(W_n) = 1$, implies that $t_1 = 1$.

We first compute the moments of the $n^2 \times n^2$ output matrix $Z_n$.

\begin{theorem}\label{thm:moments-ZnC}
Under the assumptions \eqref{eq:ass1}, \eqref{eq:ass2}, and \eqref{eq:ass3}, the output matrix $Z_n$ has the following asymptotic moments:
\be
\Ex \trace [(cnZ_n)^2] &=& t_2^2 + c^2 + t_2^2|m|^4 + O(n^{-1}) ; \\
\Ex \trace [(cnZ_n)^p] &=& t_2^p|m|^{2p} + O(n^{-1}), \qquad \forall p \geq 3.
\ee
\end{theorem}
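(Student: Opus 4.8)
The plan is to run the graphical Weingarten calculus of Theorem \ref{thm:V-graphical-calculus} on the diagram obtained by tracing together $p$ copies of the diagram for $Z_n = [\Phi \otimes \bar\Phi](\psi_n\psi_n^*)$, exactly as in the proof of Theorem \ref{thm:bell-phenomenon}, but now keeping careful track of the powers of $k$ as well as the powers of $n$, since in the regime $k/n \to c$ the two are comparable. As before, the expansion is indexed by pairs $(\alpha,\beta) \in \mathcal S_{2p}^2$, with the fixed permutations $\gamma,\delta$ recording the trace wiring and the $A/A^*$ wiring, and each removal produces a diagram with $n^{\#\alpha}$ round-white loops, a $W$-net factor $f_W(\alpha,\beta)$, and an $A$-necklace factor $f_A(\beta)$, times $\tilde\Wg(n,\alpha,\beta)$. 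The first step is therefore to write
\be
\Ex \trace[Z_n^p] = \sum_{\alpha,\beta\in\mathcal S_{2p}} n^{\#\alpha} f_W(\alpha,\beta) f_A(\beta)\, \tilde\Wg(n,\alpha,\beta),
\ee
and then to rescale: we want $\Ex\trace[(cnZ_n)^p] = (cn)^p \Ex\trace[Z_n^p]$, and $cn$ is asymptotically $k$, so each term carries an extra factor $\sim k^p n^{-p} \cdot n^p \cdot (\text{stuff})$ — the bookkeeping here is the crux and I address it next.

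The second step is the combined $n$-$k$ power count. From Proposition \ref{prop:equivalent-modified-Wg}, $\tilde\Wg(n,\alpha,\beta) = n^{-2p-|\alpha^{-1}\beta|}(\Mob(\alpha^{-1}\beta) + O(n^{-2}))$; from Assumptions \eqref{eq:ass1}--\eqref{eq:ass2}, $|f_A(\beta)| \lesssim n^{\#(\beta^{-1}\delta)-p}$ with the precise value $f_A(\beta) = (n|m|^2)^{|\beta|} + O(n^{-2})$ on the relevant geodesics; and from Lemma \ref{W-in-nets}, $f_W(\alpha,\beta) = \prod_{b\in(\gamma^{-1}\alpha\vee\gamma^{-1}\beta)}\trace W^{|b|}$, where now each factor $\trace W^{|b|} = k^{1-|b|}(t_{|b|}+o(1))$ by Assumption \eqref{eq:ass3}. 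Thus $f_W$ contributes $k^{\#(\gamma^{-1}\alpha\vee\gamma^{-1}\beta) - 2p}$ up to the moments $t_\bullet$, and the total $k$-power of the rescaled term is $p + \#(\gamma^{-1}\alpha\vee\gamma^{-1}\beta) - 2p = \#(\gamma^{-1}\alpha\vee\gamma^{-1}\beta) - p$, while the total $n$-power is (on the geodesic) $0$ as in the proof of Theorem \ref{thm:bell-phenomenon} and strictly negative off it. So after replacing $cn$ by $k$ we must maximize $\#(\gamma^{-1}\alpha\vee\gamma^{-1}\beta)$ over geodesic pairs $\id \to \alpha \to \beta \to \delta$; since $\gamma$ is the $2p$-cycle coming from the trace, $\gamma^{-1}\alpha\vee\gamma^{-1}\beta$ has at most $p$ blocks, with equality achieved only for very specific $(\alpha,\beta)$. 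The main obstacle is precisely this optimization: identifying which geodesic pairs $(\alpha,\beta)$ — recall $\alpha = \prod_{i\in A}\tau_i$, $\beta = \prod_{i\in B}\tau_i$ with $A\subseteq B\subseteq[p]$ and $\tau_i=(i^T,i^B)$ — make the $W$-net have the maximal number of blocks, and showing everything else is lower order in $k$ (hence, given $k\sim cn$, lower order overall).

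The third step is to evaluate the finitely many surviving terms. For $p\ge 3$ I expect the maximum $\#(\gamma^{-1}\alpha\vee\gamma^{-1}\beta)=p$ to be attained only when $A=B=[p]$ (so $\alpha=\beta$ is the full product of all $\tau_i$, the "Bell channel" configuration), giving $f_W = \prod_i \trace W^{2} = (t_2/k)^p \cdot k^p$-type contribution, $f_A(\beta)=(n|m|^2)^p + O(n^{-2})$, sign $(-1)^{|A\setminus B|}=1$, and after all rescalings a clean $t_2^p |m|^{2p}$; every other geodesic pair should lose at least one factor of $k$. For $p=2$ the low-dimensional coincidences matter: besides $A=B=\{1,2\}$ giving $t_2^2|m|^4$, the configuration $A=\emptyset$, $B=\{1,2\}$ gives $f_W(\id,\beta)=\trace W^{4} \sim t_4/k^3$ which is too small, but $A=B=\emptyset$ gives $f_W(\id,\id)=(\trace W^2)^2 \sim (t_2/k)^2$ contributing $t_2^2$, and there is an extra term of order $c^2$ coming from the pair where the two round-white loop count and the $W$-net count trade off — concretely from $\alpha=\beta$ a single transposition, where $\#(\gamma^{-1}\alpha\vee\gamma^{-1}\beta)$ and $\#\alpha$ conspire so the net is a single $2p$-cycle, $f_W\sim \trace W^{2p}\sim t_{2p}/k^{2p-1}$, no — more likely the $c^2$ arises because for $p=2$ one can have a non-geodesic-looking but still leading term where $n^{\#\alpha}$ over-counts and the ratio $k/n$ materializes as $c$. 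I would pin this down by enumerating all $(\alpha,\beta)\in\mathcal S_4^2$ with nonnegative total exponent once $k=cn$ is substituted, being careful that with $k\sim cn$ a term of $n$-order $-2$ and $k$-order $+2$ survives as a constant $c^2$; this single $p=2$ enumeration, done by hand, is routine but must be exhaustive, and it is where the stated $c^2$ term gets produced. Assembling the three cases yields exactly the two displayed formulas.
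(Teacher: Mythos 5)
Your outline — moment method, graphical Weingarten expansion, joint $n$-$k$ power count — is the right one, but there are two significant gaps, and the second one is fatal to the argument as written.

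First, you are transplanting the complementary-channel bookkeeping from Section~\ref{sec:fixed-k} into a different diagram. Here $Z_n = [\Phi\otimes\bar\Phi](\psi_n\psi_n^*)$, \emph{not} $[\Phi^C\otimes\bar\Phi^C](\psi_n\psi_n^*)$: the trace over the $n^2$-dimensional output now connects the round-white decorations cyclically, while the $k$-dimensional environment is traced \emph{inside} each copy. Consequently the loop count is $n^{\#(\gamma^{-1}\alpha)}$, not $n^{\#\alpha}$, and the $W$-net components are the blocks of $\alpha\vee\beta$, not of $\gamma^{-1}\alpha\vee\gamma^{-1}\beta$; Lemma~\ref{W-in-nets} as stated describes the wiring of the complementary picture. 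The combined exponent of $n$ (after $k\sim cn$) works out to $\#(\alpha\vee\beta)-\bigl(|\gamma^{-1}\alpha|+|\alpha^{-1}\beta|+|\beta^{-1}\delta|\bigr)$, which is bounded above by $2p - \bigl(|\gamma^{-1}\beta|+|\beta^{-1}\delta|+|\beta|\bigr)$. Your version of the count happens to be related to it because the moments of a channel and its complementary agree for pure inputs, but you never argue this, so as written the formulas are simply the wrong ones for this diagram.

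Second, and more seriously, you constrain the search to the geodesic $\id\to\alpha\to\beta\to\delta$, i.e.\ pairs of the form $\alpha=\prod_{i\in A}\tau_i$, $\beta=\prod_{i\in B}\tau_i$ with $A\subseteq B$. That constraint is correct in the fixed-$k$ regime (Theorem~\ref{thm:bell-phenomenon}), where the only source of positive powers of $n$ is the loop count, but it is the wrong optimization here: the extra powers of $k\sim cn$ coming from the $W$-net change which $(\alpha,\beta)$ are dominant. The correct equality conditions are: $\alpha$ refines $\beta$ (so $\#(\alpha\vee\beta)=\#\beta$), $\alpha$ lies on the geodesic $\gamma\to\alpha\to\beta$, and $\beta$ minimizes $|\gamma^{-1}\beta|+|\beta^{-1}\delta|+|\beta|$. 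The last quantity is controlled by Lemma~\ref{b-value} (from \cite{cn3}), which gives $\beta\in\{\id,\delta,\gamma\}$ for $p=2$ and $\beta=\delta$ alone for $p\ge 3$; then the other two conditions force $\alpha=\beta$ in each case. You do get $\alpha=\beta=\delta$ for $p\ge3$ and $\alpha=\beta=\id$ for $p=2$ because these two \emph{do} sit on the geodesic $\id\to\alpha\to\beta\to\delta$, but the pair $\alpha=\beta=\gamma$, which produces a leading contribution for $p=2$, is not of the form $\prod_{i\in B}\tau_i$ and is invisible to your ansatz. You correctly sense this gap ("non-geodesic-looking but still leading term") and propose brute-force enumeration over $\S_4^2$, which would indeed recover it for $p=2$, but the proposal gives no systematic criterion for general $p$ and hence cannot rule out further unseen contributions. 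The missing ingredient is precisely Lemma~\ref{b-value}: without it, neither the $c^2$ term nor the exhaustiveness of your list of survivors is justified.

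One smaller point: even within your framework, be careful about which of the three surviving pairs produces which term — with the correct (non-complementary) formulas, $(\gamma,\gamma)$ gives $t_2^2$ and $(\id,\id)$ gives $c^2$, whereas with your complementary-channel formulas those two contributions trade places; the totals match, but an attempt to read off the answer term by term from your ansatz would lead to confusion.
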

\begin{proof}
We start by applying the graphical expansion procedure described in Theorem \ref{thm:V-graphical-calculus} to the diagram for $\Ex \trace (Z_n^p)$, obtained by connecting $p$ copies of the diagram for $Z_n$, displayed in Figure \ref{fig:ZnC} in a tracial way. We obtain a formula which is very close to \eqref{moments}, the only differences coming from the fact that we are not using complementary channels. The notation is the same as the one in the proof of Theorem \ref{thm:bell-phenomenon}.

\begin{figure}[htbp]
 \begin{center}
  \includegraphics{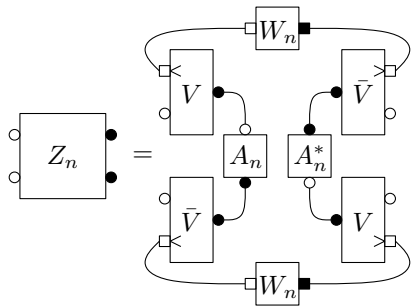} 
 \end{center}
 \caption{The diagram for the output state $Z_n$.} 
 \label{fig:ZnC}
\end{figure}

\be \label{eq:moments-Wn}
\Ex \trace [(cnZ_n)^p]
= (cn)^{p} \sum_{\alpha,\beta \in S_{2p}} n^{\# (\gamma^{-1} \alpha) } f_W(\alpha,\beta) f_A(\beta)  \tilde\Wg(n,\alpha,\beta).
\ee

Let us first upper bound the factor $f_W(\alpha,\beta)$. The boxes $W_n$ appearing in a diagram $\mathcal D_{\alpha,\beta}$ are connected in a net whose connected components are given by the blocks of the partition $\alpha \vee \beta$. 
In the spirit of Lemma \ref{lem:counting-loops}, each such connected component $b$ contributes a factor of $\mathrm{Tr}(W_n^{|b|})$, which, by \eqref{eq:ass3}, is equivalent to $k^{1-|b|}t_{|b|}$. Hence, 
\begin{equation}\label{eq:equiv-f-W}
f_W(\alpha,\beta) \sim k^{\#(\alpha \vee \beta) - 2p} \prod_{b \in \alpha \vee \beta} t_{|b|}.
\end{equation}
Replacing $k \sim cn$ and using the bound \eqref{roughbound_f} for $f_A(\beta)$, we can express everything in terms of $c$ and $n$. The power of $n$ appearing in the general term of the moment formula can then be bounded by
\begin{align*}
& p+ \# (\gamma^{-1} \alpha) + \#(\beta^{-1}\delta) -p + \#(\alpha \vee \beta) -2p -2p- |\alpha^{-1}\beta| \\
 &\qquad \leq 2p - (|\gamma^{-1}\alpha|+|\alpha^{-1}\beta|+|\beta^{-1}\delta|+|\beta| )   \\
& \qquad \leq 2p - (|\gamma^{-1}\beta|+|\beta^{-1}\delta|+|\beta|  ) \leq 0.
\label{S_1}
\end{align*}
Here, the first inequality holds because 
\be
\#(\alpha \vee \beta) \leq \# \beta
\label{ab-relation} 
\ee
the second one is true by the triangle inequality (Lemma \ref{lem:S_p})
\be
|\gamma^{-1}\alpha|+|\alpha^{-1}\beta| \geq |\gamma^{-1}\beta|.
\label{bg-line}
\ee
Finally, the last inequality follows from the proof of Theorem 6.8 in \cite{cn3}, which we recall as a lemma.
\begin{lemma}\label{b-value}
For any permutation $\beta \in \mathcal S_{2p}$, one has
\be
|\gamma^{-1}\beta|+|\beta^{-1}\delta|+|\beta| \geq 2p
\ee
equality holding if and only if 
\be
\beta = \begin{cases} 
\id, \delta, \gamma & p =2; \\  \delta & p \geq 3.
\end{cases} 
\ee
\end{lemma}

Next, we analyze when the power of $n$ becomes $0$. 
When $\beta = \delta$, the condition (\ref{ab-relation}) enforces $\alpha$ to be on the geodesic: 
$\id \rightarrow \alpha \rightarrow \beta = \delta$. 
However, with the equality condition for (\ref{bg-line}), we conclude that $\alpha = \delta$.
It is easy to see that $\beta = \id$ implies $\alpha = \id$ via (\ref{ab-relation}) and
that $\beta = \gamma$ results in $\alpha = \gamma$ because of the equality condition for (\ref{bg-line}). 
One can easily check that for these values of $\beta$, the bound \eqref{roughbound_f} is saturated
from the calculation below.

Since we identified the dominating terms in the moment equation \eqref{eq:moments-Wn}, it is now easy to compute the limits; 
we just plug the following equivalents into \eqref{eq:moments-Wn}.
\begin{align*}
\alpha =\beta&= \delta &f_W(\delta, \delta) &\sim (cn)^{-p} t_2^p
&f_A(\delta) &\sim n^p |m|^{2p} \\
\alpha =\beta&= \id &f_W(\mathrm{id}, \mathrm{id}) &= 1
&f_A(\mathrm{id}) &= 1\\
\alpha =\beta&= \gamma & f_W(\gamma, \gamma) &\sim (cn)^{2-2p}t_p^2
&f_A(\gamma) &= 1.
\end{align*}

When $p \geq 3$, we get
\be
\Ex \trace [(cnZ_n)^p]=  t_2^p|m|^{2p} + O(n^{-1}),
\ee
while when $p=2$, we obtain
\be
\Ex \trace [(cnZ_n)^2]   = t_2^2 + c^2 + t_2^2|m|^4 + O(n^{-1}).
\ee
\end{proof}

The fact that two different behaviours appear in the limits above depending on the value of $p$, can be explained by the presence of eigenvalues on different scales. We first apply the Hayden-Winter trick \cite{hayden-winter, hastings} to our  weighted random unitary channel setting. Note that the following proposition applies to any random unitary channel and any input state, at fixed dimension $n$.
\begin{proposition}\label{prop:hayden-winter-trick}
Consider the output of a rank one input state through a product of conjugated random unitary channels
$$Z_n = [\Phi \otimes \bar \Phi] (\psi_n \psi_n^*),$$
where the vector $\psi_n \in \mathbb C^n \otimes \mathbb C^n$ is defined as in \eqref{input-formula} and the channel $\Phi$ has weights as in \eqref{eq:def-Phi}.Then, one has the following lower bound for the largest eigenvalue of $Z_n$:
$$\lambda_1(Z_n) \geq \frac{|\mathrm{Tr} A_n|^2}{n} \sum_{i=1}^k w_i^2.$$
\end{proposition}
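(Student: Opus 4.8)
The plan is to use the variational principle for the largest eigenvalue together with the Hayden--Winter trick: for any positive semidefinite operator $Z$ on $\mathbb C^n \otimes \mathbb C^n$ and any unit vector $\chi$ one has $\lambda_1(Z) \geq \scalar{\chi}{Z\chi}$, so it suffices to exhibit one good test vector. I would take $\chi$ to be the maximally entangled state $\phi_n = \frac{1}{\sqrt n}\sum_{a=1}^n e_a \otimes e_a$ on the two copies of the output space; this is the natural candidate since $\psi_n$ is itself a generalised Bell vector.

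First I would expand $Z_n$ explicitly. From $\Phi(\rho) = \sum_i w_i U_i \rho U_i^*$ and $\bar\Phi(\rho) = \sum_j w_j \bar U_j \rho \bar U_j^*$ one gets
\[
Z_n = \sum_{i,j=1}^k w_i w_j\,(U_i \otimes \bar U_j)\,\psi_n \psi_n^*\,(U_i^* \otimes U_j^T),
\]
so that $\scalar{\phi_n}{Z_n \phi_n} = \sum_{i,j} w_i w_j\,\bigl|\scalar{\phi_n}{(U_i \otimes \bar U_j)\psi_n}\bigr|^2$, a sum of nonnegative terms. Then I would evaluate a single term using the standard transpose identities for the Bell state: $\psi_n = \sqrt n\,(A_n \otimes \mathrm I_n)\phi_n$ and $\scalar{\phi_n}{(X \otimes Y)\phi_n} = \tfrac1n\mathrm{Tr}(X Y^T)$ for arbitrary matrices $X,Y$. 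Using $(\bar U_j)^T = U_j^*$ this yields
\[
\scalar{\phi_n}{(U_i \otimes \bar U_j)\psi_n} = \sqrt n\,\scalar{\phi_n}{(U_i A_n \otimes \bar U_j)\phi_n} = \frac{1}{\sqrt n}\,\mathrm{Tr}\bigl(U_i A_n U_j^*\bigr),
\]
hence $\scalar{\phi_n}{Z_n \phi_n} = \tfrac1n\sum_{i,j} w_i w_j\,|\mathrm{Tr}(U_i A_n U_j^*)|^2$.

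Finally I would discard all off-diagonal terms ($i \neq j$), which are nonnegative, and use cyclicity of the trace, $\mathrm{Tr}(U_i A_n U_i^*) = \mathrm{Tr}(A_n)$, to get $\scalar{\phi_n}{Z_n \phi_n} \geq \tfrac{|\mathrm{Tr}A_n|^2}{n}\sum_{i=1}^k w_i^2$, which with $\lambda_1(Z_n) \geq \scalar{\phi_n}{Z_n \phi_n}$ is precisely the claim. There is no serious obstacle: the argument is elementary and dimension-exact, valid for every fixed $n$ and every choice of unitaries. The only points requiring care are the bookkeeping for the complex-conjugate channel — the identities $\bar U_j^* = U_j^T$ and $(\bar U_j)^T = U_j^*$ — and the normalisation $\psi_n = \sqrt n\,(A_n \otimes \mathrm I_n)\phi_n$ relating the generalised Bell vector to the matrix $A_n$.
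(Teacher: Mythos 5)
Your proof is correct and takes essentially the same approach as the paper: same test vector $\phi_n$, same expansion of $Z_n$, same step of discarding the nonnegative off-diagonal $i\neq j$ terms. The only (cosmetic) difference is in how the diagonal term is simplified: the paper invokes the Bell-state invariance $(U\otimes\bar U)\phi_n=\phi_n$ and then identifies $|\langle\phi_n,\psi_n\rangle|^2 = |\mathrm{Tr}\,A_n|^2/n$, whereas you compute $\langle\phi_n,(U_i\otimes\bar U_j)\psi_n\rangle$ via the transpose identity $\langle\phi_n,(X\otimes Y)\phi_n\rangle=\tfrac1n\mathrm{Tr}(XY^T)$ and then apply cyclicity; these are the same fact in different clothing.
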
 
\begin{proof}
After expanding the sums, one has
$$Z_n = \sum_{i,j=1}^k w_i w_j (U_i \otimes \bar U_j) \psi_n \psi_n^* (U_i \otimes \bar U_j)^*.$$
If $\varphi_n$ is the vector corresponding to the Bell state
\begin{equation}\label{eq:Bell-state}
\phi_n = \frac{1}{\sqrt n} \sum_{i=1}^n e_i \otimes e_i,
\end{equation}
one has
$$\langle \varphi_n, Z_n \varphi_n \rangle \geq \sum_{i=1}^k w_i^2 \langle \varphi_n, (U_i \otimes \bar U_i) \psi_n \psi_n^* (U_i \otimes \bar U_i) \varphi_n \rangle.$$
Using the fact that, for all unitary transformations $U$, one has $(U \otimes \bar U)\varphi_n = \varphi_n$, we get
$$\lambda_1(Z_n) \geq\langle \varphi_n, Z_n \varphi_n \rangle \geq  |\langle \varphi_n, \psi_n \rangle|^2 \sum_{i=1}^k w_i^2  = \frac{|\mathrm{Tr} A_n|^2}{n} \sum_{i=1}^k w_i^2.$$
\end{proof} 

This Theorem \ref{prop:hayden-winter-trick} gives the following lemma as a corollary:
\begin{lemma}\label{prop:hayden-winter-trick2}
Take input states $A_n$ satisfying assumption \eqref{eq:ass1} and weights $W_n$ satisfying the scaling \eqref{eq:ass3}. 
Then, for all realizations of the random matrix $V$, the largest eigenvalue of the matrix $Z_n$ can lower bounded, asymptotically, as follows:
$$\liminf_{n\rightarrow \infty}\lambda_1(cnZ_n) \geq t_2 |m|^2.$$
\end{lemma}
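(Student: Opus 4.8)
The plan is to obtain the lemma directly from Proposition~\ref{prop:hayden-winter-trick} by inserting the asymptotic normalizations prescribed by Assumptions~\eqref{eq:ass1} and \eqref{eq:ass3}. Proposition~\ref{prop:hayden-winter-trick} supplies, for \emph{every} realization of $V$ and every $n$, the deterministic bound $\lambda_1(Z_n)\ge \frac{|\trace A_n|^2}{n}\sum_{i=1}^k w_i^2$; multiplying through by $cn$ yields $\lambda_1(cnZ_n)\ge c\,|\trace A_n|^2\sum_{i=1}^k w_i^2$. This already eliminates the randomness from the problem, so the remaining work is purely a computation of limits of deterministic scalars.

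First I would estimate the input factor. Assumption~\eqref{eq:ass1} gives $\trace A_n=\sqrt n\,(m+O(n^{-2}))$, hence $|\trace A_n|^2 = n\,(|m|^2 + O(n^{-2}))$. Next, the weight factor: $\sum_{i=1}^k w_i^2 = \trace(W_n^2) = k^{-2}\,\trace((kW_n)^2)$, and Assumption~\eqref{eq:ass3} with $p=2$ gives $\trace((kW_n)^2) = k\,(t_2+o(1))$, so $\sum_{i=1}^k w_i^2 = k^{-1}(t_2+o(1))$. Substituting both into the displayed bound gives
\[
\lambda_1(cnZ_n)\ \ge\ c\cdot n\bigl(|m|^2+O(n^{-2})\bigr)\cdot \frac{t_2+o(1)}{k}\ =\ c\cdot\frac{n}{k}\,\bigl(t_2|m|^2+o(1)\bigr).
\]
Since $k/n\to c$ we have $n/k\to c^{-1}$, so the right-hand side converges to $c\cdot c^{-1}\cdot t_2|m|^2 = t_2|m|^2$. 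Taking $\liminf_{n\to\infty}$ of the left-hand side yields the claimed inequality, and because the bound of Proposition~\ref{prop:hayden-winter-trick} holds for every realization of $V$, so does the conclusion.

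There is essentially no obstacle here. The only points needing a word of care are: (i) the test-vector estimate $\langle\varphi_n,Z_n\varphi_n\rangle\le\lambda_1(Z_n)$ used inside Proposition~\ref{prop:hayden-winter-trick} is an inequality in the right direction, so one only gets a \emph{lower} bound on $\lambda_1$, which is exactly what the lemma claims; and (ii) the error terms coming from Assumptions~\eqref{eq:ass1} and \eqref{eq:ass3} are $o(1)$ after the $n$ and $k^{-1}$ normalizations, so no uniformity problem arises. If desired, one can note in passing that the explicit lower bound $c\,|\trace A_n|^2\sum_i w_i^2$ actually converges (not merely has $\liminf$) to $t_2|m|^2$, which gives the cleanest formulation.
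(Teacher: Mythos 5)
Your proof is correct and follows essentially the same route as the paper: the paper likewise multiplies the deterministic bound of Proposition~\ref{prop:hayden-winter-trick} by $cn$ and then invokes Assumptions~\eqref{eq:ass1} and \eqref{eq:ass3} with $p=2$, exactly as you do. You have simply written out the intermediate $O$/$o$ bookkeeping that the paper leaves implicit, and your closing remark that the lower bound in fact converges is accurate.
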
 
\begin{proof}First, Theorem \ref{prop:hayden-winter-trick} implies that
\be
cn \lambda_1(Z_n) \geq \frac{|\mathrm{Tr} A_n|^2}{n} cn \trace W^2
\ee
Then, we use \eqref{eq:ass1} and \eqref{eq:ass3} with $p=2$. 
\end{proof} 

Note that this behaviour is consisted with the moments computed in Theorem \ref{thm:moments-ZnC}, for $p \geq 3$. In order to investigate the smaller eigenvalues, we analyse the matrix $Q_nZ_nQ_n$, where $Q_n=I_n- \tilde E_n$. 
Here, $\tilde E = \phi_n \phi_n^*$ is the projection on Bell state \eqref{eq:Bell-state}.
Precise statements about the spectrum will be made later, using Cauchy's interlacing theorem. Before we state our theorem, let us recall the definition of a compound free Poisson distribution.

Compound free Poisson distributions were introduced in \cite{sp2} by Speicher and the theory was further developed in \cite{hiai-petz} and \cite[Prop. 12.15]{nica-speicher}. Traditionally, they are defined via a limit theorem that mimics the classical Poisson limit theorem. 

\begin{definition}\label{def:compound-Poisson}
Let $\lambda$ be a positive real number and $\mu$ a compactly supported probability measure. The limit in distribution, as $N \to \infty$ of the probability measure
$$ \left[ \left( 1-\frac{\lambda}{N} \right) \delta_0 + \frac{\lambda}{N}\mu \right]^{\boxplus N}$$
is called a \emph{compound free Poisson distribution} of rate $\lambda$ and jump distribution $\mu$ ,
which is denoted by $\pi_{\lambda,\mu}$; it has free cumulants given by
$$\kappa_p(\pi_{\lambda,\mu}) = \lambda m_p(\mu),$$
where $m_p(\mu)$ denotes the $p$-th moment of the probability distribution $\mu$.
\end{definition}
Note that the usual free Poisson (or Marchenko-Pastur) distributions $\pi_c$ are special cases of the above definition, obtained by letting $\lambda = c$ and $\mu = \delta_1$. We also introduce the notation for the distribution of the square of a random variable: if $X$ has distribution $\mu$, then $\mu^{\times 2}$ is the distribution of the random variable $X^2$. It follows that the moments of $\mu^{\times 2}$ are 
$$m_p(\mu^{\times 2})  = m_p(\mu)^2.$$

Before going to the the main result of this section, 
we make a remark on free cumulants. 
The free cumulants and moments of a random variable satisfy the following relation, called the moment-cumulant formula: 
\be\label{cumulant-formula}
m_p = \sum_{\sigma \in NC(p)} \prod_{b \in \sigma} \kappa_{|b|}
\ee
Here, $NC(p)$ is the non-crossing partition and $b \in \sigma$ is a block of the partition $\sigma$.  
For more details, please see, for example, \cite[Proposition 1.4]{nica-speicher}.

\begin{theorem}\label{thm:compound-Poisson-limit}
The empirical eigenvalue distribution of the matrix $(cn)^2Q_nZ_nQ_n$ converges in moments  to a compound free Poisson distribution with rate $c^2$ and jump distribution $\nu^{\times 2}$. 
\end{theorem}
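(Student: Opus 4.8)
The plan is to use the method of moments together with the graphical Weingarten calculus for stacks of i.i.d.\ Haar unitaries (Theorem \ref{thm:V-graphical-calculus}). Concretely, I want to show that for every $p\ge 1$
\[
\lim_{n\to\infty}\frac{1}{n^2}\,\Ex\trace\!\left[\bigl((cn)^2Q_nZ_nQ_n\bigr)^p\right]=\sum_{\sigma\in NC(p)}\ \prod_{b\in\sigma}\kappa_{|b|}(\pi_{c^2,\nu^{\times 2}})=\sum_{\sigma\in NC(p)}c^{2\#\sigma}\prod_{b\in\sigma}t_{|b|}^2 ,
\]
where $\#\sigma$ denotes the number of blocks of $\sigma$; by the moment--cumulant formula \eqref{cumulant-formula} and Definition \ref{def:compound-Poisson} (so that $\kappa_p=c^2m_p(\nu^{\times2})=c^2t_p^2$) the right-hand side is exactly $m_p(\pi_{c^2,\nu^{\times2}})$. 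Two preliminary reductions: since $Q_n$ is an orthogonal projection, $\trace[(Q_nZ_nQ_n)^p]=\trace[(Z_nQ_n)^p]$ by cyclicity; and writing $Q_n=\mathrm{I}-\tilde E_n$ with $\tilde E_n=\phi_n\phi_n^*$ the projection onto the Bell state \eqref{eq:Bell-state}, this trace expands into $2^p$ diagrams, each obtained from the diagram for $\trace[Z_n^p]$ used in the proof of Theorem \ref{thm:moments-ZnC} by inserting a ``cup--cap'' box $-\tilde E_n$ on some subset of the $p$ wires joining consecutive copies of $Z_n$; every such insertion cuts the $A$-necklace at that wire.

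First I would sort out the renormalisation. By Theorem \ref{thm:moments-ZnC} the unprojected moments $\Ex\trace[Z_n^p]$ are of order $(cn)^{-p}$, dominated by the ``fully Bell'' necklace $\beta=\delta$ with $f_A(\delta)\sim n^p|m|^{2p}$; this is precisely the single eigenvalue of order $t_2|m|^2/(cn)$ of Lemma \ref{prop:hayden-winter-trick2}. Because $(U\otimes\bar U)\phi_n=\phi_n$ for every unitary $U$, the $-\tilde E_n$ insertions do not directly kill powers of $n$ on the Bell-type configurations; instead, assembling the $2^p$ diagrams makes \emph{all} contributions carrying the maximal power of $|m|$ cancel. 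For $p=2$ this is the identity $c^2t_2^2|m|^4-2c^2t_2^2|m|^4+c^2t_2^2|m|^4=0$ read off from $\trace[Z_n^2]-2\trace[Z_n^2\tilde E_n]+\trace[(Z_n\tilde E_n)^2]$ using the moments of Theorem \ref{thm:moments-ZnC}; the $|m|$-free part then survives and gives $c^4+c^2t_2^2$, the announced $p=2$ moment. In general, the Hayden--Winter direction is removed and $\Ex\trace[(Z_nQ_n)^p]$ drops from order $(cn)^{-p}$ to order $(cn)^{-2p}$, which is why we renormalise by $(cn)^2$. It is worth noting already here that the limit will carry no $m$-dependence at all: the overlap parameter only governs the single large eigenvalue that $Q_n$ removes.

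The core step is to run the graph expansion of Theorem \ref{thm:V-graphical-calculus} on these $2^p$ diagrams, exactly as in \eqref{eq:moments-Wn}, now with the $\tilde E_n$ insertions recorded in the (cut) $A$-necklace factor and the whole quantity multiplied by $(cn)^{2p}/n^2$. The three ingredients are of the same nature as before: $n^{\#(\gamma^{-1}\alpha)}$ from the round-white loops, the $W$-net factor $f_W(\alpha,\beta)\sim k^{\#(\alpha\vee\beta)-2p}\prod_{b\in\alpha\vee\beta}t_{|b|}$ of \eqref{eq:equiv-f-W}, the cut $A$-necklace factor, and $\tilde\Wg(n,\alpha,\beta)$. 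A power count identical to the one in the proof of Theorem \ref{thm:moments-ZnC} (using Lemmas \ref{lem:S_p}, \ref{b-value} and $k\sim cn$), together with the cancellations of the previous paragraph which suppress the $\beta=\delta$ configuration that would otherwise diverge after the $(cn)^2$ renormalisation, shows that the surviving terms are parametrised by non-crossing partitions $\sigma\in NC(p)$: they correspond to geodesic pairs $(\alpha,\beta)$ with $\#(\alpha\vee\beta)$ maximal, the $A$-wiring $\delta$ forcing the $\Phi$-side and $\bar\Phi$-side $W$-nets to be mirror images and thereby cutting $NC(p)^2$ down to a single copy of $NC(p)$. For such a $\sigma$ each block $b$ carries two synchronised $W$-cycles of length $|b|$, contributing $\trace(W_n^{|b|})^2\sim(cn)^{2-2|b|}t_{|b|}^2$ — this is the origin of both the exponent $2$ in $c^{2\#\sigma}$ and of $t_{|b|}^2$, and it is already visible at $p=2$ in the equivalent $f_W(\gamma,\gamma)\sim(cn)^{2-2p}t_p^2$ of Theorem \ref{thm:moments-ZnC}. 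Collecting the powers of $n$ and $c$ and using $t_1=1$ yields the displayed limit. Almost sure convergence, if wanted, follows as in Step 2 of the proof of Theorem \ref{thm:bell-phenomenon}: the variance of $\frac{1}{n^2}\trace[((cn)^2Q_nZ_nQ_n)^p]$ is $O(n^{-2})$ because the two copies of the diagram decouple to leading order, Borel--Cantelli applies, and Carleman's condition identifies the limit measure.

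The hard part will be the bookkeeping that ties the second and third paragraphs together: showing that, once the $2^p$ diagrams coming from $Q_n=\mathrm{I}-\tilde E_n$ are assembled, \emph{every} configuration that runs the Bell vector around a full cycle of the necklace cancels — not merely the low-moment instances checked above — so that what remains is a clean sum over $NC(p)$ with no residual $m$-dependence, while simultaneously verifying that crossing partitions are genuinely $O(n^{-2})$-suppressed. Everything else is a rerun of the estimates already developed for Theorems \ref{thm:bell-phenomenon} and \ref{thm:moments-ZnC}. Note finally that Cauchy's interlacing theorem, mentioned just before the statement, is not needed for this theorem itself; it enters only afterwards, to transfer the spectral conclusion back to $Z_n$.
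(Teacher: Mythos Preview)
Your overall strategy --- expand $Q_n=\mathrm{I}-\tilde E_n$ into $2^p$ terms, apply Theorem \ref{thm:V-graphical-calculus}, and power-count --- is exactly the paper's. There is, however, a factual slip and a missing mechanism.

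The slip: $\tilde E_n$ acts on the \emph{output} (round-white) wires of $\Phi\otimes\bar\Phi$, not on the $A$-necklace (round-black) wires. Inserting $\tilde E_n$ at position $i$ therefore modifies the trace wiring $\gamma\mapsto\hat g$ (so the white-loop factor becomes $n^{\#(\hat g^{-1}\alpha)}$) while $f_A(\beta)$ is unchanged; it does not ``cut the necklace''. This matters because the entire argument hinges on how the insertions interact with $\alpha$, not with the $A$-boxes.

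The missing piece, which you rightly flag as the hard part, is the precise cancellation. The paper does not track powers of $|m|$. Instead it interchanges the sums over $g$ and over $(\alpha,\beta)$ and invokes a lemma from \cite{cn3}: the inner sum
\[
\sum_{g\in\mathcal F_p}(-1)^{|g^{-1}(E_n)|}\,n^{-|g^{-1}(E_n)|+\#(\hat g^{-1}\alpha)}
\]
vanishes identically unless $\alpha\delta$ has no fixed point, i.e.\ unless $\alpha\in\tilde S_{2p}$. This algebraic restriction on $\alpha$ (and the bound $|\alpha\delta|\ge p$ it entails) is what eliminates the $\beta=\delta$ direction after the $(cn)^2$ renormalisation and drives the subsequent power count down to the constraints $g\equiv I$, $\alpha=\beta=\alpha^T\oplus\alpha^B$ with $\id\to\alpha^T\to\gamma^T$ and $\alpha^T\alpha^B=\id$, which is the bijection with $NC(p)$. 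Under these constraints $f_A(\alpha)=1$ (so the limit carries no $m$-dependence, as you anticipated) and the $W$-net factor contributes $\prod_{b\in\alpha^T}t_{|b|}^2$, yielding $\sum_{\sigma\in NC(p)}c^{2\#\sigma}\prod_{b\in\sigma}t_{|b|}^2$. Without the $\tilde S_{2p}$ lemma your cancellation argument stays at the level of the $p=2$ check; identifying and proving that vanishing is the concrete gap in your sketch.
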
 
\begin{proof}
This proof is similar with the one of Theorem 6.10 in \cite{cn3} and uses the method of moments.
\be
\frac{1}{n^2} \Ex \trace ((cn)^2Q_nZ_nQ_n)^p 
&=& c^{2p} n^{2p-2} \Ex \trace \prod_p (I_n-\tilde E_n )Z_n \\
&=& c^{2p} n^{2p-2} \sum_{g \in \mathcal F_p} (-1)^{|g^{-1}(E_n)|} n^{-|g^{-1}(E_n)|} 
\underbrace{\Ex \trace \prod_{i=1}^p g(i) Z_n}_{(\diamondsuit)}
\ee
Here, $E_n = n \tilde E_n$ and $g \in \mathcal F_p = \{h: \{1,2,\ldots,p\} \rightarrow \{I_n, E_n\}\}$. 
To calculate $(\diamondsuit)$, 
we need to set the natural correspondence of $g$ in $S_{2p}$, which is denoted by $\hat g$:
When $i \in g^{-1}(I)$,
\be
\hat g ((i+1)^T) = i^T \quad \text{and} \quad \hat g (i^B) = (i+1)^B
\ee
and when $i\in g^{-1}(E)$,
\be
\hat g ((i+1)^T) = (i+1)^B \quad \text{and} \quad \hat g (i^B) = i^T
\ee
With this notation we have
\be
(\diamondsuit) = \sum_{\alpha,\beta \in S_{2p}} 
f_W(\alpha,\beta) n^{\# (\hat g^{-1} \alpha)} f_A(\beta) \tilde \Wg(\alpha^{-1}\beta)
\ee
Putting everything together and interchanging the two sums, we obtain
\begin{align}
\frac{1}{n^2} \Ex & \trace ((cn)^2Q_nZ_nQ_n)^p  = \\
&= c^{2p} n^{2p-2} \sum_{g \in \mathcal F_p} (-1)^{|g^{-1}(E_n)|} n^{-|g^{-1}(E_n)|} 
\sum_{\alpha,\beta \in S_{2p}} 
f_W(\alpha,\beta) n^{\# (\hat g^{-1} \alpha)} f_A(\beta) \tilde \Wg(\alpha^{-1}\beta) \\
&= c^{2p} n^{2p-2} 
\sum_{\alpha,\beta \in S_{2p}} 
f_W(\alpha,\beta)f_A(\beta) \tilde \Wg(\alpha^{-1}\beta)
\underbrace{\sum_{g \in \mathcal F_p}  (-1)^{|g^{-1}(E_n)|}n^{-|g^{-1}(E_n)| +\# ( \hat g^{-1} \alpha)}}_{(\clubsuit)}
\end{align}

Importantly, it was shown in \cite{cn3} that $(\clubsuit)$ vanishes unless $\alpha$ belongs to the following set:
\be
\tilde S_{2p} = \{\pi \in S_{2p} : \text{$\pi \delta$ has no fixed point} \}
\ee
Moreover, it follows also from \cite{cn3} that for such $\alpha \in \tilde S_{2p}$ one has 
\be\label{bound-ad}
|\alpha \delta| \geq p.
\ee

With the estimates \eqref{eq:equiv-f-W} and \eqref{roughbound_f} for $f_W$ and $f_A$ respectively, we have
\begin{align*}
 \frac{1}{n^2} &\Ex \trace ((cn)^2Q_nZ_nQ_n)^p \lesssim  c^{2p} n^{-p-2} \times  \\
&\sum_{\substack{\alpha \in \tilde S_{2p} \\ \beta \in S_{2p} \\g \in \mathcal F_p } }
(cn)^{\# (\alpha \vee \beta)-2p} \; t_{\alpha \vee \beta} \;
n^{\# (\beta^{-1}\delta) -|\alpha^{-1}\beta| -|g^{-1}(E_n)| +\# (\hat g^{-1} \alpha)} \;
(-1)^{|g^{-1}(E_n)|} \;
\Mob(\alpha^{-1}\beta),
\end{align*}
where $t_{\alpha \vee \beta}$ is defined multiplicatively over the cycles of the partition $\alpha \vee \beta$:
$$t_{\alpha \vee \beta} = \prod_{b \in \alpha \vee \beta} t_{|b|}.$$
The power of $n$ in the expression above is bounded by the triangle inequality and (\ref{bound-ad}):
\be
&& p-2  + \#(\alpha \vee \beta) -|\beta^{-1}\delta| -|\alpha^{-1}\beta| -|g^{-1}(E_n)| - |\hat g^{-1} \alpha| \\
&\leq& 3p -2 - (|\alpha| + |\beta^{-1}\delta| + |\alpha^{-1}\beta| + |g^{-1}(E_n)| + |\hat g^{-1} \alpha| )  \\
&\leq& 3p -2 - (|\alpha| + |\alpha^{-1}\delta| + |g^{-1}(E_n)| + |\hat g^{-1} \alpha| )  \\
&\leq& 2p -2 - (|\alpha|  + |g^{-1}(E_n)| + |\hat g^{-1} \alpha| )  \\
&\leq& 2p -2 - ( g^{-1}(E_n) + |\hat g^{-1}| ) 
\ee
In the above each $\leq$ is $=$ respectively if and only if
\be \notag
{\rm(i)}\, \# (\alpha \vee \beta) = \# \alpha, \quad
{\rm(ii)}\, \alpha \rightarrow \beta \rightarrow \delta, \quad
{\rm(iii)}\,|\alpha^{-1}\delta| =p \quad
{\rm(iv)}\, \id \rightarrow \alpha \rightarrow \hat g 
\ee

We also claim that
\be
2p -2 - ( g^{-1}(E_n) + |\hat g^{-1}| ) \leq 0
\ee
and $\leq$ becomes $=$ if and only if $g=I$, called the condition (v). 
This is from the following fact:
\be
\# \hat g = \begin{cases} 2 & g \equiv I \\ |g^{-1}(E_n)| & g \not\equiv I \end{cases} 
\ee
Hence we conclude that the power of $n$ becomes $0$ if and only if
the conditions (i) to (v) are satisfied. 

The condition (v) implies that $\hat g = \gamma$,
which in turn via (iv) forces $\alpha$ to have the following structure:
\be\label{structure-a}
\alpha = \alpha^T \oplus \alpha^B 
\ee
Here, $\id \rightarrow \alpha^T \rightarrow \gamma^T$ and
$\id \rightarrow \alpha^B \rightarrow \gamma^B$
where $\gamma^T = (p^T, (p-1)^T, \ldots, 1 )$ and 
$\gamma^B = (1^B,2^B, \ldots, p^B)$.
This result also enforces a similar structure to $\beta$ via (i):
\be\label{structure-b}
\beta = \beta^T \oplus \beta^B 
\ee
Moreover, the special structure of $\delta$ shows that 
$\# ((\alpha^T \oplus \alpha^B)\delta) = \#(\alpha^T \alpha^B)$.
Here, for the RHS, we identify $i^T = i^B$. 
Then, (iii) imposes $\# (\alpha^T\alpha^B)=p$, i.e., $\alpha^T\alpha^B = \id$.

The condition (iii) with $\alpha \in \tilde S_{2p}$ implies that
$\alpha^{-1} \delta$ is paring. 
However, note that $\alpha$ moves each point within T-group and B-group,
and on the other hand, $\delta$ moves each point between these groups. 
So, $\alpha^{-1} \delta$ has the following structure:
\be
\alpha^{-1} \delta = \prod_{i \in \Lambda} \tau_i 
\ee
Here, $\Lambda$ is an index set and $\tau_i = (a_i^T, b_i^B)$. 
Now, we reinterpret (ii) as
\be
\id \rightarrow \alpha^{-1} \beta \rightarrow \alpha^{-1} \delta
\ee
which implies, by using (\ref{structure-a}) and (\ref{structure-b}), 
that $\alpha = \beta$. 

Hence we now have three conditions:
\be
{\rm(v)}\, g \equiv I,\quad {\rm(vi)}\, \alpha = \beta, \quad 
{\rm(vii)}\, \alpha = \alpha^T \oplus \alpha^B 
\text{ with $\id \rightarrow \alpha^T \rightarrow \gamma^T$ and $\alpha^T\alpha^B = \id$}
\ee
and check that these conditions really gives the leading power.
The condition (v) implies that $|g^{-1}(E_n)| =0$ and 
the conditions (vi) and (vii) lead to
\be
f_A(\beta) = f_B(\alpha) = 1
\ee
Note that the conditions (v), (vi) and (vii) are in fact 
neccessary and surficient for the power of $n$ to be $0$
because $f(\beta)$ achieves the bound \eqref{roughbound_f}
under these conditions. 

Therefore, 
\begin{align*}
\lim_{n \to \infty} \frac{1}{n^2} &\Ex \trace ((cn)^2Q_nZ_nQ_n)^p\\
&=\sum_{\alpha^T \text{ in (vii)}} 
c^{2\#(\alpha^T)}t_{\alpha^T \oplus \alpha^B}\Mob(\id) 
= \sum_{\sigma \in NC(p)} c^{2\#\sigma}m^2_\sigma(\nu) \\
&=\sum_{\sigma \in NC(p)} \prod_{b \in \sigma} c^2 m_p (\nu^{\times 2}) =m_p(\pi_{c^2,\nu^{\times 2}}).
\end{align*}
Here, we used the fact that $\alpha\in S_p$ such that $\id \rightarrow \alpha \rightarrow (1,2,\ldots,p)$ 
corresponds to a non-crossing partion of $[p]$. 
The last equality above follows from \eqref{cumulant-formula} and
the moment-cumulant formula for the compound Poisson distribution of rate $c^2$ and jump distribution $\nu^{\times 2}$, finalising the proof. 
\end{proof} 
 
\begin{theorem}\label{thm:k-sim-n}
The ordered eigenvalues $\lambda_1 \geq \lambda_2 \geq \ldots \geq \lambda_{n^2}$ of the output matrix $Z_n$ have the following asymptotic behaviour, as $n \to \infty$:
\begin{enumerate}
\item In probability, $cn \lambda_1 \rightarrow t_2|m|^2$.
\item The empirical eigenvalue distribution $\frac{1}{n^2-1} \sum_{i=2}^{n^2} \delta_{(cn)^2\lambda_i}$ converges weakly to
the compound free Poisson distribution $\pi_{c^2,\nu^{\times 2}}$ of rate $c^2$ and jump distribution $\nu^{\times 2}$. 
\end{enumerate}
Here, $\nu$ is a probability measure defined in \eqref{eq:ass3}.
\end{theorem}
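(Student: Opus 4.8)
The plan is to obtain Theorem~\ref{thm:k-sim-n} as an assembly of the three results already established in this section, treating its two assertions separately. For part~(1), the lower bound is immediate from Lemma~\ref{prop:hayden-winter-trick2}, which gives $\liminf_{n\to\infty}\lambda_1(cnZ_n)\geq t_2|m|^2$ for every realization of the random matrix $V$, hence in probability. For the matching upper bound I would invoke Markov's inequality together with the moment asymptotics of Theorem~\ref{thm:moments-ZnC}: since $cnZ_n$ is positive semidefinite one has $\lambda_1(cnZ_n)^p\leq\trace[(cnZ_n)^p]$ for every $p$, so for any $\varepsilon>0$ and any integer $p\geq 3$
\[
\P(\lambda_1(cnZ_n)>t_2|m|^2+\varepsilon)\leq\frac{\E\trace[(cnZ_n)^p]}{(t_2|m|^2+\varepsilon)^p}.
\]
By Theorem~\ref{thm:moments-ZnC} the right-hand side tends, as $n\to\infty$, to $(t_2|m|^2/(t_2|m|^2+\varepsilon))^p$, and this quantity tends to $0$ as $p\to\infty$ (the denominator is positive, and the bound is valid also when $m=0$). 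Fixing $\varepsilon$, sending $n\to\infty$ and then $p\to\infty$ shows $\P(\lambda_1(cnZ_n)>t_2|m|^2+\varepsilon)\to 0$, which together with the lower bound yields $cn\lambda_1\to t_2|m|^2$ in probability.

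For part~(2) I would transfer everything to the compressed matrix $Q_nZ_nQ_n$, whose limiting spectral distribution is computed in Theorem~\ref{thm:compound-Poisson-limit}. As $Q_n=I-\tilde E_n$ is an orthogonal projection of corank one, Cauchy's interlacing theorem applies: if $\mu_1\geq\cdots\geq\mu_{n^2-1}$ are the eigenvalues of the compression of $Z_n$ to $\mathrm{ran}(Q_n)$, then
\[
\lambda_1(Z_n)\geq\mu_1\geq\lambda_2(Z_n)\geq\mu_2\geq\cdots\geq\mu_{n^2-1}\geq\lambda_{n^2}(Z_n).
\]
In particular the $(n^2-1)$-element families $\{\lambda_i(Z_n)\}_{i=2}^{n^2}$ and $\{\mu_i\}_{i=1}^{n^2-1}$ interlace, a property preserved by the common monotone rescaling by $(cn)^2$; hence the Kolmogorov distance between $\frac{1}{n^2-1}\sum_{i=2}^{n^2}\delta_{(cn)^2\lambda_i}$ and $\frac{1}{n^2-1}\sum_{i=1}^{n^2-1}\delta_{(cn)^2\mu_i}$ is at most $1/(n^2-1)$. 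The latter measure differs by $O(1/n^2)$ in total variation from the full empirical eigenvalue distribution of the $n^2\times n^2$ matrix $(cn)^2Q_nZ_nQ_n$ (which has at least one forced zero eigenvalue, $Q_n$ having corank one, and uses the normalization $1/n^2$). Consequently all three measures share the same weak limit, and Theorem~\ref{thm:compound-Poisson-limit} identifies it as $\pi_{c^2,\nu^{\times 2}}$; convergence in moments is enough here because $\nu$, hence also $\nu^{\times 2}$ and $\pi_{c^2,\nu^{\times 2}}$, is compactly supported, so it satisfies Carleman's condition and is determined by its moments.

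Since Theorems~\ref{thm:moments-ZnC} and~\ref{thm:compound-Poisson-limit} already contain the analytic content, the rest is mainly bookkeeping --- the ordering of limits in the Markov step, the interlacing transfer, and the passage from moments to weak convergence. The one genuinely substantive residual point is that those theorems deliver convergence of \emph{expected} moments only, so to upgrade part~(2) to almost sure (or in probability) weak convergence of the random spectra one must add a variance estimate showing that the covariance of the traces $\trace[((cn)^2Q_nZ_nQ_n)^p]$ is $O(n^{-2})$ and then apply the Borel--Cantelli lemma; this is carried out exactly as in Step~2 of the proof of Theorem~\ref{thm:bell-phenomenon} (two initially decoupled copies of the diagram, the surviving geodesic removals factorizing over the copies), and I expect it to be the step that occupies most of the writing. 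Morally, the clean split into one eigenvalue of order $1/(cn)$ and a bulk of order $1/(cn)^2$ reflects the Hayden--Winter mechanism of Proposition~\ref{prop:hayden-winter-trick}: a single output eigenvalue concentrates on the Bell vector, which is precisely the direction removed by $Q_n$.
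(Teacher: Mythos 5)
Your proof follows the paper's own strategy in its essentials: Cauchy interlacing plus Theorem~\ref{thm:compound-Poisson-limit} for part~(2), and Lemma~\ref{prop:hayden-winter-trick2} plus a Markov bound built on Theorem~\ref{thm:moments-ZnC} for part~(1). The one place you depart is the upper bound in part~(1): the paper fixes $p=3$, sandwiches $\Ex[cn\lambda_1]$ using $cn\lambda_1 \geq t_2|m|^2-\epsilon_n$ from Proposition~\ref{prop:hayden-winter-trick}, and then applies Markov to the nonnegative variable $cn\lambda_1-t_2|m|^2+\epsilon_n$; you apply Markov directly to $\lambda_1^p\le\trace[(cnZ_n)^p]$ and let $p\to\infty$ after $n\to\infty$. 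Both use the same ingredients and are correct, but your route is cleaner and, as you note, survives the degenerate case $m=0$, where the paper's chain $1\le cn\lambda_1/(t_2|m|^2-\epsilon_n)\le\cdots$ presupposes $t_2|m|^2-\epsilon_n>0$ and would have to be patched.

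Your closing remark about the mode of convergence in part~(2) is a genuine observation, not mere bookkeeping. Theorem~\ref{thm:compound-Poisson-limit} is proved only at the level of \emph{expected} moments, and the paper's proof of Theorem~\ref{thm:k-sim-n} cites it without the concentration step that the paper does carry out in Step~2 of Theorem~\ref{thm:bell-phenomenon}. As written, the paper therefore establishes weak convergence of the mean empirical measure; the decoupled-copies variance estimate and Borel--Cantelli argument you outline is exactly what would be needed to upgrade this to almost sure (or in-probability) convergence of the random spectral measures, and the paper omits it.
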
 
\begin{corollary}
In the case where the weighting vector is uniform,  $W_n=\mathrm{I}/k$, the limiting measure is the Dirac mass $\nu = \delta_1$. The largest eigenvalue of $Z_n$ behaves as $|m|^2/(cn)$ and the asymptotic shape of the lower spectrum is given by a free Poisson distribution of parameter $c^2$, $\pi_{c^2}$. Then, almost surely, one obtains the following asymptotic behavior for the entropy of the output matrix $Z_n$ (see \cite[Proposition 6.12]{cn3} for a proof):
$$H(Z_n) = 
\begin{cases}
2\log n - \frac{1}{2c^2}+o(1) \qquad &\text{ if } c\geq 1,\\
2\log(cn) - \frac{c^2}{2}+o(1) \qquad &\text{ if } 0<c<1.
\end{cases}$$
\end{corollary}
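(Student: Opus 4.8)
The plan is to obtain the corollary as a specialization of Theorem~\ref{thm:k-sim-n}, followed by one explicit integral against the Marchenko--Pastur law. First I would identify the limiting objects. With $W_n = \mathrm{I}_k/k$ one has $kW_n = \mathrm{I}_k$, so $\tfrac1k\trace[(kW_n)^p] = 1$ for every $p$; hence the moment sequence of Assumption~3 in \eqref{eq:ass3} is $t_p\equiv 1$, whose variance $t_2-t_1^2$ vanishes, so $\nu = \delta_1$. Since $1^2=1$ this also gives $\nu^{\times2}=\delta_1$, and the compound free Poisson law $\pi_{c^2,\nu^{\times2}}$ of Theorem~\ref{thm:k-sim-n} collapses to the ordinary free Poisson (Marchenko--Pastur) law $\pi_{c^2}$. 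The first two assertions of the corollary are then precisely Theorem~\ref{thm:k-sim-n}: $cn\lambda_1\to t_2|m|^2 = |m|^2$ using $t_2=1$, and the rescaled bulk $\tfrac1{n^2-1}\sum_{i=2}^{n^2}\delta_{(cn)^2\lambda_i}$ converges to $\pi_{c^2}$. (When $c<1$, the rank of $Z_n$ is at most $k^2 = c^2n^2+o(n^2)$, so roughly $(1-c^2)n^2$ of the $\lambda_i$ vanish, matching the atom of mass $1-c^2$ at $0$ in $\pi_{c^2}$.) For the \emph{almost sure} statement one upgrades these to almost sure convergence by a Borel--Cantelli/variance estimate, exactly as in Step~2 of the proof of Theorem~\ref{thm:bell-phenomenon}, or quotes \cite{cn3}.

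Next I would rescale the spectrum. Setting $\mu_i = (cn)^2\lambda_i$ and using $\sum_i\lambda_i = 1$ (trace preservation of $\Phi\otimes\bar\Phi$), one has the elementary identity
\[
H(Z_n) = -\sum_{i=1}^{n^2}\lambda_i\log\lambda_i = 2\log(cn) - \frac{1}{(cn)^2}\sum_{i=1}^{n^2}\mu_i\log\mu_i .
\]
The top eigenvalue is negligible here: $\mu_1\sim cn|m|^2$, so $\tfrac{1}{(cn)^2}\mu_1\log\mu_1 = O(n^{-1}\log n) = o(1)$. For the remaining terms I would pass to the limit with the bulk convergence above:
\[
\frac{1}{(cn)^2}\sum_{i=2}^{n^2}\mu_i\log\mu_i = \frac{n^2-1}{(cn)^2}\left(\frac{1}{n^2-1}\sum_{i=2}^{n^2}\mu_i\log\mu_i\right)\longrightarrow \frac{1}{c^2}\int x\log x\,d\pi_{c^2}(x),
\]
giving, almost surely, $H(Z_n) = 2\log(cn) - \tfrac{1}{c^2}\int x\log x\,d\pi_{c^2}(x) + o(1)$.

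Finally I would evaluate the integral. With the Marchenko--Pastur density $d\pi_{c^2}(x) = \frac{\sqrt{4c^2-(x-1-c^2)^2}}{2\pi x}\,dx$ on $[(1-c)^2,(1+c)^2]$, plus the atom $(1-c^2)\delta_0$ when $c<1$ (which contributes nothing since $0\log 0 = 0$), the substitution $x = 1+c^2+2c\sin\theta$ turns the integral into $\tfrac{2c^2}{\pi}\int_{-\pi/2}^{\pi/2}\cos^2\theta\,\log(1+c^2+2c\sin\theta)\,d\theta$. Writing $1+c^2+2c\sin\theta = |1+c\,e^{\mathrm{i}(\pi/2-\theta)}|^2$ and expanding $\log|1+z|^2$ in a power series in a variable of modulus $\min(c,1/c)$ --- after pulling out a factor $\log c^2$ in the case $c\geq1$ --- one checks that only the second Fourier mode survives the integration against $\cos^2\theta$, which yields $\int x\log x\,d\pi_{c^2}(x) = 2c^2\log c+\tfrac12$ when $c\geq1$ and $\tfrac{c^4}{2}$ when $0<c<1$. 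Dividing by $c^2$ and substituting into the formula of the previous paragraph gives $H(Z_n) = 2\log n - \tfrac1{2c^2}+o(1)$ for $c\geq1$ and $H(Z_n) = 2\log(cn) - \tfrac{c^2}{2}+o(1)$ for $0<c<1$, as claimed; this last step is the content of \cite[Proposition~6.12]{cn3}.

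The main obstacle is the exchange of limit and integration in $\tfrac{1}{(cn)^2}\sum_{i=2}^{n^2}\mu_i\log\mu_i$: since $x\mapsto x\log x$ is unbounded, weak convergence of the bulk spectrum does not by itself justify replacing the empirical average by $\int x\log x\,d\pi_{c^2}$. One also needs a uniform (almost sure, eventually in $n$) upper bound on the largest bulk eigenvalue $\mu_2 = (cn)^2\lambda_2$ so that $x\log x$ may be truncated to a bounded continuous function on the effective range of the $\mu_i$. I would obtain such a bound by applying Cauchy's interlacing theorem to $Q_nZ_nQ_n$ --- as already announced in the text preceding Theorem~\ref{thm:compound-Poisson-limit} --- together with an operator-norm estimate of the type $\limsup_n\norm{(cn)^2Q_nZ_nQ_n}\leq(1+c)^2$, or by importing this estimate directly from \cite{cn3}; the rest is bookkeeping.
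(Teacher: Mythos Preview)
Your proposal is correct and follows precisely the route the paper has in mind: the corollary is stated in the paper without proof, as a direct specialization of Theorem~\ref{thm:k-sim-n} (giving $\nu=\delta_1$, $t_2=1$, and hence $\pi_{c^2,\nu^{\times2}}=\pi_{c^2}$), with the entropy formula deferred entirely to \cite[Proposition~6.12]{cn3}. You go further than the paper by actually sketching the entropy computation and by flagging the two genuine technical points --- the upgrade from convergence in probability/in moments to almost sure convergence, and the uniform bound on the top bulk eigenvalue needed to exchange limit and integration against $x\log x$ --- both of which are indeed handled in \cite{cn3} rather than here.
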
 

\begin{proof}[Proof of Theorem \ref{thm:k-sim-n}]
First, let $\tilde \lambda_1 \geq \ldots \geq \tilde \lambda_{n^2-1}$ be
the ordered eigenvalues of $Q_nZ_nQ_n$. By Cauchy's interlacing theorem (for example, see Corollary III.1.5 of \cite{bhatia}) one has
\be
\lambda_1 \geq \tilde \lambda_1 \geq \lambda_2 \geq \ldots \geq \lambda_{n^2-1} \geq\tilde\lambda_{n^2-1} \geq \lambda_{n^2},
\ee
which, together with the conclusion of Theorem \ref{thm:compound-Poisson-limit}, proves the second statement. 

Next, we prove the first part. 
Since $ c |\trace A_n|^2 \trace W^2 \rightarrow t_2 |m|^2 $ as $n\rightarrow \infty$,
we define 
\be
\epsilon_n =\left| c |\trace A_n|^2 \trace W^2 - t_2 |m|^2  \right|  
\ee
Then, 
Proposition \ref{prop:hayden-winter-trick} implies that 
\be
1\leq \frac{cn \lambda_1}{t_2|m|^2 - \epsilon_n} \leq \frac{(cn \lambda_1)^3 }{(t_2|m|^2 - \epsilon_n)^3} 
\leq \frac{\trace[ (cnZ_n)^3]}{(t_2|m|^2 - \epsilon_n)^3}.
\ee 
After taking expectations and using Theorem \ref{thm:moments-ZnC}, we obtain
\be
1 \leq  \frac{\Ex [cn \lambda_1]}{t_2|m|^2 - \epsilon_n} \leq \frac{t_2^3|m|^6 +O(n^{-1}) }{(t_2|m|^2 - \epsilon_n)^3},
\ee
Then, there exists a sequence of positive numbers $\{\epsilon_n^\prime\}$ such that 
$\epsilon_n^\prime \rightarrow 0$ as $n \rightarrow \infty$, and
\be
\Ex [cn \lambda_1] \leq t^2 |m|^2 + \epsilon_n^\prime
\ee
Markov's inequality implies that for any $\delta >0$, 
\be
\P(cn\lambda_1 - t_2 |m|^2 \geq \delta )  
&\leq&  \P(cn\lambda_1 - t_2 |m|^2 + \epsilon_n \geq \delta  )  \\
&\leq& \frac{\Ex [cn\lambda_1 - t_2 |m|^2 + \epsilon_n]}{\delta} = \frac{\epsilon_n^\prime + \epsilon_n}{\delta}
\ee
The other bound is obvious from Lemma \ref{prop:hayden-winter-trick2}. 
\end{proof}

\section{Conclusions and final remarks}\label{sec:conclusions}

In this final section, we would like to compare the results obtained in the current paper with similar results for non unit-preserving random quantum channels studied in \cite{cn1, cn3, cfn1}. 

Before going into details, let us first note that the weights appearing in the definition of random unitary channels \eqref{eq:def-Phi}  represent more parameters that can be chosen to one's convenience. This explains why the limiting objects in the current paper are more general than the ones in \cite{cn1, cn3, cfn1}. 

Let us first analyze the case where $k$ is fixed. We shall compare the results in Theorem \ref{thm:bell-phenomenon} to the ones in \cite[Theorem 3.1]{cfn1}. To do this, we must first make sure that the channels we compare have the same input and output spaces. We must thus enforce the condition $t=1/k$ in \cite[Theorem 3.1]{cfn1}. We find that the $k^2$ limiting eigenvalues are \emph{different} in the two cases:
\begin{align*}
\lambda^{RC}(Z_n) &\to \left( \frac{|m|^2}{k} + \frac{1}{k^2}-\frac{|m|^2}{k^3}, \underbrace{\frac{1}{k^2}-\frac{|m|^2}{k^3}, \ldots, \frac{1}{k^2}-\frac{|m|^2}{k^3}}_{k^2-1 \text{ times}} \right),\\
\lambda^{RUC}(Z_n) &\to \left( \frac{|m|^2}{k} + \frac{1-|m|^2}{k^2}, \underbrace{\frac{1-|m|^2}{k^2}, \ldots, \frac{1-|m|^2}{k^2}}_{k-1 \text{ times}} , \underbrace{\frac{1}{k^2}, \ldots, \frac{1}{k^2}}_{k^2-k \text{ times}} \right).
\end{align*}

Finally, let us compare output spectra in the regime $k/n \to c$. We shall make the same assumption $t=1/k$ for the results of \cite[Theorem 6.11]{cn3} and also we shall consider usual Bell states, which imposes $|m|^2=1$ for Theorem \ref{thm:k-sim-n}. In the case where the coefficients $w$ are ``flat'', i.e.\ $\nu=1$ in Theorem \ref{thm:k-sim-n}, the results are \emph{identical} for random channels and random unitary channels: the largest eigenvalue behaves like $1/(cn)$ and the lower spectrum has a limiting shape $\pi_{c^2}$ on the scale $1/n^2$. 

In order to understand whether different weights for the unitary operators in \eqref{eq:def-Phi} are more interesting for the purpose of finding counterexamples to additivity relations, one needs to understand how the minimal output entropy behaves for a single copy of such channels - this is the subject of future work \cite{cfn2}.

\section*{Acknowledgments}

The three authors would like to thank the organizers of the workshop ``Probabilistic Methods in Quantum Mechanics'' in Lyon, where the project was initiated, and the ANR project HAM-MARK. B.~C.~'s research was supported by an NSERC Discovery grant and an ERA at the University of Ottawa. I.~N.~ acknowledges financial support from a PEPS grant from the CNRS, the AO1-SdM project ECIU and the ANR project OSvsQPI 2011 BS01 008 01.

\end{document}